\documentclass[leqno]{siamart1116}
\usepackage[T1]{fontenc}
\usepackage[utf8]{inputenc}

\numberwithin{theorem}{section}

%
\usepackage{amsmath,amsfonts,amssymb,dsfont,bbm,bbold,color,cite}
\usepackage{graphicx,mdwlist,subfigure,tikz}
\usepackage{algorithmic}
\usepackage[ruled,linesnumbered]{algorithm2e}
\usepackage{multirow,ctable} 
\usepackage[titletoc,toc,title]{appendix}


\def\uno{\mathbb 1}
\def\vol{\mathrm{vol}}
\def\cut{\mathrm{cut}}

\def\t{\text{\itshape{\textsf{T}}}}

\def\bar{\overline}
\def\sign{\mathrm{sign}}

\def\RR{\mathbbm R} 

\def\sp{\mathrm{span}}
\newcommand{\M}{\mathcal M}

\def\bar{\overline}
\def\tilde{\widetilde}
\def\emptyset{\varnothing}
\def\<{\left<} \def\>{\right>}
\def\fun{\tau}

\newtheorem{remark}[theorem]{Remark}

\title{Community detection in networks via  nonlinear modularity eigenvectors \thanks{\funding{The work of the authors has been supported by the ERC grant NOLEPRO. The work of F.T. has been partially supported by the Marie Curie Individual Fellowship MAGNET.}}}

\author{Francesco Tudisco\thanks{Department of Mathematics and Statistics, University of Strathclyde, G11XH Glasgow, UK (\email{f.tudisco@strath.ac.uk})} \and Pedro Mercado\thanks{Department of Mathematics and Computer Science, Saarland University, 66123 Saarbr\"ucken, Germany (\email{pedro@cs.uni-saarland.de}, \email{hein@cs.uni-saarland.de})} \and Matthias Hein\footnotemark[3]}

\begin{document} 
\maketitle

\begin{abstract}
Revealing a community structure in a network or dataset is a central problem arising in many scientific areas. The modularity function $Q$ is an established measure quantifying the quality of a community, being identified as a set of nodes having high modularity.  In our terminology, a set of nodes with positive modularity is called a \textit{module} and a set that  maximizes $Q$ is thus called \textit{leading module}. Finding a leading module in a network is an important task,  however the dimension of real-world problems makes the maximization of $Q$ unfeasible. This poses the need of approximation techniques which are typically based on a linear relaxation of $Q$, induced by the spectrum of the modularity matrix $M$. In this work we propose a nonlinear relaxation which is instead based on the spectrum of a nonlinear modularity operator $\M$. We show that extremal eigenvalues of $\M$ provide an exact relaxation of the modularity measure $Q$, in the sense that the maximum eigenvalue of $\M$ is equal to the maximum value of $Q$, however at the price of being more challenging to be computed than those of $M$. Thus we extend the work made on nonlinear Laplacians, by proposing a computational scheme, named \textit{generalized RatioDCA}, to address such extremal eigenvalues. We show monotonic ascent and convergence of the method. We finally apply  the new method to several synthetic and real-world data sets, showing both effectiveness of the model and performance of the method.
\end{abstract}

\begin{keywords} 
Community detection, graph modularity, spectral partitioning, nonlinear eigenvalues, Cheeger inequality.
\end{keywords}

\begin{AMS}
		05C50, 
		05C70, 
		47H30, 
		68R10 
\end{AMS}

\pagestyle{myheadings}
\thispagestyle{plain}


\section{Introduction}
This paper is concerned with the problem of finding leading communities in a network. A community is roughly defined as a set of nodes being highly connected inside and poorly connected with the rest of the graph. Identifying important communities in a complex network is a highly relevant problem which has applications in many disciplines, such as computer science, physics, neuroscience, social science, biology, and many others, see e.g.\ \cite{newman2010networks, estrada2012structure, schaeffer2007graph, traud2011comparing}. 

In order to address this problem from the mathematical point of view one needs a quantitative definition of what a community is. To this end several merit functions have been introduced
in the recent literature \cite{fortunato2010community}. 
A very popular and successful idea is based on the concept of modularity introduced by Newman and Girvan in \cite{newman2004finding}. 

The modularity measure of a set of  nodes~$A$ in a graph $G=(V,E)$ quantifies the difference between the actual and expected weight of edges in $A$, if edges were placed at random according to a random \textit{null model}. A subgraph $G(A)$ is then identified as a community if the modularity measure of $A$ is ``large enough''. 

The modularity-based community detection problem thus boils down to a combinatorial optimization problem, that is reminiscent of another famous task known as graph partitioning. Graph partitioning  can be roughly described as 
the problem of finding a $k$-partition of the set of vertices of $G$, where $k$ is a \textit{given} number of disjoint sets to be identified.

Modularity-based community detection does not prescribe the number
of  subsets into which the network is divided, and it is generally assumed that the graph is intrinsically structured into groups that are 
delimited  to some extent. The main objective is to reveal the presence and the consistency of such groups. 

As modularity-based community detection is known to be NP-hard~\cite{brandes2008modularity}, different strategies have been proposed to compute an approximate solution. Linear relaxation approaches are based on the spectrum of specific matrices (as the modularity matrix or the Laplacian matrix) and have been been widely explored and applied to various research areas, see e.g. \cite{fasino2014algebraic,pmlr-v84-mercado18a,newman2006finding,shen2010spectral}. Computational heuristics have been developed for optimizing directly the discrete quality function (see e.g. \cite{porter2009communities,lancichinetti2009community}), including for example greedy algorithms \cite{clauset2004finding}, simulated annealing \cite{guimera2004modularity} and  extremal optimization \cite{duch2005community}. Among them, the locally greedy algorithm known as Louvain method \cite{blondel2008fast} is arguably the most popular one. 
In recent years, and mostly in the context of graph partitioning, nonlinear relaxation approaches have been proposed  (see for instance \cite{bresson2013multiclass,buhler2009spectral,hein2010inverse,tudisco2018core}). In the context of community detection, a nonlinear relaxation based  on the
Ginzburg-Landau functional  is considered for instance in \cite{hu2013method,boyd2017simplified}, where it is shown to be $\Gamma$-convergent
to the discrete modularity optimum.

In this paper, we propose two nonlinear relaxations of different modularity set functions, and prove them to be exact, in the sense that the maximum values of our proposed nonlinear relaxations are equal to the maximum of the corresponding modularity set functions.
More precisely, we introduce two nonlinear relaxations that are based on a nonlinear modularity operator $\M:\RR^n\to\RR^n$.
 We associate to $\M$ two different Rayleigh quotients, inducing two different notions of eigenvalues and eigenvectors of $\M$ and we prove two Cheeger-type results for $\M$ that  show that the maximal eigenvalues of $\M$ associated to such Rayleigh quotients coincide with the maxima of two different modularity measures of the graph.
Interestingly enough we observe that the modularity matrix completely overlooks the difference between these two modularity measures, which instead $\M$ allows to address individually. 

Although nonlinearity generally prevents us to compute the eigenvalues of $\M$, the optimization framework proposed in \cite{hein2011beyond} allows for an algorithm addressing the minimization of positive valued Rayleigh quotients. As the Rayleigh quotients we associate to $\M$ attain positive and negative values, here we extend that method to a wider class of ratios of functions, proving monotonic descent and convergence to a nonlinear eigenvector. 

The paper is organized as follows:  Section \ref{sec:modularity_measure} gives an overview of the concept of modularity measure, modularity matrix and the Newman's spectral method for community detection, as proposed in \cite{newman2004finding}. In Section \ref{sec:nonlinear_modularity} we define the nonlinear modularity operator $\M$ and the associated Rayleigh and dual Rayleigh quotients. We show that both ensure an exact relaxation of suitable modularity-based combinatorial optimization problems on the graph. In Section \ref{sec:nonlinear_spectral_method} we propose a nonlinear spectral method for community detection in networks through the eigenvectors of the nonlinear modularity and, finally, in Section \ref{sec:experiments} we show extensive results on synthetic and real-world networks highlighting the improvements that nonlinearity ensures over the standard linear relaxation approach. 

\subsection{Notation}
Throughout this paper we assume that an undirected graph $G =(V,E)$ is given, with the following properties: $V$ is the vertex set equipped with the positive measure $\mu:V\to \RR_+$; $E$ is the edge set equipped with positive weight function $w:E\to \RR_+$. The symbol $\RR_+$ denotes the set of positive numbers. The vertex set $V$ is everywhere identified with $\{1, \dots, n\}$. We denote by $\<\cdot,\cdot\>_\mu$ the weighted scalar product $\<x,y\>_\mu=\sum_i \mu_i x_iy_i$. Similarly, for $p\geq 1$ we let $\|x\|^p_{p,\mu}=\sum_{i} \mu_i |x_i|^p$ be the weighted $\ell^p$ norm on $V$.

Given two subsets $A, B \subseteq V$, the set of edges between nodes in $A$ and $B$ is denoted by $E(A, B)$. When $A$ and $B$ coincide we use the short notation $E(A)$. The overall weight of a set is the sum of the weights in the set, thus for $A, B\subseteq V$, we write
$$\mu(A)=\sum_{i\in A}\mu_i, \qquad w(E(A,B)) = \sum_{ij \in E(A,B)}w(ij)\, .$$
Special notations are  reserved to the case where $B$ is the whole vertex set. Precisely, $w(E(\{i\}, V))=d_i$ is the degree of the node $i$, and $w(E(A,V))=\vol(A)=\sum_{i \in A}d_i$ the volume of the set $A$.

For a subset $A\subseteq V$ we write $\bar A$ to denote the complement $V\setminus A$ and we let $\uno_A\in\RR^n$ be the characteristic vector $(\uno_A)_i=1$ if $i \in A$ and $(\uno_A)_i=0$ otherwise.

\section{Modularity measure}\label{sec:modularity_measure}

A central problem in graph mining is to look for quantitative definitions of community. 
Although there is no universally accepted definition and a variety of merit functions have been proposed in recent literature, the global definition based on the modularity quality function proposed by Newman and Girvan \cite{newman2004finding} is an effective and very  popular one  \cite{fortunato2010community}. Such measure is based on the assumption that $A\subseteq V$ is a community of nodes if the induced subgraph $G(A)=(A,E(A))$ contains more edges than expected, if edges were placed at random according to a random graph model $\mathcal G_0$ (also called null-model).

Let $G_0=(V_0,E_0)$  be the expected graph of the random ensemble $\mathcal G_0$, with weight measure $w_0:E_0\to\RR_+$. The definition of modularity $Q(A)$ of $A\subseteq V$,  is as follows 
\begin{equation}\label{eq:Q(A)}
 Q(A) = w(E(A))- w_0(E_0(A))\, , 
\end{equation}
so that $Q(A)>0$ if the actual weight of edges in $G(A)$ exceeds the expected one in $G_0(A)$. A set of nodes $A$ is a community if it has positive modularity, and the associated subgraph $G(A)$ is called a module.  
A number of different null-models and variants of the modularity measure have been considered in recent literature, see e.g. \cite{fasino2016generalized, reichardt2006statistical, multires2, multires3}. 

An alternative formulation relates with a normalized version of the modularity, where the measure $\mu(A)$ of the set $A$ is used as a balancing function, for different choices of the measure $\mu$. 
We define the normalized modularity $Q_\mu(A)$ of $A\subseteq V$ as follows 
\begin{equation}
 Q_\mu(A) = Q(A)/\mu(A)\, .
\end{equation}
As we discuss in Section \ref{sec:experiments}, the use of such normalized version can help to identify small group of nodes as important communities in the graph, whereas it is known that the standard (unnormalized) measure tends to overlook small groups \cite{fortunato2007resolution}.

The definition of modularity of a subset is naturally extended to the measure of the modularity of a partition of $G$, by simply looking at the sum of the modularities: given a partition $\{A_1, \dots, A_k\}$ of $V$, its modularity and normalized modularity are defined respectively by 
$$q(A_1, \dots, A_k)=\frac 1 {\mu(V)}\sum_{i=1}^k Q(A_i), \qquad \text{and} \qquad q_\mu(A_1, \dots, A_k)=\sum_{i=1}^k Q_\mu(A_i)\, .$$
Clearly the normalization factor $1/\mu(V)$ does not affect the community structure and is considered here for compatibility with previous works. When the partition consists of only two sets $\{A,\bar A\}$ we use the shorter notation $q(A)$ and $q_\mu(A)$ for $q(A,\bar A)$ and $q_\mu(A,\bar A)$, respectively. 

The definition and effectiveness of the modularity measure \eqref{eq:Q(A)} highly depends on the chosen random model $\mathcal G_0$. A very popular and successful one, considered originally by Newman and Girvan in \cite{newman2004finding}, is based on the Chung-Lu random graph (see f.i. \cite{chung2006complex,arcolano2012moments,prvzulj2006modelling}) and its weighted variant \cite{fasino2014algebraic}.
For the sake of completeness, we recall hereafter the definition of  weighted Chung-Lu model.
\begin{definition}   \label{def:wCL}
Let $\delta = (\delta_1,\ldots,\delta_n)^\t > 0$, and
let $X(p)$ be a nonnegative random variable parametrized
by the scalar parameter $p\in[0,1]$, whose expectation is 
$\mathbbm{E}(X(p)) = p$. 
We say that a graph $G = (V,E)$ with weight function $w$ follows the 
{\em $X$-weighted Chung-Lu random graph model} $\mathcal{G}(\delta,X)$ 
if, for all $i,j\in V$, 
$w(ij)$ are independent random variables
distributed as $X(p_{ij})$ where
$p_{ij} = \delta_i\delta_j/\sum_{i=1}^n\delta_i$.
\end{definition}

The unweighted model coincides with the special case
of $\mathcal{G}(\delta,X)$ where $X(p)$ is the Bernoulli trial 
with success probability $p$. On the other hand, if $X(p)$
has a continuous part, then $\mathcal{G}(\delta,X)$
may contain graphs with generic weighted edges.
In any case, as in the original Chung-Lu model, 
if $G$ is a random graph drawn from $\mathcal{G}(\delta,X)$ then 
the expected degree of node $i$ is
$\mathbbm{E}(d_i) = \delta_i$. 

Given the degree sequence $d=(d_1,\dots, d_n)$ of the actual network $G=(V,E)$,  we assume from now on that the null-model $\mathcal G_0$ follows the weighted Chung-Lu random graph model $\mathcal G(\delta,X)$ above, with $\delta = d$. Note that, under this assumption, the modularity measure \eqref{eq:Q(A)} becomes $Q(A) = w(E(A))- {\vol(A)^2}/{\vol(V)}$ 
and we have, in particular, $Q(A)=Q(\bar A)$, for any $A\subseteq V$. 

The main contributions we propose in this work deal with the leading module problem, that is the problem of finding a subset $A\subseteq V$ having maximal modularity. Due to the identity $Q(A)=Q(\bar A)$, such problem coincides with finding the bi-partition $\{A, \bar A\}$ of the vertex set, having maximal modularity. Note that, for the special case of partitions consisting of two sets, the corresponding modularity and normalized modularity set functions are
\begin{equation}\label{eq:q(A)}
q(A)=\frac 2 {\mu(V)} Q(A), \qquad \text{and} \qquad q_\mu(A)= \mu(V) \frac{Q(A)}{\mu(A)\mu(\bar A)}\, . 
\end{equation}

\subsection{The modularity matrix and the spectral method}
Looking for a leading module is a major task in community detection which coincides with the discovery of an optimal bi-partition of $G$ into communities, in terms of modularity. 
This problem is equivalent to maximizing the modularity and normalized modularity through
the set functions $q$ and $q_\mu$, respectively,
over the possible subsets of $V$, 
namely computing the quantities
\begin{equation}\label{eq:cut_modularities}
 q(G) = \max_{A \subseteq V}q(A), \qquad q_\mu(G)=\max_{A \subseteq V}q_\mu(A)\, .
\end{equation}

As both $q(G)$ and $q_\mu(G)$ are NP-hard optimization problems \cite{brandes2008modularity}, a globally optimal solution for large graphs is out of reach.  One of the best known techniques for an approximate solution to these problems -- typically referred to as ``spectral method'' -- relates with the modularity matrix, and its leading eigenpair. Let $d$ be the vector of the degrees of the graph, the normalized modularity matrix of $G$, with vertex measure $\mu$, is defined as  follows
\begin{equation*}\label{eq:linearM}
(M)_{ij} = \frac 1 {\mu_i}\left( w(ij) - \frac{d_i d_j}{\vol(V)}\right), \qquad \text{for }i,j =1,\dots, n\, .
\end{equation*}

Note that the term $d_i d_j/\vol(V)$ is the $(i,j)$ entry of the one-rank  adjacency matrix of the expected graph of a random ensemble following the weighted Chung-Lu random model. 

The spectral method roughly selects a bi-partition of the vertex set $V$ accordingly with the sign of the elements in an eigenvector $x$ of $M$, associated with its largest eigenvalue $\lambda_1(M)$. It is proved in \cite{fasino2016generalized} that if $\tilde d =(d_1/\sqrt{\mu_1}, \dots, d_n/\sqrt{\mu_n})$ is not an eigenvector of $M$, then $\lambda_1(M)$ is a simple eigenvalue and thus $x$ is uniquely defined.
If $\lambda_1(M)>0$, one computes $x$ such that $Mx=\lambda_1(M)x$, then the vertex set $V$ is partitioned into $A_+ = \{i \in V : x_i \geq t_*\}$  and $\bar {A_+}$, being $t_* =\arg\max_t q(\{i\in V : u_i\geq t\})$.  If $\lambda_1(M) = 0$, the graph is said algebraically indivisible, i.e.\ it resembles no community structure (see e.g.\ \cite{newman2006finding,fasino2014algebraic}).  The motivations behind this technique are based on a relaxation argument, that we discuss in what follows.

The Rayleigh quotient of $M$ is the real valued function 
$$r_M(x) = \frac{\<x, Mx\>_\mu}{\, \,\,  \|x\|_{2,\mu}^2}\, .$$
As the matrix $M$ is symmetric with respect the weighted scalar product $\<\cdot, \cdot\>_\mu$, its eigenvalues can be characterized as variational values of $r_M$. In particular, if  the eigenvalues of $M$ are enumerated in descending order, then $\lambda_1(M)$ is the global maximum of $r_M$, 
\begin{equation}\label{eq:lambda1(M)}
\lambda_1(M) = \max_{x \in \RR^n} \, r_M(x)\, .
\end{equation}
The quantity $q(G)$ can be rewritten in terms of $r_M$, thus in terms of $M$. Consider the binary vector $v_A = \uno_A - \uno_{\bar A}$. Using the identities $\uno_{\bar A} = \uno-\uno_A$,  $M\uno=0$ and $\|v_A\|_{2,\mu}^2 = \mu(V)$, we get $\<v_A, Mv_A\>_\mu=4Q(A)$, thus
\begin{equation}\label{eq:linear_relaxation_Q}
 q(G)=\max_{A \subseteq V}\frac{2\, Q(A)}{\mu(V)} = \frac{1}{2}\, \max_{A\subseteq V}\, r_M(v_A) = \frac{1}{2}\, \max_{ x \in \{-1, 1\}^n} r_M(x)\, .
\end{equation}
Computing the global optimum of $r_M$ over $\{-1, 1\}^n$ is NP-hard. However, this maximum can be approximated by dropping the binary constraint  on $x$ and, thus, transforming 
the problem into the eigenvalue problem \eqref{eq:lambda1(M)}, which can be easily solved. This observation is  one of the main motivations of the spectral method based on the modularity matrix $M$ and its largest eigenvalue $\lambda_1(M)$, whereas the main drawback of this approach is that, in general,   the eigenvalue $\lambda_1(M)$ can arbitrary differ from the actual modularity $q(G)$. 

From Equation~\eqref{eq:linear_relaxation_Q} we can see   that $r_M$ coincides with $q$ when evaluated on binary vectors $x\in \{-1,1\}^n$. For this reason and the fact that $\max_{x\in \RR^n}r_M(x)$ coincides with an eigenvalue of the linear operator $M$ we say that $r_M$ is a \textit{linear relaxation} of $q$.

Before concluding this section we would like to point out another drawback of the linear relaxation approach which, to our opinion, is often overlooked: as we will show in Section \ref{sec:experiments}, the solutions of $q(G)$ and $q_\mu(G)$ are in general far from being the same, however the linear relaxation approach in principle ignores such difference. In fact, the linear relaxation $r_M$ of the modularity set function $q$ is also a linear relaxation of the normalized modularity set function $q_\mu$. We show such observation via the following
\begin{proposition}\label{pro:linear_relaxation_Qmu}
 If the largest eigenvalue $\lambda_1(M)$ of $M$ is positive, then $r_M$ is a linear relaxation of both $q$ and $q_\mu$.
\end{proposition}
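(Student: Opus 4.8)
The plan is to use that $r_M$ is already a linear relaxation of $q$ via \eqref{eq:linear_relaxation_Q}, so that the only new content is to exhibit $r_M$ as a linear relaxation of $q_\mu$ as well. Matching the two requirements in the definition of linear relaxation, I must (i) produce, for every $A\subseteq V$, a two-valued vector $x_A$ on which $r_M$ reproduces $q_\mu(A)$, and (ii) check that, after dropping the discrete structure, the maximization of $r_M$ returns the eigenvalue $\lambda_1(M)$, i.e.\ the very same eigenvalue problem that relaxes $q$.

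For step (i) I would take the normalized analogue of $v_A=\uno_A-\uno_{\bar A}$, namely
\begin{equation*}
 x_A = \frac{\uno_A}{\mu(A)} - \frac{\uno_{\bar A}}{\mu(\bar A)}\, .
\end{equation*}
Writing $x_A = -\mu(\bar A)^{-1}\uno + \big(\mu(A)^{-1}+\mu(\bar A)^{-1}\big)\uno_A$ and using $M\uno=0$ together with the identity $\<\uno_A,M\uno_A\>_\mu=Q(A)$ (which follows from $\<v_A,Mv_A\>_\mu=4Q(A)$ and $v_A=2\uno_A-\uno$), a direct computation gives $\<x_A,Mx_A\>_\mu=\big(\mu(A)^{-1}+\mu(\bar A)^{-1}\big)^2 Q(A)$ and $\|x_A\|_{2,\mu}^2=\mu(A)^{-1}+\mu(\bar A)^{-1}$. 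Since $\mu(A)^{-1}+\mu(\bar A)^{-1}=\mu(V)/(\mu(A)\mu(\bar A))$, these combine to
\begin{equation*}
 r_M(x_A) = \mu(V)\,\frac{Q(A)}{\mu(A)\mu(\bar A)} = q_\mu(A)\, ,
\end{equation*}
so that $q_\mu(G)=\max_{A\subseteq V} r_M(x_A)$. This is the one genuinely computational step, and the only care it requires is the bookkeeping of the two constants.

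For step (ii) I would observe that $\<x_A,\uno\>_\mu = 1-1 = 0$, i.e.\ every $x_A$ lies in the hyperplane $W=\{x\in\RR^n:\<x,\uno\>_\mu=0\}$, whence $q_\mu(G)\le \max_{x\in W} r_M(x)$. The crux — and the place where the hypothesis enters — is to show $\max_{x\in W} r_M(x)=\lambda_1(M)$ precisely when $\lambda_1(M)>0$. Since $M$ is symmetric with respect to $\<\cdot,\cdot\>_\mu$ and $\uno$ is an eigenvector with eigenvalue $0$, any top eigenvector $u_1$ (with $Mu_1=\lambda_1(M)u_1$) satisfies $\lambda_1(M)\<u_1,\uno\>_\mu=\<Mu_1,\uno\>_\mu=\<u_1,M\uno\>_\mu=0$; as $\lambda_1(M)>0$ forces $\lambda_1(M)\neq 0$, this gives $\<u_1,\uno\>_\mu=0$, so $u_1\in W$ and $r_M(u_1)=\lambda_1(M)$. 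Together with the trivial bound $\max_{x\in W}r_M\le \max_{x\in\RR^n}r_M=\lambda_1(M)$ from \eqref{eq:lambda1(M)}, this yields equality. Hence $r_M$ coincides with $q_\mu$ on the discrete vectors $x_A$ and relaxes, over $W$, to the eigenvalue $\lambda_1(M)$, the same relaxed problem as for $q$.

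I expect the main obstacle to be conceptual rather than computational, namely isolating the role of the positivity hypothesis. Without $\lambda_1(M)>0$ the constrained maximum over $W$ could fall strictly below $\lambda_1(M)$ (if $0$ were the top eigenvalue with eigenvector $\uno$, then $r_M$ would be negative throughout $W$), and the relaxation of $q_\mu$ would no longer coincide with the top-eigenvalue problem underlying the spectral method. Making this dependence explicit is exactly what turns the algebraic identity $r_M(x_A)=q_\mu(A)$ into the stated proposition.
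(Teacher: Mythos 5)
Your proof is correct and follows essentially the same route as the paper: your test vector $x_A = \uno_A/\mu(A) - \uno_{\bar A}/\mu(\bar A)$ is a positive scalar multiple of the paper's $w_A = \uno_A - \frac{\mu(A)}{\mu(V)}\uno$ (indeed $w_A = \frac{\mu(A)\mu(\bar A)}{\mu(V)}\,x_A$), so by scale invariance of $r_M$ the two computations coincide, and both arguments then drop the binary constraint and use $\lambda_1(M)>0$ together with $\uno\in\ker(M)$ to identify the maximum of $r_M$ over $\{x : \<x,\uno\>_\mu=0\}$ with $\lambda_1(M)$. Your explicit check that a top eigenvector is $\mu$-orthogonal to $\uno$ merely spells out the step the paper leaves implicit.
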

\begin{proof}
 We already observed that $r_M$ coincides with $q$ on the set of binary vectors $x\in \{-1,1\}^n$. 
 A similar simple argument is used for $q_\mu$. Consider the vector $w_A = \uno_A - \frac{\mu(A)}{\mu(V)}\uno$. Since $\mu(\bar A )= \mu(V)-\mu(A)$ we get $\|w_A\|_{2,\mu}^2=\frac{\mu(A)\mu(\bar A)}{\mu (V)}$ and $r_M(w_A)=q_\mu(A,\bar A)$. Note that $\<w_A, \uno\>_\mu =0$, thus 
\begin{equation}\label{eq:linear_relaxation_Qmu}
 q_\mu(G)=\max_{A \subseteq V}\frac{\mu(V)\, Q(A)}{\mu(A)\, \mu(\bar A)} = \, \max_{A\subseteq V}\, r_M(w_A) =  \max_{ x \in \{-a, b\}^n, \, \<x,\uno\>_\mu=0} r_M(x)\, .
\end{equation}
 Therefore, $r_M$ and $q_\mu$ coincide on the set of binary vectors $x\in \{-a,b\}^n$ such that  $\<x,\uno\>_\mu=0$ (for suitable $a,b>0$).
 As $M$ has a positive eigenvalue by assumption, dropping the binary constraint $x \in \{-a,b\}^n$ and recalling that $\uno \in \ker(M)$, we get $ \max_{ x \in \RR^n, \, \<x,\uno\>_\mu=0}r_M(x)=\max_{x\in\RR^n}r_M(x)=\lambda_1(M) $. 
\end{proof}

\section{Tight nonlinear modularity relaxation}\label{sec:nonlinear_modularity}
In this section we introduce a nonlinear modularity operator $\M$, through a natural generalization of the modularity matrix $M$. To this operator we associate a Rayleigh quotient and a dual Rayleigh quotient to which naturally correspond a notion of nonlinear eigenvalues and eigenvectors. 
We use the new Rayleigh quotients to derive nonlinear relaxations of the modularity $q$ and the normalized modularity $q_\mu$ set functions, respectively. Moreover, unlike the standard linear relaxation,  we show that such relaxations are tight, that is we prove a Cheeger-type result showing that certain eigenvalues of $\M$ coincide with the graph modularities \eqref{eq:cut_modularities}. 

\subsection{Nonlinear modularity operator}
The nonlinear modularity operator we are going to define is related with the Clarke's subdifferential $\partial$  (see \cite{clarke1990optimization} e.g.). We recall that, for $f:\RR^n\to\RR$ Lipschitz around $x\in \RR^n$, the subdifferential of $f$ at $x$ is defined as the following subset of $\RR^n$
$$
\partial f(x) = \left\{y\in\RR^n : \<y,v\>\leq \limsup_{z\to x,t\to 0}\frac{f(z+tv)-f(z)}{t}, \quad \text{for all }v\in\RR^n \right\}\, .
$$

The subdifferential of the one norm  $f(x) = \|x\|_1$ and the infinity norm $f(x) = \|x\|_\infty$ are of particular importance of us. For these particular functions explicit expressions for $\partial f(x)$ are available. We recall them below in \eqref{eq:subdiff_norm1} and \eqref{eq:subdiff_norminfty}, respectively. 


As the absolute value is not differentiable at zero, the subdifferential of the $1$-norm is the set valued map $\Phi$ defined by 
\begin{equation}\label{eq:subdiff_norm1}
x\mapsto \Phi(x) =\left\{y\in \RR^n : 
\begin{array}{l}
y_i = \sign(x_i) \text{ if } x_i\neq 0,\\ 
y_i\in [-1,1] \text{ if } x_i=0 
\end{array} \right\} \, ,
\end{equation}
%
where $\sign(x_i)=1$ if $x_i>0$ and $\sign(x_i)=-1$ if $x_i<0$.
Note that if $y \in \Phi(x)$ then any component of $y$ belongs to the image of the corresponding component of $x$. Precisely, $y\in \Phi(x)$ if and  only if $y_i\in \Phi(x_i)$ for all $i=1,\dots,n$.

In order to define the nonlinear modularity operator, let us first observe that, due to the identity  $\sum_{j=1}^n M_{ij}=0$, for $i=1, \dots, n$, the following formula holds for the modularity matrix $M$:
\begin{align*}
	(Mx)_i = \sum_{j=1}^n M_{ij}x_j - x_i \sum_{j=1}^n M_{ij}=\sum_{j=1}^n (-M)_{ij}(x_i-x_j)\, .
\end{align*}
This implies the following identity
\begin{align}\label{eq:quadratic_form_M}
	\<x, M x\>_\mu &= \sum_{i,j=1}^n \mu_i (-M)_{ij}x_i(x_i-x_j) 
	= \frac 12 \sum_{i,j=1}^n \mu_i(-M)_{ij}|x_i-x_j|^2\, ,
\end{align}
for any $x\in \RR^n$. Thus we define the nonlinear modularity operator as follows:
\begin{equation}\label{eq:nonlinearM}
 \M(x)_i =  \sum_{j=1}^n (-M)_{ij}\Phi(x_i-x_j),\quad i = 1,\dots, n\, .
\end{equation}
Note that, by definition, for any $y\in \M(x)$ we have
\begin{equation}\label{eq:TV}
\<x, y\>_\mu = \frac 12 \sum_{i,j=1}^n \mu_i(-M)_{ij}|x_i-x_j|\, .
\end{equation}

Since the right-hand side of \eqref{eq:TV} does not depend on the choice of the vector $y\in \M(x)$, we write $\<x,\M(x)\>_\mu$ to denote the quantity in \eqref{eq:TV}, in analogy with \eqref{eq:quadratic_form_M}. 
Note that $\<x,\M(x)\>_\mu$ and $\<x,M(x)\>_\mu$ coincide on binary vectors, for instance when $x\in \{-1,1\}^n$. 
Also note that
 $\<x,\M(x)\>_\mu$ is strictly related with the total variation of the vector $x$. More precisely, $\<x,\M(x)\>_\mu$ is the difference of two weighted total variations of $x$, as we will discuss with more detail in Section \ref{sec:exact_relaxation}. For completeness, we recall that the weighted total variation of $x\in \RR^n$ is the scalar function 
$$
|x|^\rho_{TV} = \sum_{i,j=1}^n\rho(i,j)|x_i-x_j|\, ,
$$
where $\rho(i,j)\geq 0$ are the nonnegative weights. 

We now consider two Rayleigh quotients associated with $\M(x)$, defined as follows
\begin{equation}\label{eq:Rayleigh_quotients}
 r_\M(x) = \frac{\<x,\M(x)\>_\mu}{\, \, \, \|x\|_{1, \mu}}, \qquad r_\M^*(x)=\frac{\<x,\M(x)\>_\mu}{\, \,\,  \|x\|_{\infty}}\, ,
\end{equation}
where $\|x\|_{1,\mu}=\sum_i\mu_i|x_i|$ and $\|x\|_\infty = \max_i |x_i|$. The functions $r_\M$ and $r_\M^*$ generalize the Rayleigh quotient $r_M$ of the linear modularity, and we will show in the next section that the global maxima of $r_\M^*$ and $r_\M$ provide an exact nonlinear relaxation of the modularity $q$ and normalized modularity $q_\mu$ set functions, defined in~\eqref{eq:q(A)}, respectively.
Here we show that the optimality conditions for $r_\M$ and $r_\M^*$ are related to a notion of eigenvalues and eigenvectors for the nonlinear modularity operator $\M$. We also briefly discuss the underlying mathematical reason why $r_M$ naturally generalizes into $r_\M$ and $r_\M^*$.

\subsection{Nonlinear modularity eigenvectors}
As for the 1-norm, we consider the subdifferential $\Psi$ of the infinity norm  $x\mapsto \|x\|_{\infty}$. For a vector $x\in \RR^n$, let $m_1, \dots, m_k$ be the indices such that $|x_{m_i}|=\|x\|_\infty$, then 
the subdifferential of the infinity norm is the set valued map $\Psi$ defined by
\begin{equation}\label{eq:subdiff_norminfty}
x\mapsto \Psi(x)= 
\mathrm{Conv}\{\sigma_1\uno_{m_1}, \dots, \sigma_k\uno_{m_k}\}\, ,
\end{equation}
where, for $i=1,\dots,k$, 
$\sigma_i=\sign(x_{m_i})$
and $\mathrm{Conv}$ denotes the convex hull. 

To the subdifferentials $\Phi$ and $\Psi$ correspond a notion of eigenvalue and eigenvector of $\M$
\begin{definition} \label{def:eig}
We say that $\lambda$ is a nonlinear eigenvalue of $\M$ with eigenvector $x$ if either $0\in \M(x)-\lambda \Phi(x)$ or $0\in \M(x)-\lambda\Psi(x)$. 
\end{definition}

We have
\begin{proposition}
 Let $x$ be a critical point of $r_\M$, then $x$ is a nonlinear eigenvector of $\M$ such that $0 \in \M(x)-\lambda \Phi(x)$ with $\lambda = r_\M(x)$. Similarly, if $x$ is a critical point of $r_\M^*$, then $x$ is a nonlinear eigenvector of $\M$ such that  $0 \in \M(x)-\lambda \Psi(x)$ with $\lambda = r_\M^*(x)$.
\end{proposition}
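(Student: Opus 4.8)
The plan is to read both $r_\M$ and $r_\M^*$ as quotients $R=f/g$ with common numerator $f(x)=\<x,\M(x)\>_\mu$ and positive, Lipschitz denominator $g$, and to extract the eigenvalue relations from the Clarke subdifferential calculus. A critical point of $R$ satisfies the necessary condition $0\in\partial R(x)$, and the Clarke quotient rule supplies the inclusion
\[
\partial R(x)\subseteq \frac{g(x)\,\partial f(x)-f(x)\,\partial g(x)}{g(x)^2}\, .
\]
Hence $0\in\partial R(x)$ forces the existence of $\xi\in\partial f(x)$ and $\eta\in\partial g(x)$ with $g(x)\,\xi=f(x)\,\eta$, that is $\xi=\lambda\,\eta$ with $\lambda=f(x)/g(x)=R(x)$. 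The statement then reduces to identifying $\partial f$ with $\M(x)$ and $\partial g$ with $\Phi(x)$ or $\Psi(x)$, according to the choice of $g$.

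First I would compute the subdifferential of the numerator. By \eqref{eq:TV} we have $f(x)=\tfrac12\sum_{i,j}\mu_i(-M)_{ij}|x_i-x_j|$, a finite sum of terms $c_{ij}\,|x_i-x_j|$ with $c_{ij}=\tfrac12\mu_i(-M)_{ij}$; note $c_{ij}=c_{ji}$ since $\mu_iM_{ij}=w(ij)-d_id_j/\vol(V)$ is symmetric. Each term is the absolute value precomposed with a linear form, so its Clarke subdifferential is the set of vectors $c_{ij}\,s\,(\uno_{\{i\}}-\uno_{\{j\}})$ with $s\in\Phi(x_i-x_j)$, the scalar rule $\partial(cf)=c\,\partial f$ being valid for every real $c$. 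Summing these, using the oddness $\Phi(x_i-x_j)=-\Phi(x_j-x_i)$ together with the self-adjointness $\mu_i(-M)_{ij}=\mu_j(-M)_{ji}$ to merge the two coordinates each term touches, the $k$-th coordinate collapses to $\mu_k\sum_{j}(-M)_{kj}\Phi(x_k-x_j)=\mu_k\,\M(x)_k$, by \eqref{eq:nonlinearM}. Carrying the computation in the $\<\cdot,\cdot\>_\mu$ inner product (so that subgradients are represented through $\<\cdot,\cdot\>_\mu$) cancels the factor $\mu_k$ and yields $\partial f(x)\subseteq\M(x)$.

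For the denominators I would invoke the two explicit subdifferentials already recorded: $\partial\|x\|_{1,\mu}=\Phi(x)$ from \eqref{eq:subdiff_norm1} and $\partial\|x\|_\infty=\Psi(x)$ from \eqref{eq:subdiff_norminfty}. Substituting $g=\|\cdot\|_{1,\mu}$ into the quotient inclusion gives $\xi=\lambda\eta$ with $\xi\in\M(x)$, $\eta\in\Phi(x)$ and $\lambda=r_\M(x)$, i.e. $0\in\M(x)-\lambda\,\Phi(x)$; substituting instead $g=\|\cdot\|_\infty$ gives $0\in\M(x)-\lambda\,\Psi(x)$ with $\lambda=r_\M^*(x)$. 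Both are exactly the eigenvalue relations of Definition \ref{def:eig}.

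The main obstacle, to my mind, is that the numerator is \emph{not} convex: since $(-M)_{ij}$ takes both signs, $f$ is a difference of two weighted total variations, hence a DC function, and at points where some $x_i=x_j$ the concave pieces fail to be Clarke-regular. Consequently the sum rule for the Clarke subdifferential holds only as an inclusion $\partial f(x)\subseteq\sum_{i,j}\partial\bigl(c_{ij}|x_i-x_j|\bigr)=\M(x)$, not as an equality, and the same caveat applies to the quotient rule. I would therefore stress that both inclusions point in the direction needed for a \emph{necessary} condition: from $0\in\partial R(x)$ and $\partial R(x)\subseteq S$ one concludes $0\in S$, so the possibly strict inclusions cost nothing here. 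A secondary point requiring care is the bookkeeping of the weight $\mu$, namely keeping the inner product defining the subdifferentials consistent with the factor $\mu^{-1}$ already built into $M$ and $\M$, so that $\mu_k$ cancels cleanly and $\M(x)$, $\Phi(x)$, $\Psi(x)$ appear without spurious weights.
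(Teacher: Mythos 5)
Your proposal follows essentially the same route as the paper's proof: apply the Clarke quotient rule as an inclusion to $r_\M(x)=\<x,\M(x)\>_\mu/\|x\|_{1,\mu}$ (resp.\ $r_\M^*$), identify the subdifferential of the numerator with $\M(x)$ (the paper writes $D_\mu\M(x)$ in Euclidean coordinates) and that of the denominator with $\Phi(x)$ (resp.\ $\Psi(x)$), and conclude from $0\in\partial r_\M(x)$. Your explicit term-by-term verification that $\partial\<x,\M(x)\>_\mu$ is contained in $D_\mu\M(x)$, and your remark that the non-convexity of the numerator makes the calculus rules mere inclusions (harmless, since only a necessary condition is needed), simply spell out steps the paper leaves implicit, so this is the same argument carried out with slightly more care.
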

\begin{proof}
 Let $\partial$ denote the subdifferential. A direct inspection reveals that  $\partial \|x\|_{1,\mu} = D_\mu \Phi(x)$, where $D_\mu$ is the diagonal matrix $(D_\mu)_i = \mu_i$ and $\Phi(x)$ is the vector with components $\Phi(x)_i = \Phi(x_i)$. 
 Using the chain rule for $\partial$ (see e.g. \cite{clarke1990optimization}) we get
 \begin{align*}
  \partial\,  r_\M(x) &\subseteq \frac{1}{\|x\|_{1,\mu}^2}\big\{ \|x\|_{1,\mu} \, \, \partial \<x, \M(x)\>_\mu - \<x, \M(x)\>_\mu \partial \|x\|_{1,\mu} \big\}\\
  & = \frac{1}{\|x\|_{1,\mu}} \big\{D_\mu \M(x) - r_\M(x)\, D_\mu \Phi(x)\big\}
 \end{align*}
Therefore $0 \in \partial \, r_\M(x)$ implies $0 \in \M(x)-r_\M(x)\,  \Phi(x)$. As $\partial \|x\|_\infty = \Psi(x)$, a similar computation shows the proof for $r_\M^*$. 
\end{proof}

Thus critical points and critical values of $r_\M$ and $r_\M^*$ satisfy generalized eigenvalue equations for the nonlinear modularity operator $\M$. 
Despite the linear case, where the eigenvalues of the modularity matrix $M$ coincide with the variational values of $r_M$, the number of eigenvalues of $\M$  defined by means of the Rayleigh quotients in \eqref{eq:Rayleigh_quotients} is much larger than just the set of variational ones. 
However in many situations the variational spectrum plays a central role, as for instance in the case of the nonlinear Laplacian \cite{tudisco2016nodal, chang2016spectrum, drabek2002generalization}. This work provides a further example: in what follows we consider  the dominant eigenvalues of $\M$, coinciding with suitable variational values of $r_\M$ and $r_\M^*$, we prove two optimality Cheeger-type results and we discuss how to use these eigenvalues to locate a leading module in the network  by means of a nonlinear spectral method. The task of multiple community detection can also be addressed by successive bi-partitions, as we discuss in Section \ref{sec:real-networks}. Advantages of the nonlinear spectral method over the linear one are highlighted  Section \ref{sec:experiments} where extensive numerical results are shown.

\subsection{On the relation between $r_\M^*$, $r_\M$ and $r_M$} We briefly discuss the mathematical reason why $r_M$ generalizes into $r_\M$ and $r_\M^*$. This gives further reasoning to the definition in \eqref{eq:Rayleigh_quotients}. To this end we suppose for simplicity that $\mu_i = 1$. Therefore $(-M)_{ij}=d_id_j/\vol(V)-w(ij)$ for all $i, j =1, \dots,n$. Given the graph $G=(V,E)$, consider the linear difference operator  $B:\RR^n \to \RR^{|E|}$ entrywise defined by $(Bx)_{ij}=x_i-x_j$, $ij\in E$, and let $w_M:E\to \RR$ be the real valued function $w_M(ij)=(-M)_{ij}/2$.  Then we can write 
$$\<x,Mx\>_\mu =  \<Bx,Bx\>_{w_M} =  \|Bx\|_{2,w_M}^2 = \sum_{ij\in E} w_M(ij) (Bx)_{ij}^2\, ,$$ 
where we use the compact notation $\|\cdot\|_{2,w_M}$, even though that quantity is not a norm on $\RR^{|E|}$, as $w_M$ attains positive and negative values. We have, as a consequence, $r_M(x) = (\|Bx\|_{2,w_M}/\|x\|_2)^2$. 
A natural generalization of such quantity is therefore given by 
$$
r_p(x)=\left(\frac{\|Bx\|_{p,w_M}}{\|x\|_p} \right)^p\, ,
$$
where, for $p\geq 1$ and $z \in \RR^{|E|}$, we are using the notation $\|z \|_{p,w_M}^p=\sum_{ij} w_M(ij)|z_{ij}|^p$. Clearly $r_M$ is retrieved from $r_p$ for $p=2$. Now, let $p^*$ be the H\"older conjugate of $p$, that is  the solution of the equation $1/p + 1/p^* = 1$. As $2^* = 2$, the quantity $r_M(x)$ is in fact a special case of $r_p^*(x) = \left(\|Bx\|_{p,w_M} / \|x\|_{p^*} \right)^p$ as well. The Rayleigh quotients in \eqref{eq:Rayleigh_quotients} are obtained by plugging $p=1$ into $r_p$ and $r_p^*$, respectively. Even though in this work we shall focus only on the case $p=1$, we believe that further investigations on $r_p$ and $r_p^*$ for different values of $p$ would be of significant interest.
Figure \ref{fig:relations} outlines this observation and the relation between the set valued functions $q$ and $q_\mu$ and the Rayleigh quotients $r_p$ and $r_p^*$, for the specific values  $p=1,2$. The next section gives further detail in this sense.

\subsection{Exact relaxation via nonlinear Rayleigh quotients}\label{sec:exact_relaxation}

From \eqref{eq:linear_relaxation_Q} and Proposition \ref{pro:linear_relaxation_Qmu} we deduce that the leading eigenvalue $\lambda_1(M)$ of the modularity matrix $M$ is an upper bound for both the quantities $q(G)$ and $q_\mu(G)$. This intuitively motivates the use of such eigenvalue and the corresponding eigenvectors to approximate the modularity of the graph. However $\lambda_1(M)$  is an approximation that can be arbitrarily far from the true value of the modularity. In particular, when $\mu=d$ is the degree vector, a Cheeger-type inequality showing a lower bound for $q_\mu(G)$ in terms of $\lambda_1(M)$ has been shown in \cite{fasino2016modularity}, whereas a lower bound for $q(G)$ is known only for regular graphs \cite{fasino2014algebraic}, the general case being still an open problem. 

In what follows we show that moving from the linear to the nonlinear modularity operator, allows to shrink the distance between the combinatorial quantities $q(G)$ and $q_\mu(G)$ defined in \eqref{eq:cut_modularities} and the spectrum of $\M$. More precisely, we show that the new Rayleigh quotients $r_\M^*$ and $r_\M$, as for $r_M$, coincide with the modularity and normalized modularity functions $q$ and $q_\mu$, respectively, on suitable set of binary vectors. However, unlike the  linear case,  we prove that the quantities 
\begin{equation}\label{eq:lambdas}
 \lambda_1(r_\M^*)=\max_{x\in \RR^n}r_\M^*(x), \qquad \lambda_1^\bot(r_\M)=\max_{x\in \RR^n, \, \<x,\uno\>_\mu=0}r_\M(x)
\end{equation}
coincide exactly with the modularity $q(G)$ and normalized modularity $q_\mu(G)$, respectively.
For these reasons we say that the functions $r_\M^*$ and $r_\M$ are \textit{exact nonlinear relaxations} of the modularity $q$ and normalized modularity $q_\mu$ set functions, respectively. The diagram in Figure \ref{fig:relations} summarizes these relaxation relations.

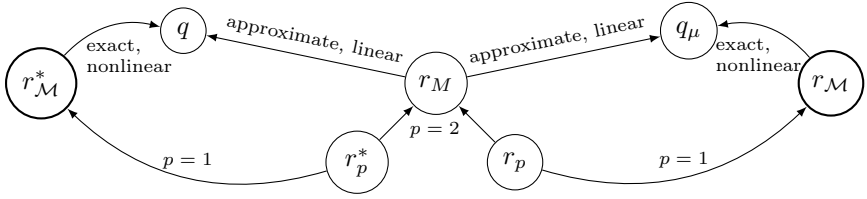
\begin{figure}
\centering
\begin{tikzpicture}[scale=.7]
\node[circle,draw=black,thick] (a) at (-.5,0)  {$r_\M^*$};
\node[circle,draw=black] (b) at (2.2,1)  {$q$};
\node[circle,draw=black] (c) at (7,0)  {$r_M$};
\node[circle,draw=black] (d) at (11.8,1)  {$q_\mu$};
\node[circle,draw=black,thick] (e) at (14.5,0) {$r_\M$};
\node[circle,draw=black] (f) at (5.5,-1.5) {$r_p^*$};
\node[circle,draw=black] (g) at (8.5,-1.5) {$r_p$};

\path[-latex] %
(a) edge[bend left] node[below]{\scriptsize$\qquad\begin{array}{l}\text{exact},\\ \text{nonlinear}\end{array}$} (b) %
(c) edge node[right, pos=.7,sloped,anchor=south,auto=false]{$\,$ \hspace{3em}\scriptsize approximate, linear} (b) %
(c) edge node[right, pos=.25,sloped,anchor=south,auto=false]{$\,$ \hspace{3em}\scriptsize approximate, linear} (d) %
(e) edge[bend right] node[below, pos=.6]{\scriptsize$\begin{array}{l}\text{exact},\\ \text{nonlinear}\end{array}$} (d) %
(f) edge node[right, pos=.3]{\scriptsize $\,\,\, p=2$} (c) %
(g) edge (c) %
(f) edge[bend left] node[above]{\scriptsize $p=1$} (a) %
(g) edge[bend right] node[above]{\scriptsize $p=1$} (e);
\end{tikzpicture}
\caption{This diagram summarizes how the relaxation functions $r_p$ and $r_p^*$ (for $p=1,2$) are related with $q$ and $q_\mu$ and which are their main properties.}\label{fig:relations}
\end{figure}

To address the case of $q(G)$  we make use of the  Lov\'asz extension of the modularity set function. The Lov\'asz extension, also referred to as Choquet integral, allows the extension of set valued functions to the entire space $\RR^n$ and is particularly well-suited to deal with optimization of sub-modular functions. We refer to \cite{Bach2011} for a careful introduction to the topic. Below we recall one possible definition of the Lov\'asz	 extension
\begin{definition}
	Given the set of vertices $V$, let $\mathcal P(V)$ be the power set of $V$, and consider a  function $F:\mathcal P(V) \to \RR$. For a given vector $x \in \RR^n$ let $\sigma$ be any permutation such that $x_{\sigma(1)}\leq x_{\sigma(2)}\leq \dots \leq x_{\sigma(n)}$ and let $C_i(x) \subseteq V$ be the set 
	$$C_i(x)=\{k \in V : x_{\sigma(k)}\geq x_{\sigma(i)}\}$$ 
	The Lov\'asz extension $f_F:\RR^n \to \RR$ of $F$ is defined by
	$$f_F(x) = \sum_{i=1}^{n-1} F(C_{i+1}(x))(x_{\sigma(i+1)}-x_{\sigma(i)})+F(V)x_{\sigma(1)}$$
\end{definition}
%
We collect in the next proposition  some useful properties of the Lov\'asz extension, which will be helpful in the following. We refer to \cite{Bach2011} for their proofs.
\begin{proposition}[Some properties of the Lov\'asz extension]\label{prop:Lovasz}
	Consider two set valued functions $F,H:\mathcal P(V) \to \RR$ such that $F(\emptyset)=H(\emptyset)= 0$. Then
	\begin{enumerate}
		\item $f_F+f_H$ is the Lov\'asz extension of $F+H$, i.e. $f_F+f_H=f_{F+H}$.
		\item For all $A\subseteq V$ it holds $F(A)=f_F(\uno_A)$.
		\item $f_F$ is positively one-homogeneous, i.e. $f_F(\alpha x) = \alpha f_F(x)$ for all $\alpha\geq 0$.
		\item Given a graph $G=(V,E)$ let $w:E\to\RR_+$ denote its edge weight function and let $\cut_G$ denote the set valued function $\cut_G(A)=w(E(A,\bar A))$. The Lov\'asz extension of $\cut_G$ is the weighted total variation
		$$
			f_{\cut_G}(x) = \frac 1 2 \sum_{i,j=1}^n w(ij) |x_i-x_j| = \frac 1 2\, |x|_{TV}^w\, .
		$$
		\item Given $x\in \RR^n$ and $t>0$ consider the level set $A_x^t = \{i\in V : x_i>t\}$. Then 
		$$
			f_F(x) = \int_{-\infty}^0 \{F(A_x^t) - F(V)\}\, dt + \int_0^{+\infty}F(A_x^t)\, dt \, .
		$$
	\end{enumerate}  
\end{proposition}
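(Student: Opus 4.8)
The plan is to establish all five properties directly from the definition of $f_F$, exploiting the fact that the sorting permutation $\sigma$ and the nested superlevel sets $C_i(x)$ depend only on $x$ and not on $F$. The first property is then immediate: since $(F+H)(C_{i+1}(x)) = F(C_{i+1}(x)) + H(C_{i+1}(x))$ and $(F+H)(V) = F(V)+H(V)$, the defining sum splits additively and gives $f_{F+H}=f_F+f_H$. The third property is equally direct: for $\alpha>0$ the map $x\mapsto\alpha x$ preserves the ordering of the components, hence leaves $\sigma$ and each $C_i(x)$ unchanged, while every increment $x_{\sigma(i+1)}-x_{\sigma(i)}$ and the term $x_{\sigma(1)}$ get multiplied by $\alpha$; the case $\alpha=0$ is trivial since $f_F(0)=F(V)\cdot 0=0$.

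For the second property I would substitute $x=\uno_A$. The sorted values then consist of a block of $0$'s followed by a block of $1$'s, so every increment $x_{\sigma(i+1)}-x_{\sigma(i)}$ vanishes except at the single jump from $0$ to $1$, where the corresponding superlevel set $C_{i+1}(\uno_A)$ is exactly $A$. When $\emptyset\neq A\neq V$ this yields $f_F(\uno_A)=F(A)\cdot 1+F(V)\cdot 0=F(A)$; the boundary cases $A=V$ (empty sum, $x_{\sigma(1)}=1$) and $A=\emptyset$ (here $x=0$ and one uses $F(\emptyset)=0$) are checked separately and also return $F(A)$.

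The layer-cake formula (the fifth property) is the technical core, and I would prove it by integrating the piecewise-constant map $t\mapsto F(A_x^t)$ against the breakpoints $x_{\sigma(1)}\le\dots\le x_{\sigma(n)}$. Concretely, $A_x^t=V$ for $t<x_{\sigma(1)}$, one has $A_x^t=C_{i+1}(x)$ for $x_{\sigma(i)}\le t<x_{\sigma(i+1)}$, and $A_x^t=\emptyset$ for $t\ge x_{\sigma(n)}$. Inserting these into the two integrals on the right-hand side and evaluating interval by interval reproduces $\sum_{i=1}^{n-1}F(C_{i+1}(x))(x_{\sigma(i+1)}-x_{\sigma(i)})+F(V)x_{\sigma(1)}$. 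I expect this to be the main obstacle: the subtraction of $F(V)$ on $(-\infty,0)$ together with the split of the domain at $0$ is precisely what makes the bookkeeping consistent regardless of where $0$ falls among the sorted values, and matching the boundary contribution $F(V)x_{\sigma(1)}$ (as well as handling possible ties among the $x_{\sigma(i)}$) is where the argument must be done with care.

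Finally, the fourth property follows as a corollary of the layer-cake formula applied to $F=\cut_G$. Since no edge leaves $V$ we have $\cut_G(V)=0$, so the two integrals collapse to $\int_{-\infty}^{+\infty}\cut_G(A_x^t)\,dt$. For a fixed edge $ij$, exactly one of its endpoints lies in $A_x^t$ precisely when $t$ lies between $x_i$ and $x_j$, so this edge contributes $w(ij)$ to $\cut_G(A_x^t)$ on a set of $t$ of length $|x_i-x_j|$. Summing the contributions $w(ij)|x_i-x_j|$ over edges, and accounting for the factor $\frac12$ from the double counting in $\sum_{i,j=1}^n$, gives $f_{\cut_G}(x)=\frac12\sum_{i,j=1}^n w(ij)|x_i-x_j|$, the weighted total variation. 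Alternatively, the fourth property can be obtained directly by telescoping: an edge joining the sorted positions $a<b$ is cut by $C_{i+1}(x)$ exactly for $a\le i\le b-1$, so its contribution to $\sum_i\cut_G(C_{i+1}(x))(x_{\sigma(i+1)}-x_{\sigma(i)})$ telescopes to $w\,(x_{\sigma(b)}-x_{\sigma(a)})$, and summing over edges reproduces the same total variation.
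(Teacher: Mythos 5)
Your proposal is correct, but note that the paper does not actually prove Proposition~\ref{prop:Lovasz}: it states the five properties and defers entirely to the monograph \cite{Bach2011} for their proofs. So your argument is not a variant of the paper's proof but a self-contained replacement for a citation, and as such it holds up. Properties 1 and 3 are indeed term-by-term observations, since $\sigma$ and the sets $C_i(x)$ depend only on $x$; your evaluation of $f_F(\uno_A)$ correctly isolates the single nonzero increment, whose superlevel set is $A$, with the cases $A=V$ and $A=\emptyset$ (the latter needing $F(\emptyset)=0$) handled separately. For property 5, integrating the piecewise-constant map $t\mapsto F(A_x^t)$, whose value on $[x_{\sigma(i)},x_{\sigma(i+1)})$ is $F(C_{i+1}(x))$, is the right mechanism, and your point about the $-F(V)$ correction on $(-\infty,0)$ is exactly what makes the computation come out to $\sum_i F(C_{i+1}(x))(x_{\sigma(i+1)}-x_{\sigma(i)})+F(V)x_{\sigma(1)}$ in all three cases (zero below, inside, or above the range of $x$); ties are harmless since they produce empty intervals, and convergence of both improper integrals uses $F(\emptyset)=0$ together with the cancellation $F(A_x^t)-F(V)=0$ for $t<x_{\sigma(1)}$. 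Deriving property 4 from property 5 via $\cut_G(V)=0$ and the per-edge measure $|x_i-x_j|$ is clean (the factor $\frac12$ from ordered pairs is accounted for correctly), and your telescoping alternative is equally valid. One small point worth fixing in a write-up: the paper's definition of $C_i(x)$ literally describes a set of sorted positions $k$ rather than vertices; your proof tacitly (and correctly) uses the intended reading $C_i(x)=\{j\in V: x_j\geq x_{\sigma(i)}\}$, without which property 2 would be false as stated.
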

The formula at point $5$ is actually one of the many equivalent definitions of the Lov\'asz extension and is sometimes referred to as  \textit{the co-area theorem}.
\begin{remark}\label{rem:fQ}
 From the proposition above we deduce that $\<x,\M(x)\>_\mu$ is the Lov\'asz extension of the modularity function $Q$ and it corresponds to the difference of two weighted total variations of $x\in \RR^n$. 

 In fact, given a graph $G=(V,E)$ with weight function $w$, consider the complete graph $K_0=(V,V\times V)$  with weight function $w_0(ij)=d_id_j/\vol_G(V)$, where  $d_i = w(E(\{i\},V))$ and $\vol_G(V)=w(E)$ are the  degree of node $i$ and the volume of $G$, respectively. Then, for any $A\subseteq V$ we have 
 $$
 w(E(A,V)) = \sum_{i\in A}d_i = \sum_{i\in A}d_i\sum_{j\in V} d_j/\vol_G(V)=w_0(E(A,V))\, 
 $$
 Therefore, from \eqref{eq:Q(A)} and the identity $w(E(A,\bar A))=w(E(A,V))-w(E(A))$, we can decompose the modularity of a set $A$ into $Q(A) = \cut_{K_0}(A)-\cut_G(A)$,  
where $\cut_G$ is the set valued function defined at point 4 of Proposition \ref{prop:Lovasz}. Combining points 1 and 4 of Proposition~\ref{prop:Lovasz}
we obtain
\begin{align}\label{eq:difference-of-cuts}
	\begin{aligned}
		f_Q(x)&=f_{\{\cut_{K_0}-\cut_G\}}(x)=f_{\cut_{K_0}}(x)-f_{\cut_G}(x)\\
		& =\frac 1 2 \big\{|x|_{TV}^{w_0}-|x|_{TV}^w\big\} 
	=\<x, \M(x)\>_\mu \, .
	\end{aligned}
\end{align}
\end{remark}

The following technical lemma will be useful in the proof  of Theorem \ref{thm:exact-1-modularity-*} below, being one of our two main theorems of the section.
\begin{lemma}\label{lem:rM*}
 Let $F,H:\mathcal P(V) \to \RR$ be set valued functions  such that $0<H(A) \leq 1$ for all $A \subseteq V$ s.t.\ $A\notin\{\emptyset,V\}$. If $F(V)=0$, then 
	$$\max_{A \subseteq V}\frac{F(A)}{H(A)}\geq \frac 1 2 \max_{ \|x\|_\infty\leq 1} f_F(x)\, .$$
\end{lemma}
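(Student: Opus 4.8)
The plan is to reduce the statement to a pointwise bound on the level sets of $x$ through the co-area representation of the Lov\'asz extension (point 5 of Proposition \ref{prop:Lovasz}). Set $\lambda=\max_{A\subseteq V}F(A)/H(A)$. I would first record that $\lambda\geq 0$: the Lov\'asz-extension framework of Proposition \ref{prop:Lovasz} assumes $F(\emptyset)=0$, and $F(V)=0$ is the hypothesis, so both $\emptyset$ and $V$ contribute a zero ratio to the maximum. This nonnegativity is exactly what turns the normalization $H(A)\leq 1$ into a bound on $F$: for every proper nonempty $A$ one has $F(A)=\tfrac{F(A)}{H(A)}\,H(A)\leq \lambda H(A)\leq \lambda$, using $0<H(A)\leq 1$ together with $\lambda\geq 0$. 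Since $F(\emptyset)=F(V)=0\leq\lambda$ as well, this yields the uniform estimate $F(A)\leq\lambda$ valid for \emph{every} $A\subseteq V$.

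Next I would fix an arbitrary $x$ with $\|x\|_\infty\leq 1$ and apply the co-area formula. Because $F(V)=0$, point 5 of Proposition \ref{prop:Lovasz} collapses to $f_F(x)=\int_{-\infty}^{0}F(A_x^t)\,dt+\int_0^{+\infty}F(A_x^t)\,dt=\int_{-\infty}^{+\infty}F(A_x^t)\,dt$, where $A_x^t=\{i\in V:x_i>t\}$. The constraint $\|x\|_\infty\leq 1$ then pins down the range of nontrivial thresholds: for $t<-1$ every coordinate satisfies $x_i>t$, so $A_x^t=V$ and $F(A_x^t)=0$; for $t\geq 1$ no coordinate satisfies $x_i>t$, so $A_x^t=\emptyset$ and $F(A_x^t)=0$ (here $F(\emptyset)=0$ is used). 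The integrand therefore vanishes outside $[-1,1]$, leaving $f_F(x)=\int_{-1}^{1}F(A_x^t)\,dt$.

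It remains to combine the two ingredients. For each $t$ the level set $A_x^t$ is some subset of $V$, so the uniform bound of the first step gives $F(A_x^t)\leq\lambda$; integrating over the threshold interval $(-1,1)$, which has length $2$, produces $f_F(x)\leq 2\lambda$. Since $x$ with $\|x\|_\infty\leq 1$ was arbitrary, taking the supremum gives $\max_{\|x\|_\infty\leq 1}f_F(x)\leq 2\lambda=2\max_{A\subseteq V}F(A)/H(A)$, which is the assertion after dividing by $2$.

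I expect the only genuinely delicate point to be the bookkeeping that fixes the constant. The factor $\tfrac12$ is not incidental: it arises precisely because the admissible band of thresholds $[-1,1]$ has length $2$, which is in turn forced by the normalization $\|x\|_\infty\leq 1$. The one hypothesis that must be handled with care is the sign of $\lambda$, since the implication $H(A)\leq 1\Rightarrow F(A)\leq\lambda$ would reverse were $\lambda$ negative; this is exactly why it is essential to note at the outset that the vanishing of $F$ on $\emptyset$ and $V$ forces $\lambda\geq 0$. Once these observations are in place, the remainder is a direct reading of the co-area formula.
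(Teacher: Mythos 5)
Your proof is correct, and it reaches the bound through a different representation of the Lov\'asz extension than the paper does. The paper works directly with the defining ordered sum: after sorting $x_1\leq\dots\leq x_n$ it writes $f_F(x)=\sum_{i=1}^{n-1}F(C_{i+1}(x))(x_{i+1}-x_i)$ (the $F(V)x_1$ term vanishes), inserts $H(C_{i+1}(x))$ into each term exactly as you do, and bounds the telescoping sum of increments by $x_n-x_1\leq 2\|x\|_\infty$; you instead pass to the co-area formula (point 5 of Proposition \ref{prop:Lovasz}) and bound the integrand $F(A_x^t)$ uniformly by $\lambda=\max_{A}F(A)/H(A)$ over the threshold band $[-1,1]$. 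The two mechanisms are parallel --- in both, the normalization $0<H\leq 1$ converts the ratio bound into a bound on $F$, and the factor $2$ is the length of the admissible threshold range --- but your version has two features worth noting. First, you make explicit a sign point that the paper leaves implicit: the steps $F(A)\leq\lambda H(A)\leq\lambda$ (and, in the paper, replacing $\sum_i H(C_{i+1}(x))(x_{i+1}-x_i)$ by $x_n-x_1$ and then by $2\|x\|_\infty$) are only legitimate because $\lambda\geq 0$, which holds since $A=V$ contributes the ratio $F(V)/H(V)=0$ to the maximum; the paper secures this silently by admitting the index $i=1$, i.e.\ $C_1(x)=V$, into its max. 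Isolating the uniform bound $F(A)\leq\lambda$ as a separate step makes your argument cleaner on exactly the point where the paper's chain of inequalities is most delicate. Second, your route requires $F(\emptyset)=0$ for the co-area integral to be finite and for the bound at $A_x^t=\emptyset$; this is not among the lemma's stated hypotheses, though you correctly flag it as a standing assumption of Proposition \ref{prop:Lovasz}, and it is satisfied by $F=Q$ in the paper's only application. The paper's sum-based proof never evaluates $F$ at $\emptyset$, so it covers the lemma exactly as stated without that extra assumption --- a small generality advantage, bought at the price of the implicit sign bookkeeping that your write-up handles head-on.
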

\begin{proof}
Suppose w.l.o.g.\ that the entries of $x\in \RR^n$ are labeled in ascending order,  that is $x_1\leq \dots\leq x_n$. We have 
 \begin{align*}
		f_F(x)&=\sum_{i=1}^{n-1}F(C_{i+1}(x))(x_{i+1}-x_i)\leq \sum_{i=1}^{n-1}\frac{F(C_{i+1}(x))}{H(C_{i+1}(x))}H(C_{i+1}(x))(x_{i+1}-x_i)
	\end{align*}
	As $0<H(C_{i+1}(x))\leq 1$ and $(x_{i+1}-x_i)\geq 0$ we get
	\begin{align*}
		f_F(x)&\leq \max_{i=2,\dots, n}\frac{F(C_{i}(x))}{H(C_{i}(x))} (x_n-x_1)\leq  \left(\max_{i=1,\dots, n}\frac{F(C_{i}(x))}{H(C_{i}(x))}\right)2\|x\|_\infty
	\end{align*}
	We get as a consequence
	$$\max_{\|x\|_\infty \leq 1} f_F(x)\leq 2\, \max_{\|x\|_\infty \leq 1}\max_{i=1,\dots, n}\frac{F(C_{i}(x))}{H(C_{i}(x))}=2\, \max_{A\subseteq V}\frac{F(A)}{H(A)}$$
	and this proves the claim.
\end{proof}

The above lemma allows us to show that $r_\M^*$ is an exact nonlinear relaxation of the modularity function $q$

\begin{theorem}\label{thm:exact-1-modularity-*}Let $r_\M^*$ be the Rayleigh quotient defined in \eqref{eq:Rayleigh_quotients} and let  $\lambda_1(r_\M^*)=\max_{x\in \RR^n}r_\M^*(x)$. Then $r_\M^*(\uno_A-\uno_{\bar A})= q(A)\mu(V)$, for any $A\subseteq V$ and
	$$ q(G) = \max_{A\subseteq V}q(A) =
	{\lambda_1(r_\M^*)}/{\mu(V)}\, .$$
\end{theorem}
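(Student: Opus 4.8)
The plan is to prove the two assertions separately, relying throughout on the identification $\<x,\M(x)\>_\mu = f_Q(x)$ established in Remark \ref{rem:fQ}, where $f_Q$ is the Lov\'asz extension of the modularity set function $Q$. I would first dispatch the pointwise identity and then squeeze $\lambda_1(r_\M^*)$ between $\mu(V)q(G)$ from above and below.

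For the identity $r_\M^*(\uno_A-\uno_{\bar A})= q(A)\mu(V)$, write $v_A=\uno_A-\uno_{\bar A}=2\uno_A-\uno$ and note $\|v_A\|_\infty=1$, so that $r_\M^*(v_A)=f_Q(v_A)$. The key observation is that $f_Q$ is invariant under adding multiples of $\uno$: by points $1$ and $4$ of Proposition \ref{prop:Lovasz} we have $f_Q=f_{\cut_{K_0}}-f_{\cut_G}$, and each term is a weighted total variation, hence unchanged when a constant is added to every coordinate. Therefore $f_Q(v_A)=f_Q(2\uno_A)$, and positive one-homogeneity together with $f_Q(\uno_A)=Q(A)$ (point $2$ of Proposition \ref{prop:Lovasz}) give $f_Q(v_A)=2Q(A)$. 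Since $q(A)\mu(V)=2Q(A)$ by \eqref{eq:q(A)}, the identity follows.

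The lower bound is then immediate: since $v_A\neq 0$ for every $A\subseteq V$, we get $\lambda_1(r_\M^*)\geq r_\M^*(v_A)=2Q(A)$, and maximizing over $A$ yields $\lambda_1(r_\M^*)\geq 2\max_{A}Q(A)=\mu(V)q(G)$. For the upper bound I would first pass to the unit ball. By one-homogeneity of $f_Q$, every $x\neq 0$ satisfies $r_\M^*(x)=f_Q(x/\|x\|_\infty)$ with $\|x/\|x\|_\infty\|_\infty=1$, so $\lambda_1(r_\M^*)=\max_{\|x\|_\infty=1}f_Q(x)\leq \max_{\|x\|_\infty\leq 1}f_Q(x)$. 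Now I apply Lemma \ref{lem:rM*} with $F=Q$ and $H\equiv 1$: the hypotheses hold because $Q(V)=\cut_{K_0}(V)-\cut_G(V)=0$ and $0<H\leq 1$, and with this choice $\max_{A}F(A)/H(A)=\max_{A}Q(A)$. The lemma then gives $\max_{A}Q(A)\geq \tfrac12\max_{\|x\|_\infty\leq 1}f_Q(x)\geq \tfrac12\lambda_1(r_\M^*)$, that is $\lambda_1(r_\M^*)\leq 2\max_A Q(A)=\mu(V)q(G)$. Combining the two bounds gives $\lambda_1(r_\M^*)=\mu(V)q(G)$, which is the claim.

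I expect the main obstacle to be the upper bound, specifically the reduction to the unit-ball maximization and the correct invocation of Lemma \ref{lem:rM*}: one must verify that $Q$ meets the lemma's normalization $F(V)=0$ and that replacing the Rayleigh quotient by $\max_{\|x\|_\infty\leq 1}f_Q$ is legitimate, which is exactly where homogeneity is used. By contrast, the binary identity and the lower bound are routine once the translation invariance of $f_Q$ coming from its representation as a difference of total variations is noted.
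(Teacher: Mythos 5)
Your proof is correct and follows essentially the same route as the paper's: the binary identity $r_\M^*(v_A)=2Q(A)$ gives the lower bound, and the upper bound comes from Lemma \ref{lem:rM*} with $F=Q$, $H\equiv 1$, together with the identity $f_Q(x)=\<x,\M(x)\>_\mu$ and positive one-homogeneity. The only (immaterial) difference is that you derive the binary identity via translation invariance and homogeneity of the Lov\'asz extension, whereas the paper evaluates $\<v_A,\M(v_A)\>_\mu$ directly from its defining formula.
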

\begin{proof}
For a subset $A \subseteq V$, consider the vector $v_A=\uno_A-\uno_{\bar A}$. Then 
 $$\<v_A,\M(v_A)\>_\mu = \frac 1 2 \sum_{i,j=1}^n \mu_i (-M)_{ij}|(v_A)_i-(v_A)_j|=2 \, Q(A)$$ 
and $\|v_A\|_\infty=1$. Therefore $r_\M^*(v_A)=2 \, Q(A)= q(A)\mu(V)$ and
\begin{equation}\label{eq:x}
 \mu(V)q(G) =  \max_{A\subseteq V}r_\M^*(v_A)\leq \max_{x\in \RR^n}r_\M^*(x)\, .
\end{equation}

To show the reverse inequality we  use Lemma \ref{lem:rM*} and Remark \ref{rem:fQ}. 
By \eqref{eq:difference-of-cuts} we have $f_Q(x)=\<x,\M(x)\>_\mu$. 
Now let $H:\mathcal P(V)\to \RR$ be the constant function $H(A) =1$. As $Q(V)=0$, we can use such $H$ into Lemma \ref{lem:rM*}, with $F=Q$,  to get
$$\max_{A\subseteq V}Q(A) \geq \frac 1 2 \max_{\|x\|_{\infty}\leq 1}\<x,\M(x)\>_\mu = \frac 12 \max_{x \in \RR^n}r_\M^*(x)\, .$$
where the second identity holds since $f_Q$ is positively one-homogeneous (Proposition \ref{prop:Lovasz}, point 3). Combining the latter inequality with \eqref{eq:x}  we conclude. 
\end{proof}

We now prove an analogous result involving $q_\mu(G)$ and $r_\M$,
%
To this end we formulate the following Lemma \ref{lem:f=sup}. The proof is a straightforward modification of the proof of Lemma 3.1 in \cite{hein2011beyond}, and is omitted for brevity.

\begin{lemma}\label{lem:f=sup}
	A function $f:\RR^n\to \RR$ is positively one-homogeneous, even, convex and $f(x+y)=f(x)$ for any $y\in \sp{(\uno)}$ if and only if there exists $\mu:V\to \RR_+$ such that $f(x)=\sup_{y\in Y}\<x,y\>_\mu$ where $Y$ is a closed symmetric convex set such that $\<y,\uno\>_\mu=0$ for any $y\in Y$.
\end{lemma}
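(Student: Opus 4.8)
The statement is an equivalence, so I would prove the two implications separately, beginning with the easier one.

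\textbf{Sufficiency.} Suppose first that $f(x)=\sup_{y\in Y}\<x,y\>_\mu$ for a closed symmetric convex set $Y$ with $\<y,\uno\>_\mu=0$ on $Y$. Each of the four properties follows directly from this representation: positive one-homogeneity because $\<\alpha x,y\>_\mu=\alpha\<x,y\>_\mu$ for $\alpha\ge 0$; convexity because a supremum of affine functions is convex; evenness because $f(-x)=\sup_{y\in Y}\<x,-y\>_\mu=\sup_{y\in Y}\<x,y\>_\mu$, using $Y=-Y$; and invariance under $\sp(\uno)$ because $\<x+t\uno,y\>_\mu=\<x,y\>_\mu$ for every $y\in Y$, since $\<\uno,y\>_\mu=0$. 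I expect no obstacle here.

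\textbf{Necessity.} For the converse the plan is to take the candidate set to be the subdifferential of $f$ at the origin with respect to $\<\cdot,\cdot\>_\mu$, namely
\[
Y=\{y\in\RR^n:\ \<x,y\>_\mu\le f(x)\ \text{for all }x\in\RR^n\}.
\]
Closedness and convexity of $Y$ are immediate, being an intersection of $\mu$-weighted half-spaces. Symmetry $Y=-Y$ would follow from evenness: $y\in Y$ iff $\<x,y\>_\mu\le f(x)$ for all $x$, iff (replacing $x$ by $-x$ and using $f(-x)=f(x)$) $\<x,-y\>_\mu\le f(x)$ for all $x$, iff $-y\in Y$. The orthogonality $\<y,\uno\>_\mu=0$ would come from the $\sp(\uno)$-invariance: for $y\in Y$ the inequality $\<x,y\>_\mu+t\<\uno,y\>_\mu=\<x+t\uno,y\>_\mu\le f(x+t\uno)=f(x)$ must hold for every $t\in\RR$, which forces $\<\uno,y\>_\mu=0$ upon letting $t\to\pm\infty$.

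\textbf{The crux.} It then remains to verify $f(x)=\sup_{y\in Y}\<x,y\>_\mu$. The inequality $\ge$ is immediate from the definition of $Y$. For $\le$ I would fix $x_0$ and produce a $y\in Y$ attaining $\<x_0,y\>_\mu=f(x_0)$. Since $f$ is finite and convex on all of $\RR^n$ it is continuous, so its subdifferential $\partial f(x_0)$, taken with respect to $\<\cdot,\cdot\>_\mu$ (legitimate because $D_\mu=\diag(\mu)$ is positive definite), is nonempty; for any subgradient $y$ there, the subgradient inequality evaluated at $2x_0$ and at $0$ together with $f(0)=0$ gives $\<x_0,y\>_\mu=f(x_0)$, and substituting this back shows $f(x)\ge\<x,y\>_\mu$ for all $x$, i.e.\ $y\in\partial f(0)=Y$. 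This is the only step that genuinely uses convex duality — it is the sublinear-function case of the Fenchel–Moreau theorem, that a finite sublinear functional is the support function of its subdifferential at the origin — and I regard it as the main obstacle. The weighting by $\mu$ is harmless throughout, amounting to the invertible change of variables $x\mapsto D_\mu^{1/2}x$ relating $\<\cdot,\cdot\>_\mu$ to the Euclidean product.
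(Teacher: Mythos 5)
Your proof is correct, and it is in substance the paper's own argument: the paper omits the proof of this lemma, stating it is a straightforward modification of Lemma~3.1 in \cite{hein2011beyond}, and that proof proceeds exactly as you do, realizing $f$ as the support function of the set $Y=\partial f(0)=\{y:\<x,y\>_\mu\leq f(x)\ \forall x\}$, with symmetry of $Y$ forced by evenness and $\<y,\uno\>_\mu=0$ forced by invariance under $\sp(\uno)$. Your only deviation is that you derive the key support-function identity elementarily (nonemptiness of $\partial f(x_0)$ plus the subgradient inequality at $2x_0$ and $0$) rather than citing it, which makes your write-up self-contained but changes nothing essential.
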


The following theorem shows that $r_\M$ is an exact nonlinear relaxation of the normalized modularity function $q_\mu$.

\begin{theorem}\label{thm:exact-1-modularity}Let $r_\M$ be the Rayleigh quotient defined in \eqref{eq:Rayleigh_quotients} and let $\lambda_1^\bot(r_\M)=\max_{x\in \RR^n, \<x,\uno\>_\mu=0}r_\M(x)$. Then $r_\M(\uno_A-\mu(A)/\mu(V)\uno)=q_\mu(A)/2$ for any $A\subseteq V$ and 
$$ q_\mu(G) = \max_{A\subseteq V}q_\mu(A) =2\, 
{\lambda_1^\bot(r_\M)}\,.$$
\end{theorem}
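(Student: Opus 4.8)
The plan is to prove the two inequalities $q_\mu(G)\le 2\lambda_1^\bot(r_\M)$ and $q_\mu(G)\ge 2\lambda_1^\bot(r_\M)$ separately, mirroring the structure of the proof of Theorem \ref{thm:exact-1-modularity-*} but with $\|\cdot\|_\infty$ replaced by $\|\cdot\|_{1,\mu}$ and with the combinatorial Lemma \ref{lem:rM*} replaced by the support-function characterization of Lemma \ref{lem:f=sup}.

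First I would establish the stated identity $r_\M(w_A)=q_\mu(A)/2$ for $w_A=\uno_A-\tfrac{\mu(A)}{\mu(V)}\uno$, which yields the easy inequality. Since $(w_A)_i-(w_A)_j=(\uno_A)_i-(\uno_A)_j$, the constant shift cancels in every difference, so formula \eqref{eq:TV} gives $\<w_A,\M(w_A)\>_\mu=\<\uno_A,\M(\uno_A)\>_\mu=f_Q(\uno_A)=Q(A)$, using Remark \ref{rem:fQ} and point 2 of Proposition \ref{prop:Lovasz}. Computing the weighted one-norm by splitting the sum over $A$ and $\bar A$ gives $\|w_A\|_{1,\mu}=2\mu(A)\mu(\bar A)/\mu(V)$, whence $r_\M(w_A)=\mu(V)Q(A)/(2\mu(A)\mu(\bar A))=q_\mu(A)/2$ by \eqref{eq:q(A)}. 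As every $w_A$ satisfies $\<w_A,\uno\>_\mu=0$, taking the maximum over $A$ gives $q_\mu(G)/2\le\lambda_1^\bot(r_\M)$.

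For the reverse inequality I would fix any $x$ with $\<x,\uno\>_\mu=0$ and bound $r_\M(x)$ by $q_\mu(G)/2$. Writing $\<x,\M(x)\>_\mu=f_Q(x)$ as in Remark \ref{rem:fQ} and applying the co-area formula (point 5 of Proposition \ref{prop:Lovasz}) together with $Q(V)=0$, I obtain $\<x,\M(x)\>_\mu=\int_{-\infty}^{+\infty}Q(A_x^t)\,dt$ with $A_x^t=\{i:x_i>t\}$. The definition of $q_\mu(G)$ as a maximum gives the pointwise bound $Q(A)\le \tfrac{q_\mu(G)}{\mu(V)}\mu(A)\mu(\bar A)$ for every $A$ (trivially for $A\in\{\emptyset,V\}$). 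Hence, provided $q_\mu(G)\ge 0$, the numerator is dominated by the Lovász extension $f_S$ of the balancing set function $S(A)=\mu(A)\mu(\bar A)/\mu(V)$, namely $\<x,\M(x)\>_\mu\le q_\mu(G)\,f_S(x)$ where $f_S(x)=\tfrac1{\mu(V)}\int_{-\infty}^{+\infty}\mu(A_x^t)\mu(\bar A_x^t)\,dt$.

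The crux, and the step I expect to be the main obstacle, is the denominator comparison $f_S(x)\le\tfrac12\|x\|_{1,\mu}$ for all $x$ with $\<x,\uno\>_\mu=0$, which closes the argument via $r_\M(x)\le q_\mu(G)f_S(x)/\|x\|_{1,\mu}\le q_\mu(G)/2$. This is precisely where Lemma \ref{lem:f=sup} is meant to intervene: $S$ is symmetric and submodular, so $f_S$ is positively one-homogeneous, even, convex and invariant under translations along $\uno$, and the lemma represents it as $f_S(x)=\sup_{y\in Y}\<x,y\>_\mu$ over a closed symmetric convex set $Y$ with $\<y,\uno\>_\mu=0$. For such $x$ one has $\<x,y\>_\mu=\<x,y-c\uno\>_\mu$ for every $c$, so $\<x,y\>_\mu\le\|x\|_{1,\mu}\,\min_c\|y-c\uno\|_\infty$, and the reduction hinges entirely on bounding the $\uno$-reduced range $\min_c\|y-c\uno\|_\infty$ of the extreme points $y\in Y$ by $1/2$. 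Controlling this range on $Y$ is the delicate point that I would have to argue most carefully; once it is in hand, the combination with the co-area estimate is immediate, and the bound is tight because equality is attained along the vectors $w_A$ from the first part.
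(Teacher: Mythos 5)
Your first paragraph is correct and is exactly the paper's own first step: $\<w_A,\M(w_A)\>_\mu=Q(A)$, $\|w_A\|_{1,\mu}=2\mu(A)\mu(\bar A)/\mu(V)$, hence $r_\M(w_A)=q_\mu(A)/2$ and $q_\mu(G)\le 2\lambda_1^\bot(r_\M)$. The problem is the reverse inequality, and it sits precisely at the step you flagged as the crux --- but the situation is worse than ``delicate'': the comparison you need is false, because the true inequality goes the other way. By point 4 of Proposition \ref{prop:Lovasz}, $S(A)=\mu(A)\mu(\bar A)/\mu(V)$ is the cut function of the complete graph with weights $\mu_i\mu_j/\mu(V)$, so its Lov\'asz extension is
\begin{equation*}
f_S(x)=\frac{1}{2\mu(V)}\sum_{i,j=1}^n\mu_i\mu_j\,|x_i-x_j|\, ,
\end{equation*}
and for every $x$ with $\<x,\uno\>_\mu=0$ one has $\sum_j\mu_j|x_i-x_j|\ge\big|\sum_j\mu_j(x_i-x_j)\big|=\mu(V)|x_i|$, whence $f_S(x)\ge\tfrac12\|x\|_{1,\mu}$, with equality essentially only when $x$ takes two distinct values. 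Concretely, for $n=3$, $\mu\equiv1$, $x=(2,1,-3)$ you get $f_S(x)=10/3>3=\tfrac12\|x\|_{1,\mu}$. Accordingly, your hoped-for bound $\min_c\|y-c\uno\|_\infty\le 1/2$ on the representing set $Y$ cannot hold: $Y$ contains the gradients of $f_S$ on its linearity regions, e.g.\ $(2/3,0,-2/3)$ in the example above (reduced range $2/3$), and for general $n$ with $\mu\equiv 1$ these extreme points have reduced range $(n-1)/n\to 1$.

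This gap is not repairable by estimating $Y$ more carefully, because the target bound $r_\M(x)\le q_\mu(G)/2$ itself fails on vectors with three levels. Take disjoint $A_1,A_2$ with $\mu(A_1)=\mu(A_2)$ and $x=\uno_{A_1}-\uno_{A_2}$; then $\<x,\uno\>_\mu=0$ and the co-area formula gives $r_\M(x)=\{Q(A_1)+Q(A_2)\}/\{\mu(A_1)+\mu(A_2)\}$, which for two equally strong, small, well-separated modules in a large graph is close to $\max_A Q(A)/\mu(A)$, i.e.\ roughly \emph{twice} $q_\mu(G)/2$. (Your argument also silently needs $q_\mu(G)\ge 0$, which can fail, but that is secondary.) For comparison, the paper's own proof meets exactly the same obstacle and crosses it by asserting $\int_{x_{\min}}^{x_{\max}}\sup_{y\in Y}\<\uno_{A_x^t},y\>_\mu\,dt=\sup_{y\in Y}\int_{x_{\min}}^{x_{\max}}\<\uno_{A_x^t},y\>_\mu\,dt$ for the set $Y$ representing $\|P(\cdot)\|_{1,\mu}$; in general only ``$\ge$'' holds there, and for this particular $Y$ the maximizer of $\<\uno_A,y\>_\mu$ is the unique vector $2w_A$, so distinct level sets admit no common maximizer and the exchange genuinely fails --- so you located the crux of the matter exactly. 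The structural point is that the exactness mechanism of \cite{hein2011beyond} is built for \emph{minimization}, where one needs the relaxed denominator to be a \emph{lower} bound of the co-area integral of the balancing function (which $\|P(\cdot)\|_{1,\mu}\le 2f_S$ indeed is); for \emph{maximization} one needs the opposite domination, which $\|x\|_{1,\mu}$ does not provide. An exact statement does hold if the denominator $\|x\|_{1,\mu}$ is replaced by the Lov\'asz extension $2f_S(x)$: then your pointwise bound $Q(A)\le q_\mu(G)S(A)$ integrates to $f_Q(x)\le q_\mu(G)f_S(x)$ for all $x$, and equality at the vectors $w_A$ gives $\max_x f_Q(x)/(2f_S(x))=q_\mu(G)/2$ with no sign assumption.
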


\begin{proof}
For $A\subseteq V$ let $\nu(A)=\mu(A)\mu(\bar A)/\mu(V)$. Then $q_\mu(A)=Q(A)/\nu(A)$. Moreover, if  $w_A = \uno_A - \mu(A)/\mu(V)\uno$, we have  
$\|w_A\|_{1,\mu} = 2\, \nu(A)$ and $\<w_A,\M(w_A)\>_\mu=\<\uno_A,\M(\uno_A)\>_\mu=Q(A)$. Thus  $r_\M(w_A)=q_\mu(A)/2$ and 
\begin{equation}\label{eq:mop}
q_\mu(G)=2\max_{x\in \{-a,b\}^n, \<x,\uno\>_\mu=0}r_\M(x)\, .
\end{equation}

 Now, for $x\in \RR^n$ and $t>0$ consider the level set $A_x^t = \{i\in V : x_i>t\}$ and let  $x_{\min} = \min_i x_i$ and $x_{\max}=\max_i x_i$.  From the co-area formula (Proposition \ref{prop:Lovasz} point 5) and the identity $f_Q(x) = \<x,\M(x)\>_\mu$ shown in \eqref{eq:difference-of-cuts} we have 
$$
\<x,\M(x)\>_\mu = \int_{-\infty}^{+\infty} Q(A_x^t) \, dt =  \int_{x_{\min}}^{x_{\max}}Q(A_x^t)\, dt\, .
$$
Given $A\subseteq V$, let $w_A$ denote the vector $w_A = \uno_A - \mu(A)/\mu(V)\uno$. From $\|w_A\|_{1,\mu}=2\nu(A)$  we obtain
 \begin{align*}
  \<x,\M(x)\>_\mu   \leq \left\{\max_t \frac {Q(A_x^t)}{2 \nu(A_x^t)}\right\}\int_{x_{\min}}^{x_{\max}} 2\nu(A_x^t)\, dt =   \left\{\max_t \frac {Q(A_x^t)}{2 \nu(A_x^t)}\right\}\int_{x_{\min}}^{x_{\max}} \|w_{A_x^t}\|_{1,\mu}\, dt .
 \end{align*}
Let $P:\RR^n\to \RR^n$ be the orthogonal projection onto $\{x:\<x,\uno\>_\mu=0\}$, that is $P(x)=x-\<x,\uno\>_\mu/\mu(V)\uno$, and consider the function $f(x) = \|P(x)\|_{1,\mu}$. Note that $f$ satisfies all the hypothesis of Lemma \ref{lem:f=sup} above. Moreover note that $f(\uno_A)=\|w_A\|_{1,\mu}$ for any $A\subseteq V$. Thus, by Lemma \ref{lem:f=sup}, there exists $Y\subseteq \mathrm{range}(P)$ such that 
$$
\int_{x_{\min}}^{x_{\max}} \|w_{A_x^t}\|_{1,\mu}dt = \sup_{y \in Y}\int_{x_{\min}}^{x_{\max}} \<\uno_{A_x^t},y\>_\mu dt  \, .
$$
Assume w.l.o.g.\ that $x$ is ordered so that $x_1\leq \cdots\leq x_n$. Note that the function $\phi(t)=\<\uno_{A_x^t}, y\>$ is constant on the intervals $[x_i,x_{i+1}]$. Thus, letting $A_i=A_x^{x_i}$ we have
$$
\int_{x_{\min}}^{x_{\max}}\phi(t)dt = \sum_{i=1}^{n-1}(x_{i+1}-x_i)\<\uno_{A_i},y\>_\mu=\sum_{i=1}^n x_i\<\uno_{A_{i-1}}-\uno_{A_i},y\>_\mu = \<x,y\>_\mu \, ,
$$
thus, by Lemma \ref{lem:f=sup}, 
$$
\|P(x)\|_{1,\mu}=f(x) =  \sup_{y\in Y}\<x,y\>_\mu = \int_{x_{\min}}^{x_{\max}} \|w_{A_x^t}\|_{1,\mu}\, dt\, .
$$
 Denote by $A_x^*$  the set that attains the maximum $\max_t Q(A_x^t)/\nu(A_x^t)$. As $\<x,\M(x)\>_\mu=\<P(x),\M(P(x))\>_\mu$, all together we have 
\begin{equation}\label{eq:7}
 \lambda_1^\bot(r_\M)=\max_{x\in \RR^n}r_\M(P(x)) = \max_{x\in \RR^n} \frac{\<x,\M(x)\>}{\|P(x)\|_{1,\mu}}\leq \max_{x\in \RR^n}\frac{Q(A_x^*)}{2\nu(A_x^*)}\leq q_\mu(G)/2\, .
\end{equation}
On the other hand, using \eqref{eq:mop} we get $q_\mu(G)\leq 2 \lambda_1^\bot(r_\M)$ and  
together with \eqref{eq:7} this proves the statement. 
\end{proof}

\section{Spectral method for nonlinear modularity}\label{sec:nonlinear_spectral_method}

As in the spectral method proposed by Newman \cite{newman2006finding}, we can identify a leading module  in the network by partitioning the vertex set into two subsets associated to the maximizers of either  $\lambda_1(r_\M^*)$ or $\lambda_1^\bot(r_\M)$. The pseudo code for the method for $r_\M^*$ is presented below, obvious changes are needed when $r_\M^*$ is replaced by $r_\M$.
\begin{flalign}\label{eq:nonlinear_spectral_method}\tag{M1}
 &\begin{array}{l}
   \begin{cases}
\mathit 1.\, \, \text{Compute }\lambda_1(r_\M^*) \text{ and an associated eigenvector }x\\ 
\mathit 2.\, \, 	\text{If }\lambda_1(r_\M^*)>0:\\
\quad\,   \text{partition the vertex set into }A_+ \text{ and }\bar{A_+}\text{ by optimal thresholding} \\
\quad\,  \text{the eigenvector }x \text{ with respect to the community measure}
\end{cases}\!\!\!\!
  \end{array}&&
\end{flalign}

The optimal thresholding technique for $x$ at step $2$ returns the partition $\{A_+,\bar{A_+}\}$ defined by $A_+=\{i\in V:x_i>t^*\}$, being $t^*$ such that $t^* = \arg\max_{t} q(\{i:x_i>t\})$. 

The procedure \eqref{eq:nonlinear_spectral_method} can be iterated into a successive bi-partitioning strategy which can be sketched as follows:  Consider the nonlinear modularity operator $\M_i$, $i=1,2$, associated with the two subgraphs $G_1=G(A_+)$ and $G_2=G(\bar A_+)$, respectively, and look for a maximal module within $G_1$ and $G_2$  by repeating points \emph{1} and \emph{2}, and so forth. As in the linear case,  each time this procedure is iterated, we have to consider a new  nonlinear modularity operator. If $A\subseteq V$ is the subset of nodes associated with the current recursion, that is $G_i=G(A)$,  
the new nonlinear modularity operator $\M_i$ is defined by replacing the modularity matrix $M$ in  \eqref{eq:nonlinearM}  with the  modularity matrix $M_A$ of the corresponding subgraph $G(A)$, given by \cite{newman2006modularity,fasino2016generalized}
$$(M_A)_{ij} =\left\{ \begin{array}{l}
			\frac 1 {\mu_i}M_{ij} \quad  \text{if } i\neq j\\
                \frac 1 {\mu_i}\big(M_{ii}- (W_{G(A)}\uno)_i + \frac{\vol(A)}{\vol(V)}(W_G\uno)_i\big)
               \end{array}\right. \qquad \text{for } i,j \in A\, ,
$$
where $W_G$ and $W_{G(A)}$ are the weight matrices of the graphs $G$ and $G(A)$, respectively. 

We discuss in what follows a generalized version of the RatioDCA method \cite{hein2011beyond} for approaching step $1$ in the above procedure \eqref{eq:nonlinear_spectral_method}. 
The method converges to a critical value of the Rayleigh quotients \eqref{eq:Rayleigh_quotients} and ensures a better approximation of $q(G)$ and $q_\mu(G)$ than the standard linear spectral method. 

\subsection{Generalized RatioDCA method} 
The RatioDCA technique \cite{hein2011beyond} is a general scheme for minimizing the ratio of nonnegative  differences of convex one-homogeneous functions. We extend that technique to the case where the difference of functions in the numerator can attain both positive and negative values.  As our goal is to maximize $r_\M$ and $r_\M^*$, we then apply the method to $-r_\M$ and $-r_\M^*$ respectively. 

The generalized RatioDCA technique we propose is of self-interest. For this reason, we formulate and analyze the method for general ratio of differences of convex one-homogeneous functions $f_1,f_2,g_1,g_2:\RR^n\to \RR$, such that $g_1(x)-g_2(x)\geq 0$ for all $x\in \RR^n$. 

Define the function
\begin{equation}\label{eq:r(x)}
 r(x) = \frac{f_1(x)-f_2(x)}{g_1(x)-g_2(x)}
\end{equation}
and consider the problem of computing the minimum $\min_x r(x)$. 
The function \eqref{eq:r(x)} can be seen as a generalized Rayleigh quotient and the critical values $\lambda$ of $r(x)$ satisfy the generalized eigenvalue equation
\begin{equation}\label{eq:gen_eigenvalue_r}
0 \in \partial f_1(x)-\partial f_2(x) -\lambda(\partial g_1(x)-\partial g_2(x))\, .
\end{equation}
 In analogy with Definition \ref{def:eig}, when \eqref{eq:gen_eigenvalue_r} holds we say that $\lambda$ is a nonlinear eigenvalue associate to $r$, with corresponding nonlinear eigenvector $x$. Computing the minimum of $r(x)$ is in general a  non-smooth and non-convex optimization problem, so an exact computation of the global minimum of $r(x)$ for general functions and large values of $n$ is out of reach. 
 However, in Theorems~\ref{thm:monotony} and \ref{thm:convergence} we prove that the generalized RatioDCA technique described in Algorithm \ref{alg:genDCA} generates a monotonically descending sequence converging to a nonlinear eigenvalue of $r(x)$.

\RestyleAlgo{ruled}
\begin{algorithm}[t]
	\DontPrintSemicolon
	\caption{Generalized RatioDCA}\label{alg:genDCA}
	\KwIn{Initial guess $x_0$, with $\|x_0\|=1$ and $\lambda_0 = r(x_0)$}
	\Repeat{$|\lambda_{k+1}-\lambda_k|/|\lambda_k|<\text{tolerance}$}{
	\uIf{$\lambda_k\geq 0$}{
		$F_2(x_k)\in \partial f_2(x_k)$, $\quad$ $G_1(x_k)\in \partial g_1(x_k)$\;
		$x_{k+1}=\arg\min_{\|\xi\|_2\leq 1}\Big\{ f_1(\xi)-\<\xi,F_2(x_k)\> + \lambda_k\big(g_2(\xi)-\<\xi,G_1(x_k)\>\big)   \Big\}$\;
	 }
	 \Else{
		$F_2(x_k)\in \partial f_2(x_k)$, $\quad$ $G_2(x_k)\in \partial g_2(x_k)$\;
		$x_{k+1}=\arg\min_{\|\xi\|_2\leq 1}\Big\{ g_1(\xi)-\<\xi,G_2(x_k)\> + \frac 1 {\lambda_k}\big(\<\xi,F_2(x_k)\>-f_1(\xi)\big)   \Big\}$\;
	 }
	 $\lambda_{k+1}=r(x_{k+1})$\;
	}
	\KwOut{Eigenvalue $\lambda_{k+1}$ and associated eigenvector $x_{k+1}$}
\end{algorithm}
The following theorems describe the convergence properties of the generalized RatioDCA algorithm.
\begin{theorem}\label{thm:monotony} Let $\{\lambda_k\}_k$ be the sequence generated by the generalized RatioDCA. Then either $\lambda_{k+1}<\lambda_k$ or the method terminates and it outputs a nonlinear eigenvalue $\lambda_{k+1}$ of $r$ and a corresponding nonlinear eigenvector $x_{k+1}$.
\end{theorem}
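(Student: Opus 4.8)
The plan is to exploit the one-homogeneity and convexity of $f_1,f_2,g_1,g_2$ to show that the surrogate objective minimized at each step vanishes at the current iterate $x_k$, and that its value at the new iterate $x_{k+1}$ controls the sign of $\lambda_{k+1}-\lambda_k$. I would treat the two branches ($\lambda_k\geq 0$ and $\lambda_k<0$) separately, since the sign of $\lambda_k$ reverses several inequalities, although the mechanism is identical. Throughout I use that the iterates are normalized so that $\lambda_k=r(x_k)$ and $\|x_k\|_2\leq 1$.

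Consider first $\lambda_k\geq 0$ and write $\Psi_k(\xi)=f_1(\xi)-\<\xi,F_2(x_k)\>+\lambda_k(g_2(\xi)-\<\xi,G_1(x_k)\>)$ for the step-$k$ objective, so that $x_{k+1}=\arg\min_{\|\xi\|_2\leq 1}\Psi_k$. The first key step records the elementary facts that, for a convex one-homogeneous $h$ and $y\in\partial h(x)$, Euler's identity $\<x,y\>=h(x)$ holds together with the global minorization $h(z)\geq\<z,y\>$ for all $z$, with $h(z)=\<z,y\>$ iff $y\in\partial h(z)$. Applying this with $h=f_2,g_1$ at $x=x_k$ and using $\lambda_k=r(x_k)$, a direct substitution gives $\Psi_k(x_k)=[f_1(x_k)-f_2(x_k)]-\lambda_k[g_1(x_k)-g_2(x_k)]=0$; since $x_k$ is feasible, minimality yields $\Psi_k(x_{k+1})\leq 0$. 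The second key step is the lower bound: the minorizations $\<x_{k+1},F_2(x_k)\>\leq f_2(x_{k+1})$ and $\<x_{k+1},G_1(x_k)\>\leq g_1(x_{k+1})$ (here $\lambda_k\geq 0$ preserves the direction) give $\Psi_k(x_{k+1})\geq [f_1(x_{k+1})-f_2(x_{k+1})]-\lambda_k[g_1(x_{k+1})-g_2(x_{k+1})]$. Dividing by $g_1(x_{k+1})-g_2(x_{k+1})>0$ and rearranging turns $\Psi_k(x_{k+1})\leq 0$ into $\lambda_{k+1}=r(x_{k+1})\leq\lambda_k$, the asserted monotonicity.

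For the dichotomy I would analyze the case of no strict descent. If $\lambda_{k+1}=\lambda_k$, the numerator/denominator identity forces the lower bound above to equal $0$, so sandwiching $0\geq\Psi_k(x_{k+1})\geq 0$ shows $\Psi_k(x_{k+1})=0$ with both minorizations tight; by the equality clause above this means $F_2(x_k)\in\partial f_2(x_{k+1})$ and $G_1(x_k)\in\partial g_1(x_{k+1})$. Moreover, since $\Psi_k$ is one-homogeneous its minimum over the unit ball equals $\min\{0,\min_{\|\xi\|_2=1}\Psi_k\}$, so a ball-minimum of $0$ forces $\Psi_k\geq 0$ on all of $\RR^n$; hence $x_{k+1}$ is an unconstrained global minimizer of the convex $\Psi_k$ and $0\in\partial\Psi_k(x_{k+1})$. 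Expanding this subdifferential by the sum rule produces $P_1\in\partial f_1(x_{k+1})$ and $P_2\in\partial g_2(x_{k+1})$ with $P_1-F_2(x_k)-\lambda_k(G_1(x_k)-P_2)=0$, which, combined with $F_2(x_k)\in\partial f_2(x_{k+1})$ and $G_1(x_k)\in\partial g_1(x_{k+1})$, is exactly $0\in\partial f_1(x_{k+1})-\partial f_2(x_{k+1})-\lambda_k(\partial g_1(x_{k+1})-\partial g_2(x_{k+1}))$, i.e. $x_{k+1}$ is a nonlinear eigenvector with eigenvalue $\lambda_{k+1}=\lambda_k$ in the sense of \eqref{eq:gen_eigenvalue_r}.

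The branch $\lambda_k<0$ runs in parallel with $\Theta_k(\xi)=g_1(\xi)-\<\xi,G_2(x_k)\>+\tfrac{1}{\lambda_k}(\<\xi,F_2(x_k)\>-f_1(\xi))$, which is again convex and one-homogeneous because $-1/\lambda_k>0$ renders $-\tfrac{1}{\lambda_k}f_1$ convex; one checks $\Theta_k(x_k)=0$ as before, using the minorizations at $F_2(x_k),G_2(x_k)$. The one genuinely delicate point, and the step I expect to be the main obstacle, is the sign bookkeeping here: the coefficient $1/\lambda_k<0$ reverses the minorization $\<x_{k+1},F_2(x_k)\>\leq f_2(x_{k+1})$, and the final rearrangement multiplies through by $\lambda_k<0$, flipping the inequality a second time; keeping these two sign reversals consistent is precisely what makes $\Theta_k(x_{k+1})\leq 0$ yield $\lambda_{k+1}\leq\lambda_k$ rather than the reverse. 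Apart from this, the equality analysis is verbatim the one above, so that no strict descent again forces $0\in\partial\Theta_k(x_{k+1})$, and multiplying the resulting inclusion by $\lambda_k\neq 0$ recovers \eqref{eq:gen_eigenvalue_r}. A minor caveat to dispatch is that the computed minimizer be nonzero, so that $r(x_{k+1})$ is defined: this holds because $g_1-g_2$ is positive away from the origin in our applications and $x_k$ is itself an admissible nonzero minimizer in the terminal case.
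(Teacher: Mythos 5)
Your proof is correct and follows essentially the same route as the paper's: the same surrogate objectives $\fun_1,\fun_2$ vanishing at $x_k$ via Euler's identity for one-homogeneous convex functions, the minorization $h(z)\geq\<z,F(y)\>$ to convert $\fun_i(x_{k+1})\leq 0$ into descent (with the same sign analysis for $\lambda_k<0$), and one-homogeneity to upgrade the terminal case to an unconstrained global minimum yielding the eigenvalue inclusion. The only (harmless) difference is bookkeeping in the termination case: you extract $F_2(x_k)\in\partial f_2(x_{k+1})$ and $G_1(x_k)\in\partial g_1(x_{k+1})$ from tightness of the minorizations, whereas the paper identifies $x_{k+1}=x_k$ and uses the subgradients at that common point.
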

\begin{proof}
 Define $\fun_1$ and $\fun_2$ as in Lines 4 and 7 of Algorithm \ref{alg:genDCA}. Namely, 
 $$\fun_1(\xi) = f_1(\xi)-\<\xi,F_2(x_k)\> + \lambda_k\big(g_2(\xi)-\<\xi,G_1(x_k)\>$$
 and 
 $$
 \textstyle{\fun_2(\xi)=g_1(\xi)-\<\xi,G_2(x_k)\> + \frac 1 {\lambda_k}\big(\<\xi,F_2(x_k)\>-f_1(\xi)\big) \, .}
 $$
 By construction we have $\fun_1(x_k)=\fun_2(x_k)=0$, due to the fact that for any convex one-homogeneous function $f$, and any $F(x)\in\partial f(x)$, it holds $\<x,F(x)\>=f(x)$. Recall moreover that, for any convex one-homogeneous function $f:\RR^n\to\RR$, it holds $f(x)\geq \<x,F(y)\>$, for any $x,y \in\RR^n$ and any $F(y)\in \partial f(y)$ (see e.g. \cite{hiriart2012fundamentals}).
 
 If $\lambda_k\geq 0$, by definition of $x_{k+1}$ we have $\fun_1(x_{k+1})\leq 0$. Two cases are possible: either $\fun_1(x_{k+1})< 0$ or $\fun_1(x_{k+1})= 0$. In the first case  we have
 $$
 f_1(x_{k+1})+\lambda_k\, g_2(x_{k+1}) < \< x_{k+1}, F_2(x_k)  \>+\lambda_k \< x_{k+1}, G_1(x_k)  \>\leq f_2(x_{k+1})+\lambda_k g_1(x_{k+1})\,
 $$
 therefore $f_1(x_{k+1})-f_2(x_{k+1})< \lambda_k (g_1(x_{k+1})-g_2(x_{k+1}))$ that is $\lambda_{k+1}< \lambda_k$.
 Otherwise $\fun_1(x_{k+1})= 0$, thus $\lambda_{k+1}= \lambda_k$ and the method terminates. As $f_1,f_2,g_1,g_2$ are one-homogeneous we deduce that $x_{k+1}=x_k$ is a global minimum of $\fun_1$, thus $0\in \partial \fun_1(x_{k+1})$. This implies $0 \in \partial f_1(x_{k+1})-F_2(x_{k+1}) -\lambda_{k+1}(G_1(x_{k+1})-\partial g_2(x_{k+1}))$, that is $\lambda_{k+1}$ is a nonlinear eigenvalue of $r$ with corresponding nonlinear eigenvector $x_{k+1}$. 
 
 Let us now consider the case $\lambda_k<0$. We have 
 $$\fun_2(x_{k+1})=g_1(x_{k+1})-\<x_{k+1},G_2(x_k)\> + \frac 1 {\lambda_k}\big(\<x_{k+1},F_2(x_k)\>-f_1(x_{k+1})\big) \leq 0\, .$$
 If $\fun_2(x_{k+1})<0$, together with $\lambda_k<0$ and $g_1-g_2\geq 0$ this implies
 $$g_1(x_{k+1})-\frac{1}{\lambda_k}f_1(x_{k+1})< \<x_{k+1},G_2(x_{k})\>-\frac{1}{\lambda_k}\<x_{k+1},F_2(x_k)\>\leq g_2(x_{k+1})-\frac 1 {\lambda_k}f_2(x_{k+1})$$
 therefore $g_1(x_{k+1})-g_2(x_{k+1})< -\frac 1 {\lambda_k}(f_2(x_{k+1})-f_2(x_{k+1}))$, that is $\lambda_{k+1}< \lambda_k$. Again, note that the equality holds only if the optimal value in the inner problem is zero, which implies in turn that the sequence terminates and the point $x_{k+1}=x_{k}$ is a critical value of $\fun_2$, thus $0\in \partial \fun_2(x_{k+1})$. We get
 $$0 \in \partial g_1(x_{k+1}) - G_2(x_{k+1}) - (\partial f_1(x_{k+1})-F_2(x_{k+1}))/\lambda_{k+1}\, .$$
Multiplying the previous equation by $-\lambda_{k+1}\neq 0$  we conclude the proof.
\end{proof}

\begin{theorem}\label{thm:convergence}
 Let $\{\lambda_k\}_k\subseteq \RR$ and $\{x_k\}_k\subseteq \RR^n$ be the sequences defined by the generalized RatioDCA method. Then \begin{enumerate}
 \item $\lambda_k$ converges to a nonlinear eigenvalue $\lambda$ of $r$, 
 \item there exists a subsequence of $\{x_k\}_k$ converging to a nonlinear eigenvector of $r$ corresponding to $\lambda$ and the same holds for any convergent subsequence of~$\{x_k\}_k$.
 \end{enumerate}
\end{theorem}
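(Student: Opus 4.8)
\section*{Proof proposal for Theorem~\ref{thm:convergence}}

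The plan is to leverage the strict-descent dichotomy of Theorem~\ref{thm:monotony} together with a compactness argument on the unit sphere, passing the inner optimality conditions of Algorithm~\ref{alg:genDCA} to a limit. First I would settle the convergence of $\{\lambda_k\}$. By Theorem~\ref{thm:monotony} the sequence either terminates after finitely many steps --- in which case $\lambda_{k+1}$ is already a nonlinear eigenvalue with eigenvector $x_{k+1}$ and both claims hold --- or it is strictly decreasing, $\lambda_{k+1}<\lambda_k$ for all $k$. In the latter case I would note that $r$ is zero-homogeneous (a ratio of one-homogeneous functions), so that, after renormalising the iterates to $\|x_k\|_2=1$, the values $\lambda_k=r(x_k)$ lie in the range of $r$ over the compact unit sphere; since $f_1-f_2$ is continuous and $g_1-g_2$ is continuous and bounded away from zero there (as is the case for the norms arising in \eqref{eq:Rayleigh_quotients}), $r$ is bounded and the monotone sequence $\{\lambda_k\}$ converges to some $\lambda\in\RR$.

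Next I would pass to a limit. The iterates $\{x_k\}$ lie on the compact unit sphere, so there is a subsequence $x_{k_j}\to x^*$ with $\|x^*\|_2=1$, and by continuity $r(x^*)=\lambda$. Discarding finitely many terms I may assume all $\lambda_{k_j}$ share a fixed sign, say $\lambda_{k_j}\geq 0$, so that the iterates come from the branch in Line~4. The subgradients $F_2(x_{k_j})\in\partial f_2(x_{k_j})$ and $G_1(x_{k_j})\in\partial g_1(x_{k_j})$ are uniformly bounded, since the Clarke subdifferential of a convex one-homogeneous function is bounded; after a further subsequence $F_2(x_{k_j})\to F_2^\star$ and $G_1(x_{k_j})\to G_1^\star$, and closedness of the graph of the subdifferential gives $F_2^\star\in\partial f_2(x^*)$ and $G_1^\star\in\partial g_1(x^*)$. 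With these limits I form the functional $\fun_1^\star(\xi)=f_1(\xi)-\<\xi,F_2^\star\>+\lambda\bigl(g_2(\xi)-\<\xi,G_1^\star\>\bigr)$; the subgradient identity $\<x^*,F_2^\star\>=f_2(x^*)$, $\<x^*,G_1^\star\>=g_1(x^*)$ valid for one-homogeneous functions yields $\fun_1^\star(x^*)=(f_1-f_2)(x^*)-\lambda(g_1-g_2)(x^*)=0$, the last equality because $r(x^*)=\lambda$.

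The decisive step is to show that the minimal value of $\fun_1^\star$ over the unit ball is exactly $0$, for then $x^*$, realising this value, is a global minimiser of the one-homogeneous $\fun_1^\star$, whence $0\in\partial\fun_1^\star(x^*)=\partial f_1(x^*)-F_2^\star+\lambda\partial g_2(x^*)-\lambda G_1^\star$; together with $F_2^\star\in\partial f_2(x^*)$ and $G_1^\star\in\partial g_1(x^*)$ this is precisely the eigenvalue inclusion \eqref{eq:gen_eigenvalue_r}, identifying $\lambda$ as a nonlinear eigenvalue with eigenvector $x^*$. To obtain the value $0$ I would quantify the descent gap: the chain of inequalities in the proof of Theorem~\ref{thm:monotony} gives
$$0\ \geq\ \fun_1^{(k)}(x_{k+1})\ \geq\ (g_1-g_2)(x_{k+1})\,(\lambda_{k+1}-\lambda_k),$$
where $\fun_1^{(k)}$ denotes the Line~4 functional built from the data at step $k$. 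Since $\lambda_{k+1}-\lambda_k\to 0$ while $(g_1-g_2)(x_{k+1})$ stays bounded, the optimal inner values $\fun_1^{(k_j)}(x_{k_j+1})=\min_{\|\xi\|_2\leq 1}\fun_1^{(k_j)}(\xi)$ tend to $0$; and because $\fun_1^{(k_j)}\to\fun_1^\star$ uniformly on the unit ball (all defining data converge), these minima converge to $\min_{\|\xi\|_2\leq 1}\fun_1^\star(\xi)$, which is therefore $0$.

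I expect the main obstacle to be precisely this passage to the limit of a variational characterisation while keeping the base point of the subgradients tied to $x^*$: the resolution is to compare optimal values rather than minimisers, so that the vanishing descent gap forces $\min\fun_1^\star=0$ and the already computed identity $\fun_1^\star(x^*)=0$ makes $x^*$ a minimiser, without having to identify the limit of $\{x_{k_j+1}\}$. The branch $\lambda_{k_j}<0$ is handled symmetrically, working with the Line~7 functional and the reciprocal $1/\lambda_k$. Finally, the argument used nothing about the chosen subsequence beyond its convergence, so every convergent subsequence of $\{x_k\}$ has as its limit a nonlinear eigenvector associated with $\lambda$; compactness of the unit sphere guarantees at least one such subsequence, which proves the second assertion.
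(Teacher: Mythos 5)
Your proof is correct, and it shares the paper's overall skeleton: monotonicity plus boundedness give convergence of $\lambda_k$; compactness of the unit ball gives a convergent subsequence $x_{k_j}\to x_*$ with $r(x_*)=\lambda$; one then freezes the inner functional ($\fun_1$ or $\fun_2$ according to the sign of $\lambda$) at the limit, shows that $x_*$ minimizes it with value zero, and reads off the eigenvalue inclusion \eqref{eq:gen_eigenvalue_r} from $0\in\partial\fun_1(x_*)$ (resp. $0\in\partial\fun_2(x_*)$) via one-homogeneity. Where you genuinely diverge is in how the minimality of $x_*$ is established. The paper builds the frozen functional from subgradients evaluated \emph{at} $x_*$ and argues by contradiction: if it had a negative minimum at some $\tilde x$, then ``arguing as in Theorem~\ref{thm:monotony}'' one gets $r(\tilde x)<\lambda$ (the paper writes $r(\tilde x)>\lambda$, which appears to be a sign slip), declared contradictory ``as $\lambda$ is the limit of $\lambda_k$''. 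As stated this is elliptic: $\lambda$ is only the limit of the algorithm's values, not a global minimum of $r$, so the existence of a point with $r(\tilde x)<\lambda$ is not by itself a contradiction; to make it one, the frozen functional at $x_*$ must be related back to the inner problems the algorithm actually solves, whose linearizations use subgradients based at $x_{k_j}$ rather than at $x_*$ --- and the subdifferential map is only upper semicontinuous, so these need not converge to the paper's chosen $F_2(x_*)$, $G_2(x_*)$. Your argument supplies exactly this missing link: you carry along the algorithm's own subgradients, extract limits $F_2^\star$, $G_1^\star$ using their uniform boundedness and the closed graph of the subdifferential, bound the inner optimal values through the descent gap $0\geq\min_{\|\xi\|_2\le1}\fun_1^{(k)}(\xi)\geq(g_1-g_2)(x_{k+1})(\lambda_{k+1}-\lambda_k)\to0$, and transfer this to the limit functional by uniform convergence on the unit ball, obtaining $\min\fun_1^\star=0=\fun_1^\star(x^*)$ directly rather than by contradiction. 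The cost is a longer argument; the gain is that the delicate limiting step --- comparing optimal values rather than minimizers --- is made fully rigorous, which is precisely what the paper's one-line contradiction tacitly requires. Note that both proofs rest on the same implicit assumption that $g_1-g_2$ does not vanish at $x_*$, so that $r$ is continuous there; this holds for $r_\M^*$ (where $g_1=\|\cdot\|_\infty$ is bounded below on the $\ell^2$ sphere) but deserves a remark in the general setting, e.g.\ for $\tilde r_\M$, whose denominator vanishes on constant vectors.
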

\begin{proof}
 The sequence $\{x_k\}_k$ belongs to the compact set $\{x:\|x\|_2\leq 1\}$ thus $\lambda_k=r(x_k)$ is decreasing and bounded, and there exits a convergent subsequence $x_{k_j}$. We deduce that there exists  $\lambda$ such that $\min_{\|x\|_2\leq 1}r(x)\leq\lambda=\lim_k r(x_k)$ and thus, for any convergent subsequence $x_{k_j}$ of $x_k$, we have $\lim_j x_{k_j}=x_*$ with $r(x_*)=\lambda$.  Similarly to the previous proof, define $\fun_1$ and $\fun_2$ as 
 \begin{align*}
  \fun_1(\xi) &= f_1(\xi)-\<\xi,F_2(x_*)\> + \lambda \big(g_2(\xi)-\<\xi,G_1(x_*)\>\\
  \fun_2(\xi)&=g_1(\xi)-\<\xi,G_2(x_*)\> + \frac 1 {\lambda}\big(\<\xi,F_2(x_*)\>-f_1(\xi)\big) \, .
 \end{align*}
  Assume $\lambda<0$. We observe that $\fun_2$ has to be nonnegative. In fact, let $\tilde x = \arg\min_{\|\xi\|\leq 1}\fun_2(\xi)$ and assume that $\fun_2(\tilde x)<0$. Arguing as in the proof of Theorem \ref{thm:monotony}, we get
  $r(\tilde x)>\lambda =r(x_*)$ which is a contradiction, as $\lambda$ is the limit of the sequence $\lambda_k=r(x_k)$. This implies that $x_*$ is a critical point for $\fun_2$, thus $0\in \partial \fun_2(x_*)$, showing that $x_*$ is a nonlinear eigenvector of $r$ with critical value $\lambda$. If $\lambda\geq 0$, an analogous argument applied to $\fun_1$ leads to the same conclusion, thus concluding the proof.
\end{proof}

\subsection{Generalized RatioDCA for modularity Rayleigh quotients}
In order to apply Algorithm \ref{alg:genDCA} to $r_\M^*$ and $r_\M$ recall that, as observed in \eqref{eq:difference-of-cuts}, the quantity $\<x,\M(x)\>_\mu$ is the difference of two 
weighted total variations $\<x,\M(x)\>_\mu = \frac 1 2 \big\{|x|_{TV}^{w_0}-|x|_{TV}^w\big\}$.
As we aim at maximizing the Rayleigh quotients \eqref{eq:Rayleigh_quotients}, we apply the generalized RatioDCA to either $-r_\M^*$ or $-r_\M$. However, for $r_\M$, we are interested in $\lambda_1^\bot(r_\M)$, and thus we want to maximize $r_\M$ over the subspace $\mathrm{range}(P)$, being $P$ the orthogonal projection $P(x)=x-\<x,\uno\>_\mu/\mu(V)\uno$. 
This issue is addressed by applying the generalized RatioDCA to the function
$$
\tilde r_\M(x) = \frac{\<x,\M(x)\>_\mu}{\, \,\, \,  \|P(x)\|_{1,\mu}} \, .
$$
In fact,  due to the definition of $\M$, we have $r_\M(P(x))=\tilde r_\M(x)$. Thus, optimizing $\tilde r_\M$ is equivalent to optimizing $r_\M$ on the subspace $\mathrm{range}(P)$.

Therefore:
\begin{itemize}
\item In order to address $\lambda_1(r_\M^*)$ we apply Algorithm \ref{alg:genDCA} with the choices $f_1(x)=\frac 12 |x|_{TV}^w$, $f_2(x)=\frac 12 |x|_{TV}^{w_0}$,  $g_1(x)=\|x\|_\infty$ and $g_2(x)=0$.
\item In order to address $\lambda_1^\bot(r_\M)$ we apply Algorithm \ref{alg:genDCA} with the choices $f_1(x)=\frac 12 |x|_{TV}^w$, $f_2(x)=\frac 12 |x|_{TV}^{w_0}$,  $g_1(x)=\|P(x)\|_{1,\mu}$ and $g_2(x)=0$.
\end{itemize}

The following Algorithm \ref{alg:genDCA_rM} shows an implementation of Algorithm \ref{alg:genDCA} tailored to the problem of computing $\lambda_1^\bot(r_\M)$. Straightforward changes are required when implementing the method for $\lambda_1(r_\M^*)$. 

\RestyleAlgo{ruled}
\begin{algorithm}[H]
	\DontPrintSemicolon
	\caption{Generalized RatioDCA for $\lambda_1^\bot(r_\M)$}\label{alg:genDCA_rM}
	\KwIn{Initial guess $x_0\neq 0$ such that $\<x_0, \uno\>_\mu = 0$ and $\lambda_0 = r_\M(x_0)$}
	\Repeat{$|\lambda_{k+1}-\lambda_k|/|\lambda_k|<\text{tolerance}$}{
	$\delta_0(x_k)\in \partial \big\{|x_k|_{TV}^{w_0}\big\}$ such that $\<\uno, \delta_0(x_k)\>_\mu=0$, $\phi(x_k) \in \Phi(x_k)$\;
	\uIf{$\lambda_k\leq 0$}{		
		$y_{k+1}=\arg\min_{\|\xi\|_2\leq 1}\Big\{ 
|\xi|_{TV}^w -\<\xi,\delta_0(x_k)-2\, \lambda_k\,  P\big(\phi(x_k)\big)\>  
\Big\}$\;
	 }
	 \Else{
		$y_{k+1}=\arg\min_{\|\xi\|_2\leq 1}\Big\{ 
2\|P(\xi)\|_{1,\mu}-\frac 1 {\lambda_k}\Big(\<\xi,\delta_0(x_k)\>_\mu - |\xi|_{TV}^w\Big)  \Big\}$\;
	 }
	 $x_{k+1}= P(y_{k+1})$\;
	 $\lambda_{k+1}=r_\M(x_{k+1})$\;
	}
	\KwOut{Eigenvalue $\lambda_{k+1}$ and associated eigenvector $x_{k+1}$}
\end{algorithm}

Note that in the algorithm we need 
to select an element $\delta_0(x)$ of the subdifferential of the total variation of $x$, weighted with $w_0$, being also an element of $\mathrm{range}(P)$, i.e. fulfilling the condition $\<\uno, \delta_0(x)\>_\mu=0$. This is always possible, as long as $x$ is not the constant vector. In fact, consider the sign function $\sigma:\RR\to\{-1,0,1\}$ defined by $\sigma(\lambda)=\lambda/|\lambda|$ if $\lambda\neq 0$ and $\sigma(\lambda)=0$ otherwise. One easily realizes that the  vector $y$, with components
$$
y_i = \frac 1 {\mu_i}\sum_{j=1}^n \frac{d_id_j}{\vol(V)}\sigma(x_i-x_j)\, , \qquad i=1,\dots,n\, ,
$$
 belongs to $\partial \big\{|x_k|_{TV}^{w_0}\big\}$ and is such that $\<\uno, y\>_\mu = 0$, that is $y\in \mathrm{range}(P)$.

A number of  optimization strategies can be used to solve the inner convex-optimization problem at steps $4$ and $6$ of Algorithm \ref{alg:genDCA_rM}. Two efficient methods used in \cite{hein2010inverse,hein2011beyond} are FISTA \cite{beck2009fast} and PDHG \cite{chambolle2011first}. Both methods ensure a quadratic convergence rate. Moreover, 
the computational cost of each iteration of both FISTA and PDHG is led by the cost required to perform the two matrix-vector multiplications $Bx$ and $B^Tx$, being $B$ the node-edge transition matrix of the graph $B:\RR^n\to\RR^{|E|}$, entrywise defined by $(Bx)_{(ij)}=w(ij)(x_i-x_j)$. As it is known, $B$ is typically a very sparse matrix.  We use PDHG in the experiments that we  present in the next section.

Let us conclude with some important remarks related with the practical implementation of the generalized RatioDCA technique. 
First, note that an exact solution of the inner problems at steps $4$ and $6$ is not required in order to ensure monotonic ascending. In fact, the proof of Theorem \ref{thm:monotony} goes through unchanged if $x_{k+1}$ is replaced by any vector $y$ such that $\fun_1(y)<\fun_1(x_k)$, resp.\ $\fun_2(y)<\fun_2(x_k)$. Therefore one can speed up the inner problem phase by computing any $y$ with such a property, especially at an early stage, when the solution is far from the limit.

Second,  Theorem \ref{thm:monotony} ensures that the sequence  of approximations of the Rayleigh quotient generated by the generalized RatioDCA scheme is monotonically increasing. As a consequence, if we run the algorithm by using the leading eigenvector of the modularity matrix $M$ as a starting vector $x_0$, the output is guaranteed to be a better approximation of the modularities $q(G)$ and $q_\mu(G)$.  On the other hand, convergence to a global optimum is not ensured, so in practice one runs the method with a number of starting points and chooses the solution having largest modularity. An  effective choice of the starting point can be done by exploiting a diffusion process on the graph, as suggested in \cite{bresson2013multiclass}. We shall discuss this with more detail in Section \ref{sec:starting_points}.

\section{Numerical experiments}\label{sec:experiments} In this section we apply our method to several real-world networks with the aim of highlighting the improvements that the nonlinear modularity ensures over the standard  linear approach. All the experiments shown in what follows assume $\mu=d$, that is each vertex is weighted with its degree. We subdivide the discussion as follows. 
In Section \ref{sec:balance-nobalance} we discuss
the differences between identified communities associated to 
the exact nonlinear relaxations $r_\M^*$ and $r_\M$ of the modularity and normalized modularity set functions, respectively.
%
%
Then, in Sections \ref{sec:49mnist} and \ref{sec:real-networks}, we focus only on the optimization of the modularity function $q$ and compare the proposed nonlinear approach with other standard techniques. Precisely, in Section \ref{sec:49mnist} we analyze the handwritten digits dataset known as MNIST, restricting our attention to the subset made by the digits $\mathit 4$ and $\mathit 9$. 
We show several statistics including modularity value and clustering error. Finally, in Section \ref{sec:real-networks} we perform community detection on several complex networks borrowed from different applications, comparing the modularity value  obtained with the generalized RatioDCA method for $\lambda_1(r_\M^*)$ against standard methods. We also discuss some experiments where multiple communities are computed. 

\subsection{On the difference between $q(G)$ and $q_\mu(G)$: unbalanced community structure}\label{sec:balance-nobalance}
There are many situations where the community structure in a network is not balanced. Communities of relatively small size can be present in a network alongside communities with a much larger amount of nodes. It is in fact not difficult to imagine the situation of a social network of individual relationships made by communities of highly different sizes. However, a known drawback of modularity maximization \cite{fortunato2007resolution,lancichinetti2011limits} is the tendency to overlook  small-size communities, even if such groups are  well interconnected and can be clearly identified as communities. Many possible solutions to this phenomenon have been proposed in the recent literature, as for instance through the introduction of a tunable resolution parameter $\gamma$, by introducing weighted self-loops, or by considering different null-models  (see \cite{reichardt2006statistical,traag2011narrow,fasino2016generalized}, e.g.). In \cite{zhang2013normalized, fasino2016modularity} it is pointed out that the use of a normalized modularity measure $q_\mu$  is a further potential approach. 
In fact, if we seek at localizing a set $A\subseteq V$ with high modularity $Q(A)$ but  relatively small size $\mu(A)$, then we expect the maximum of $q_\mu$ to be a good indicator of the partition involving $A$. 

In this section we compare the community structure obtained from applying the nonlinear spectral method with $r_\M^*$ and with $r_\M$, aiming at maximizing the modularity  $q$ and the normalized modularity $q_\mu$ functions, respectively. In Figure~\ref{fig:clusters} we show the clustering obtained on a synthetic dataset built trying to model the situation considered in Fig.\ 2 of \cite{fortunato2007resolution}: two small communities poorly connected with each other and with the rest of the network. 

Our aim is to localize the small community as the leading module in the graph. In our synthetic model we generate a random graph $G=(V,E)$ as follows: The small community $A_1$ has 50 nodes, each two nodes in $A_1$ are connected with probability $0.6$, and the weight function for $G$ is such that  $w(ij)=2$ for any $ij \in E(A_1)$. Another group $A_2\subseteq V$ has $100$ nodes, each two nodes in $A_2$ are connected with probability $0.4$, and the weight function for $G$ is such that  $w(ij)=1$ for any $ij \in E(A_2)$. Finally, the rest of the graph $V \setminus(A_1\cup A_2)$ consist of $450$ nodes  and each of them is connected by an edge $ij$ with probability $0.05$ and $w(ij)=1$.

The weight matrix of the graph is shown on the left-most side of Fig \ref{fig:clusters}, whereas the table in the right-most part shows the value of the modularities $q(C_i)$ and $q_\mu(C_i)$ evaluated on the three different partitions $\{C_i,\bar {C_i}\}$, $i=1,2,3$, obtained by the linear spectral method, the nonlinear spectral method with $r_\M^*$ and the one for $r_\M$, respectively. Although the modularity obtained applying the nonlinear spectral method to $r_\M^*$ is the highest one, as expected, the clustering shown in Figure \ref{fig:clusters} highlights how the unbalanced solution obtained through  $\lambda_1^\bot(r_\M)$ is able to recognize the small community $A_1$, whereas the other approaches are not.
\begin{figure}[h]
\begin{minipage}{\linewidth}
 \hfill
\begin{minipage}{.16\linewidth}
\centering
\includegraphics[width=0.9\linewidth, trim =  60mm 92mm 43mm 72mm, clip]{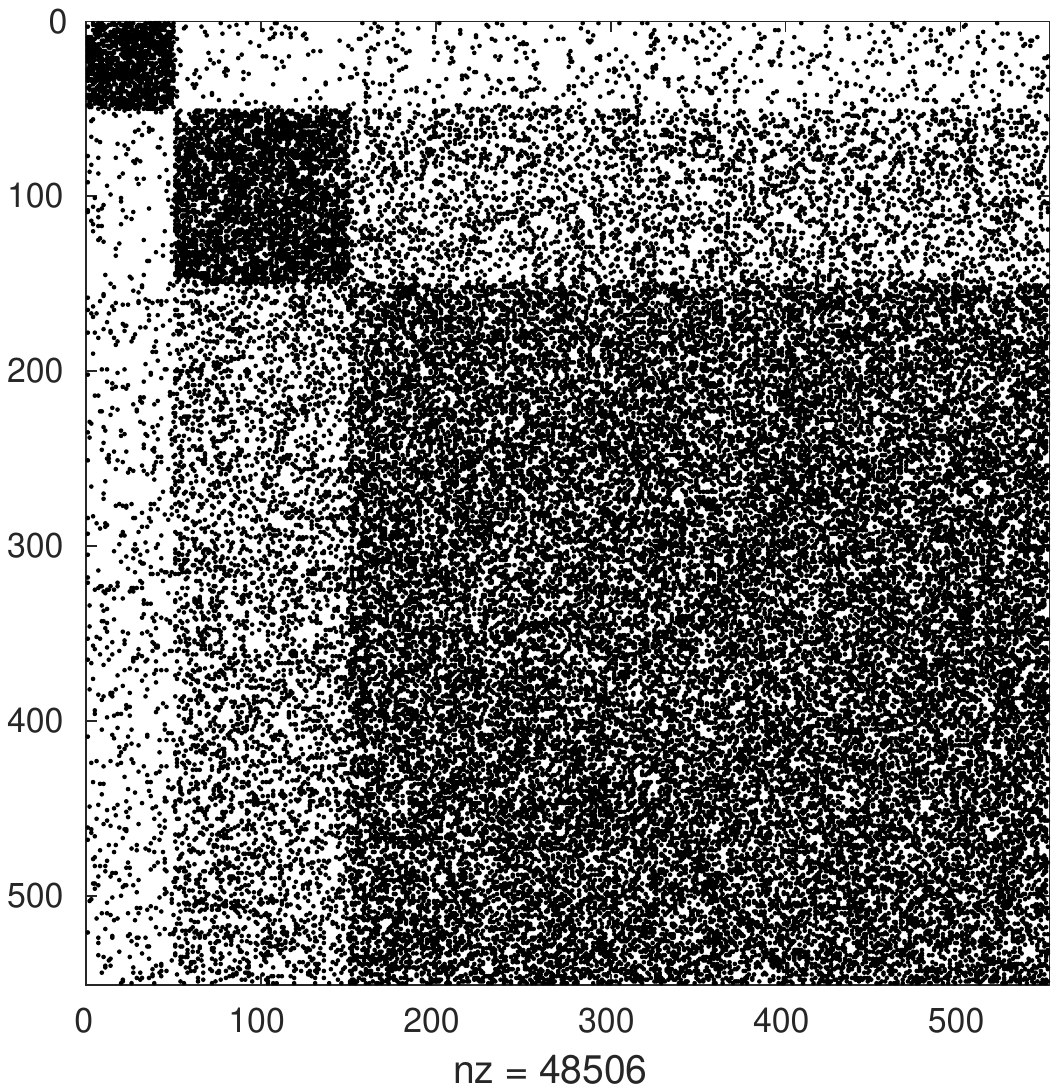}\\
\footnotesize{$W_G$}
\end{minipage}
\hfill
\begin{minipage}{.18\linewidth}
\centering
\includegraphics[width=0.9\textwidth]{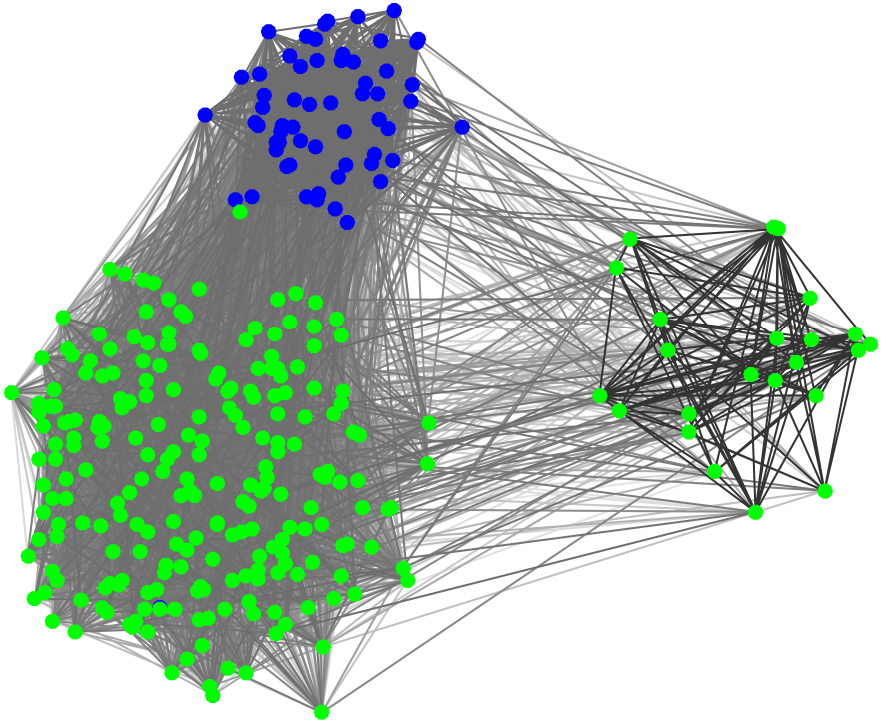}\\
\footnotesize{$(a)$ $\lambda_1(M)$}
\end{minipage}
\hfill
\begin{minipage}{.18\linewidth}
\centering
\includegraphics[width=0.9\textwidth]{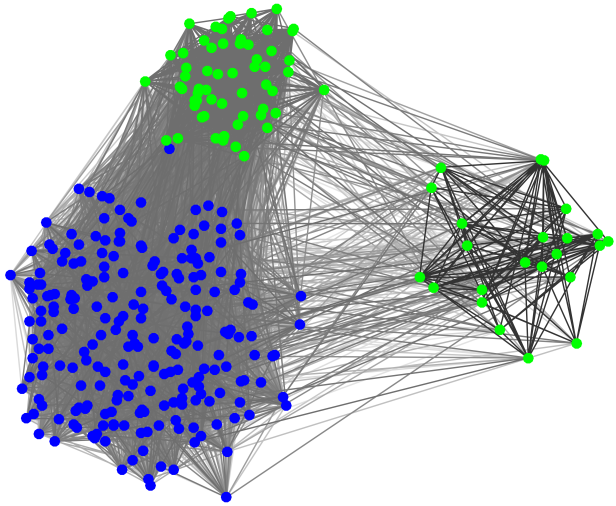}\\
\footnotesize{$(b)$ $\lambda_1(r_\M^*)$}
\end{minipage}
\hfill
\begin{minipage}{.18\linewidth}
\centering
\includegraphics[width=0.9\textwidth]{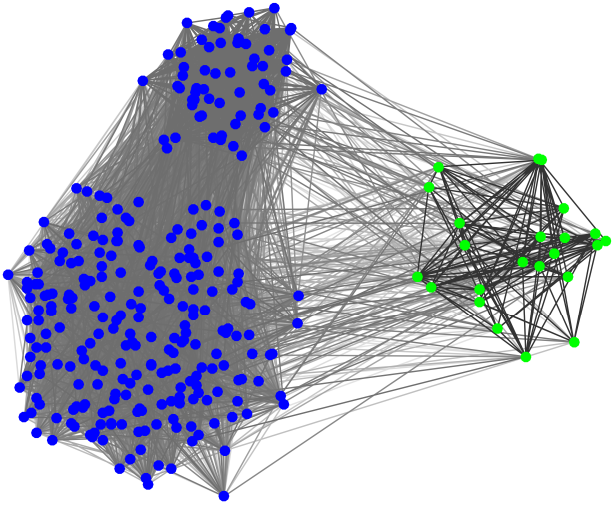}\\
\footnotesize{$(c)$ $\lambda_1^\bot(r_\M)$}
\end{minipage}
\hfill
\begin{minipage}{.19\linewidth}
\begin{tabular}{|c|c|c|}
\hline
 & \scriptsize{\!\!\!$q(C_i)$\!\!\!} & \scriptsize{\!\!\!$q_\mu(C_i)$\!\!\!} \\
 \hline
 \scriptsize{\!\!\!$(a)$\!\!\!} & \scriptsize{\!\!\!0.29\!\!\!} & \scriptsize{0.012\!\!\!}\\
 \scriptsize{\!\!\!$(b)$\!\!\!} & \scriptsize{\!\!\!0.37\!\!\!} & \scriptsize{0.022\!\!\!}\\
 \scriptsize{\!\!\!$(c)$\!\!\!} & \scriptsize{\!\!\!0.13\!\!\!} & \scriptsize{0.029\!\!\!}\\
 \hline
\end{tabular}
\end{minipage}
\hfill
\end{minipage}
\caption{Experiments on synthetic data.
From left to right: Sparsity pattern (spy) plot of the weight matrix of the graph; partition $\{C_1,\bar{C_1}\}$ obtained through Newman's spectral method; partitions $\{C_2,\bar{C_2}\}$ and $\{C_3,\bar{C_3}\}$ obtained through \eqref{eq:nonlinear_spectral_method} with $\lambda_1(r_\M^*)$ and $\lambda_1^\bot(r_\M)$, respectively; value of the modularity of the three partitions.
Relation between matrix spy ($W_G$) and the graph drawings:
the smallest ground-truth community (top-left block of $W_G$) corresponds to the right-most community in graph displays of (a), (b) and (c), whereas the largest community (bottom-right block of $W_G$) is displayed as the bottom-left community in (a), (b) and (c).}\label{fig:clusters}
\end{figure}

In Figure \ref{fig:jazz} we propose a similar comparison made on the Jazz bands network \cite{gleiser2003community}.
 
The network has been obtained from ``The Red Hot Jazz Archive'' digital database, and includes 198 bands that performed between 1912 and 1940, with most of the bands performing in the 1920's. In this case each vertex corresponds to a band, and an edge between two bands is established if they have at least one musician in common. A relatively small community seems to be captured by the normalized modularity $q_\mu$, corresponding to an unbalanced subdivision of the network, whereas a relatively poor community structure corresponds to the standard modularity. The graph drawings are realized by means of the Kamada-Kawai algorithm \cite{kamada1989algorithm}.

\begin{figure}[h]
\begin{minipage}{\linewidth}
 \hfill
\begin{minipage}{.18\linewidth}
\centering
\includegraphics[width=0.99\linewidth, trim =  10mm 60mm 10mm 40mm, clip]{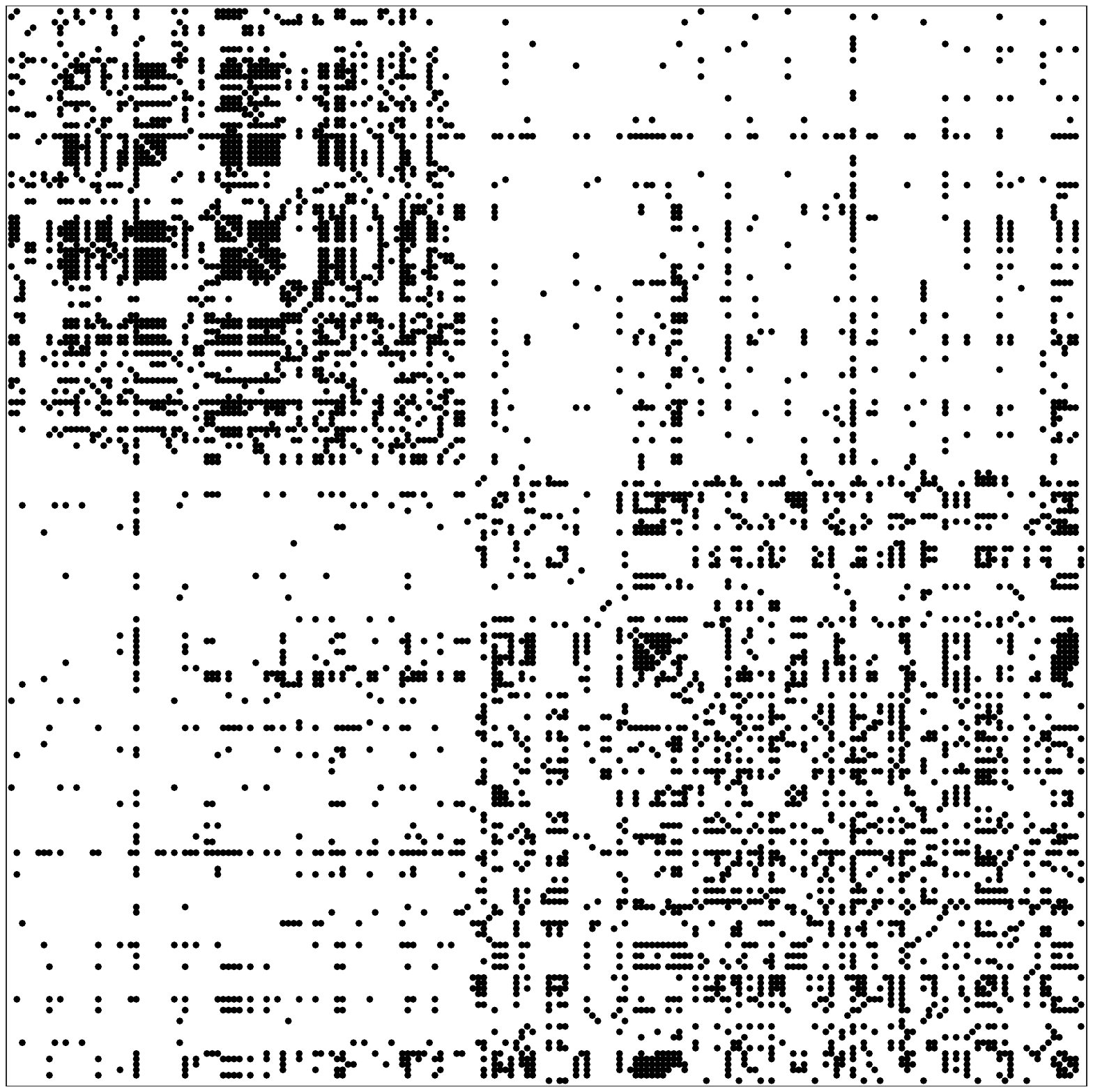}\\
\footnotesize{$W_G$}
\end{minipage}
\hfill
\begin{minipage}{.18\linewidth}
\centering
\includegraphics[width=0.99\textwidth]
{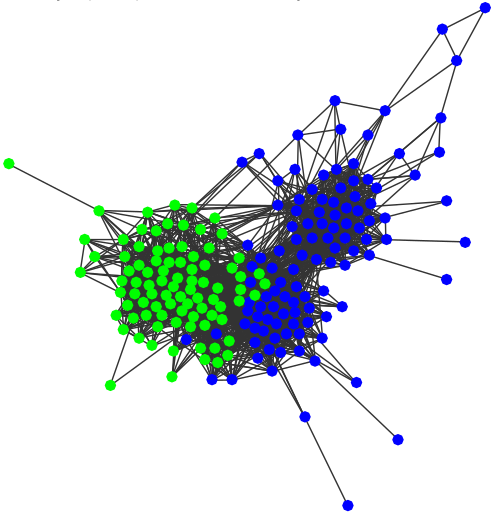}\\
\footnotesize{$(a)$ $\lambda_1(M)$}
\end{minipage}
\hfill
\begin{minipage}{.18\linewidth}
\centering
\includegraphics[width=0.99\textwidth]
{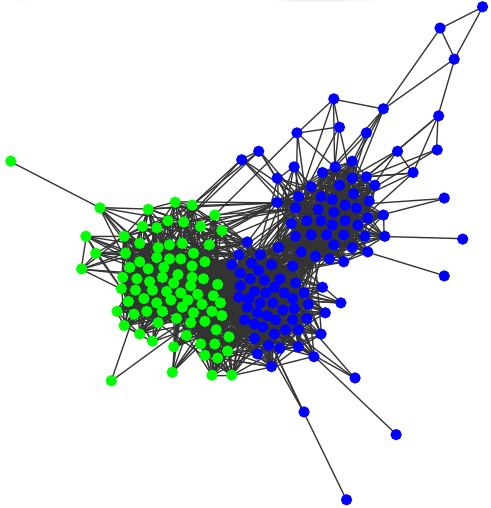}\\
\footnotesize{$(b)$ $\lambda_1(r_\M^*)$}
\end{minipage}
\hfill
\begin{minipage}{.18\linewidth}
\centering
\includegraphics[width=0.99\textwidth]{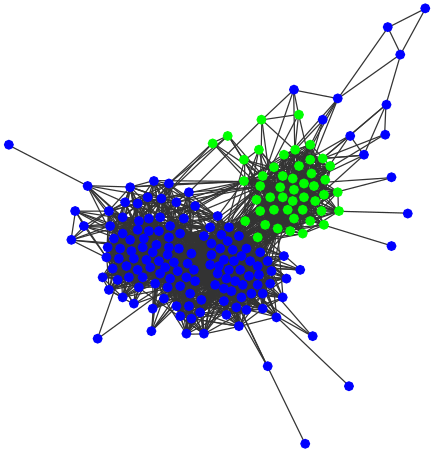}\\
\footnotesize{$(c)$ $\lambda_1^\bot(r_\M)$}
\end{minipage}
\hfill
\begin{minipage}{.19\linewidth}
\begin{tabular}{|c|c|c|}
\hline
 & \scriptsize{\!\!\!$q(C_i)$\!\!\!} & \scriptsize{\!\!\!$q_\mu(C_i)$\!\!\!} \\
 \hline
 \scriptsize{\!\!\!$(a)$\!\!\!} & \scriptsize{\!\!\!0.30\!\!\!} & \scriptsize{0.035\!\!\!}\\
 \scriptsize{\!\!\!$(b)$\!\!\!} & \scriptsize{\!\!\!0.32\!\!\!} & \scriptsize{0.038\!\!\!}\\
 \scriptsize{\!\!\!$(c)$\!\!\!} & \scriptsize{\!\!\!0.27\!\!\!} & \scriptsize{0.050\!\!\!}\\
 \hline
\end{tabular}
\end{minipage}
\hfill
\end{minipage}
\caption{Experiments on Jazz Network. From left to right: Sparsity pattern (spy) plot of the weight matrix of the graph; partition $\{C_1,\bar{C_1}\}$ obtained through Newman's spectral method; partitions $\{C_2,\bar{C_2}\}$ and $\{C_3,\bar{C_3}\}$ obtained through \eqref{eq:nonlinear_spectral_method} with $\lambda_1(r_\M^*)$ and $\lambda_1^\bot(r_\M)$, respectively; value of the modularity of the three partitions. }\label{fig:jazz}
\end{figure}

\subsection{MNIST: handwritten \textit{4-9} digits}\label{sec:49mnist}
The database known as MNIST \cite{lecun1998mnist} consists of 70K images of 10 different handwritten digits ranging from \textit{0} to \textit{9}. This dataset is a widespread benchmark for graph partitioning and data mining. Each digit is an image of $28\times 28$ pixels which is then represented as a real matrix $X_i \in \RR^{28\times 28}$. Here we do not apply any form of dimension reduction strategy, as for instance projection on principal subspaces. For a chosen integer $m$, we build a weighted graph $G=(V,E)$ out of the original data points (images) $X_i$ by placing an edge between node $i$ and its $m$-nearest neighbors $j$, weighted by 
$$w(ij) = \exp\left(-\frac{4\|X_i-X_j\|_F^2}{\min\{\nu(i), \nu(j)\}} \right), \quad \nu(s) = \min_{t:\, st\in E}\|X_s-X_t\|_F^2\, ,$$
being $\|\cdot\|_F$ the Frobenius norm. We limit our attention to the subset of samples representing the digits \textit{4} and \textit{9} which  result into a graph with 13,782 nodes. We refer to this dataset as \textit{49}MNIST.   The reason for choosing such two digits is due to the fact that they are particularly difficult to distinguish, as handwritten \textit{4} and \textit{9}  look very similar (see f.i.\ \cite{hein2011beyond}). 

Although the use of  MNIST dataset is not common in the community detection literature, it gives us a ground-truth community structure to which compare the result of our methods and thus allows for a clustering error measurement. 
In the following Table \ref{tab:49mnist} we compare linear and nonlinear spectral methods on \textit{49}MNIST for different values of $m$ (the number of nearest neighbors defining the edge set of the graph), ranging among $\{5, 10, 15, 20\}$. As the two groups we are looking for are known to be of approximately same size, we apply the nonlinear method \eqref{eq:nonlinear_spectral_method} with $\lambda_1(r_\M^*)$, 
i.e. with the exact nonlinear relaxation of the modularity set function $q$.

Let $\{A,\bar A\}$ be the ground-truth partition of the graph, and let $\{A_+,\bar {A_+}\}$ be the partition obtained by the spectral method. Table \ref{tab:49mnist} shows the following measurements:

\textbf{Modularity.} This is the modularity value $q(A_+)$ of the partition $\{A_+,\bar{A_+}\}$ computed by optimal thresholding the eigenvector of $\lambda_1(M)$ and $\lambda_1(r_\M^*)$, respectively.

\textbf{Clustering error.} This error measure counts the fraction  of incorrectly assigned labels with respect to the ground truth. Namely
$$\text{C.Error} = \frac{1}{n}\Big\{ \sum_{i \in {A_+}}\delta(L_i, L_{A_+}) + \sum_{i \in \bar{A_+}} \delta(L_i,L_{\bar{A_+}}) \Big\}$$
where $\delta$ is the Dirac function, $L_i$ is the true label of node $i$, and $L_{A_+}$, $L_{\bar{A_+}}$ are the dominant true-labels in the clusters $A_+$ and $\bar{A_+}$, respectively. 

\textbf{Normalized Mutual Information (NMI).}  This is an entropy-based similarity measure comparing two partitions of the node set. This  measure is borrowed from information theory, where was originally used to evaluate the Shannon information content of random variables. The Shannon entropy  of a discrete random variable $X$, with distribution $p_X(x)$, is defined by $H(X)=-\sum_x p_X(x)\log p_X(x)$, whereas the mutual information of two discrete random variables $X$ and $Y$~is~defined~as
$$I(X,Y)=\sum_x\sum_y p_{(X,Y)}(x,y)\log\left(\frac{p_{(X,Y)}(x,y)}{p_X(x)p_Y(y)}\right)\, .$$
Finally the NMI of $X$ and $Y$ is $NMI(X,Y)={2 I(X,Y)}/\{H(X)+H(Y)\}$.
The use of NMI for comparing network partitions has been then proposed in  \cite{danon2005comparing,fred2006learning,lancichinetti2009community}. 
\newcommand\VRule[1][\arrayrulewidth]{\vrule width #1}

\begin{table}[!h]
\setlength\extrarowheight{1pt}
\setlength{\tabcolsep}{8pt}
\centering        
\begin{tabular}{!{\VRule[.8pt]}ccccc!{\VRule[.8pt]}}   
 \specialrule{.8pt}{0pt}{0pt} 
            $m$& Method & $q$ & {C.Error} &  {$NMI$}  \\
\specialrule{.8pt}{0pt}{0pt} 
\multirow{2}{*}{5} & \footnotesize{$\lambda_1(M)$} & 0.79 & 0.30 & 0.14  \\   
 & \footnotesize{$\lambda_1(r_\M^*)$}                & 0.95 & 0.01 & 0.88  \\   
\specialrule{.1pt}{0pt}{0pt}                                                                        
\multirow{2}{*}{10} & \footnotesize{$\lambda_1(M)$} & 0.71 & 0.23 & 0.36   \\
 & \footnotesize{$\lambda_1(r_\M^*)$}                 & 0.93  & 0.03  & 0.81    \\  
\specialrule{.1pt}{0pt}{0pt} 
\multirow{2}{*}{15} & \footnotesize{$\lambda_1(M)$} & 0.77 & 0.39 & 0.05   \\   
 & \footnotesize{$\lambda_1(r_\M^*)$}                 & 0.91 & 0.03 & 0.82  \\
\specialrule{.1pt}{0pt}{0pt} 
\multirow{2}{*}{20} & \footnotesize{$\lambda_1(M)$}  & {0.80} & {0.41} & {0.03}   \\
 & \footnotesize{$\lambda_1(r_\M^*)$}      			& {0.91} & {0.03} & {0.82}   \\
 \specialrule{.8pt}{0pt}{0pt} 
\end{tabular}
\caption{Experiments on \textit{49}MNIST dataset and the associated network built out of a $m$-nearest-neighbors graph, with $m \in \{5,10,15,20\}$. With $\lambda_1(M)$ and $\lambda_1(r_\M^*)$ we indicate the linear  method and  our nonlinear variant \eqref{eq:nonlinear_spectral_method}, respectively.}\label{tab:49mnist}
\end{table}

\subsection{Community detection on complex networks}\label{sec:real-networks}
In this section we apply the method \eqref{eq:nonlinear_spectral_method} 
with $r_\M^*$, i.e. the exact nonlinear relaxation of the modularity set function $q$, 
to analyze the community structure of several complex networks of different sizes and representing data taken from different fields, including ecological networks (such as Benguela, Skipwith, StMarks, Ythan2), social and economic networks (such as SawMill, UKFaculty, Corporate, Geom, Erd\H{o}s), protein-protein interaction networks (such as Malaria, Drugs, Hpyroli, Ecoli, PINHuman), technological and informational networks (such as Electronic2, USAir97, Internet97, Internet98,  AS735, Oregon1), transcription networks (such as YeastS), and citation networks (such as AstroPh, CondMat).  Overall we have gathered 68 different networks with sizes ranging from $n=29$ to $n=23133$, all of whom are freely available online. We show the complete list of data sets in Appendix \ref{app:network_names}. 

For each of them we look for the leading module with respect to the unbalanced modularity measure $q$. In particular,  we apply the generalized RatioDCA for $\lambda_1(r^*_\M)$. As this method does not necessarily converge to the global maximum, we run it with different starting points 
and then take as a result the one achieving higher modularity. We discuss the choice of the starting points with more detail in Subsection~\ref{sec:starting_points}. Note that, due to Theorem \ref{thm:monotony}, the choice of the eigenvector  corresponding to $\lambda_1(M)$  as  starting point ensures improvement with respect to the linear case and is often an effective choice. Table \ref{tab:best_2clusters} shows results in this sense: we compare the number of times the nonlinear spectral method outperforms the linear one (in terms of modularity value), with different strategies for the starting point. 
\begin{table}[h]
 \setlength\extrarowheight{1pt}
\centering
{\footnotesize 
\begin{tabular}{!{\VRule[0.8pt]}c!{\VRule[0.8pt]}cccc!{\VRule[0.8pt]}}
\specialrule{0.8pt}{0pt}{0pt}        
Starting point strategy & Eig & 30 Rand & 30 Diff & All\\
\specialrule{.8pt}{0pt}{0pt}                                                                                                                                                                                                      
\textit{Best}			& 100\% & 82.35\% & 95.59\% & 100\%  \\ 
\textit{Strictly Best}   & 95.59\% & 82.35\% & 94.12\% & 97.06\%  \\ 
\specialrule{.8pt}{0pt}{0pt}           
\end{tabular}
}
\caption{
Experiments on real world networks looking for two communities.
Fraction of cases where the nonlinear spectral method \eqref{eq:nonlinear_spectral_method} achieves best and strictly best modularity value $q$ with respect to the linear method. Columns from left to right show results for different sets of starting points: linear modularity eigenvector as starting point, 30 uniformly random starting points, 30 diffused starting points (see Sec.\ \ref{sec:starting_points}), all of them. Experiments are done on 68 networks, listed in  Appendix \ref{app:network_names}.}\label{tab:best_2clusters}
\end{table}

Table~\ref{tab:experimentsOnRealNetworks_2communities}  shows modularity values obtained by the linear spectral method for $\lambda_1(M)$ and our nonlinear spectral technique \eqref{eq:nonlinear_spectral_method}  for $\lambda_1(r_\M^*)$, with the generalized RatioDCA described in Algorithm \ref{alg:genDCA}, on 15 example networks. For the results of our method the shown values are the best value of modularity obtained with the spectral method \eqref{eq:nonlinear_spectral_method} run with  with 61 starting points: 30 random, 30 diffusive (see Sec.\ \ref{sec:starting_points}) and the leading eigenvector of $M$.
The linear modularity approach is outperformed by our nonlinear method:  
the improvement over the modularity matrix linear approach is up to $128\%$, which corresponds to the case of AS735. Also, the size of the modules identified by the two methods often significantly differ.

In Figure \ref{fig:block-plots} we show further statistics on this experiment. In particular, the first plot on the left shows medians and quartiles of the modularity value obtained by the nonlinear method with 61 starting points, highlighting the value obtained with the linear modularity eigenvector as a starting point (magenta triangle) and the best value obtained (black dot). These modularity values are compared with the  modularity values obtained with the linear method (green triangle). The second plot on the right of Figure \ref{fig:block-plots} shows timing performances of the nonlinear method \eqref{eq:nonlinear_spectral_method}  for $\lambda_1(r_\M^*)$ implemented via the generalized RatioDCA  Algorithm \ref{alg:genDCA} on the 15 datasets here considered. 
The generalized RatioDCA is here implemented using PDHG as inner-optimization method \cite{chambolle2011first}.

Finally, in Figure \ref{fig:nobalance} we show  graph drawings comparing the bi-partitions obtained with the two methods on some sample networks. We consider this drawing give a good qualitative intuition of the advantages obtained by using our nonlinear method.

\begin{figure}[h!]
 \begin{minipage}{.13\textwidth}
\centering
\includegraphics[width=0.9\textwidth]{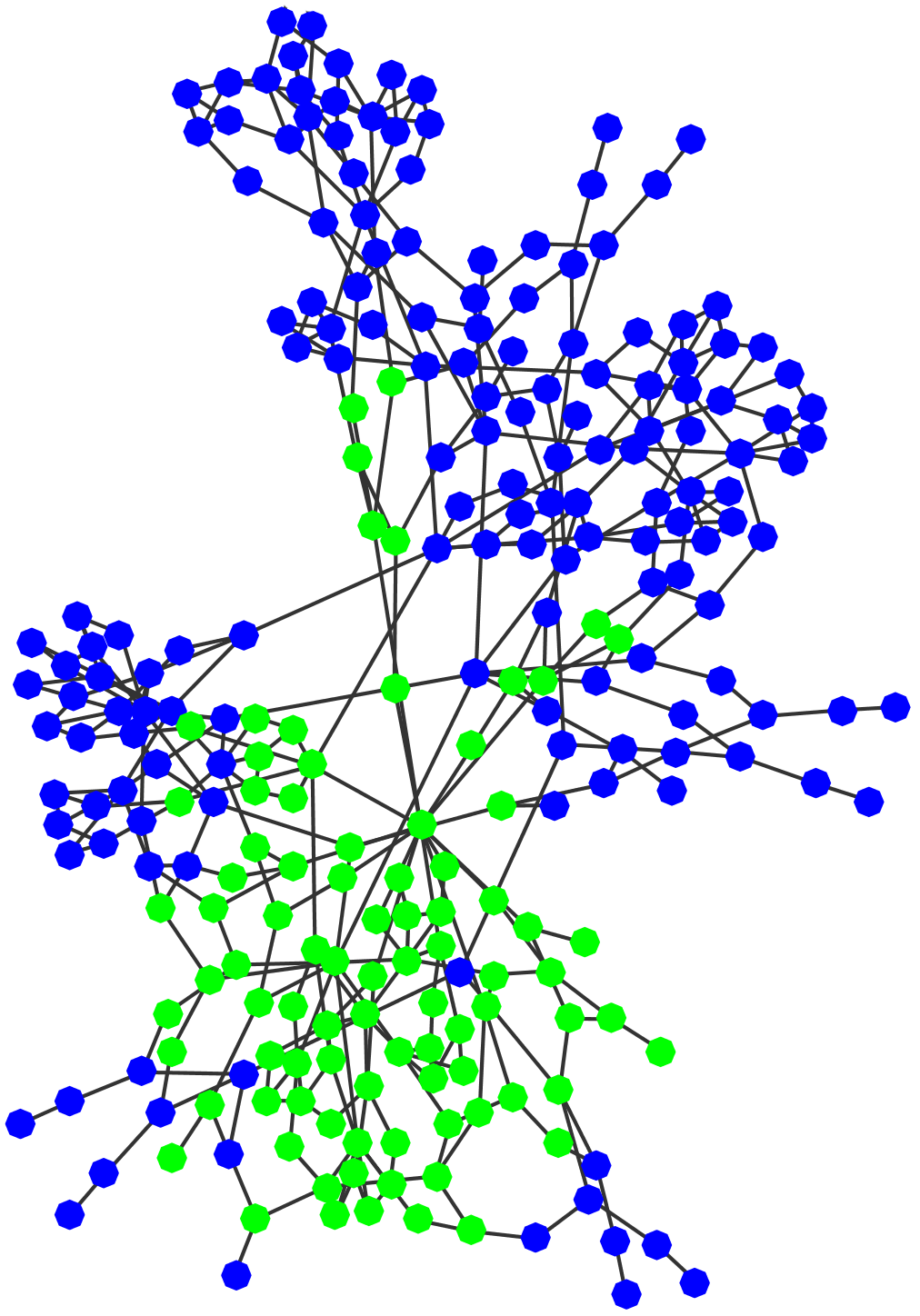}
 \end{minipage}
 \begin{minipage}{.13\textwidth}
\centering
\includegraphics[width=0.9\textwidth]{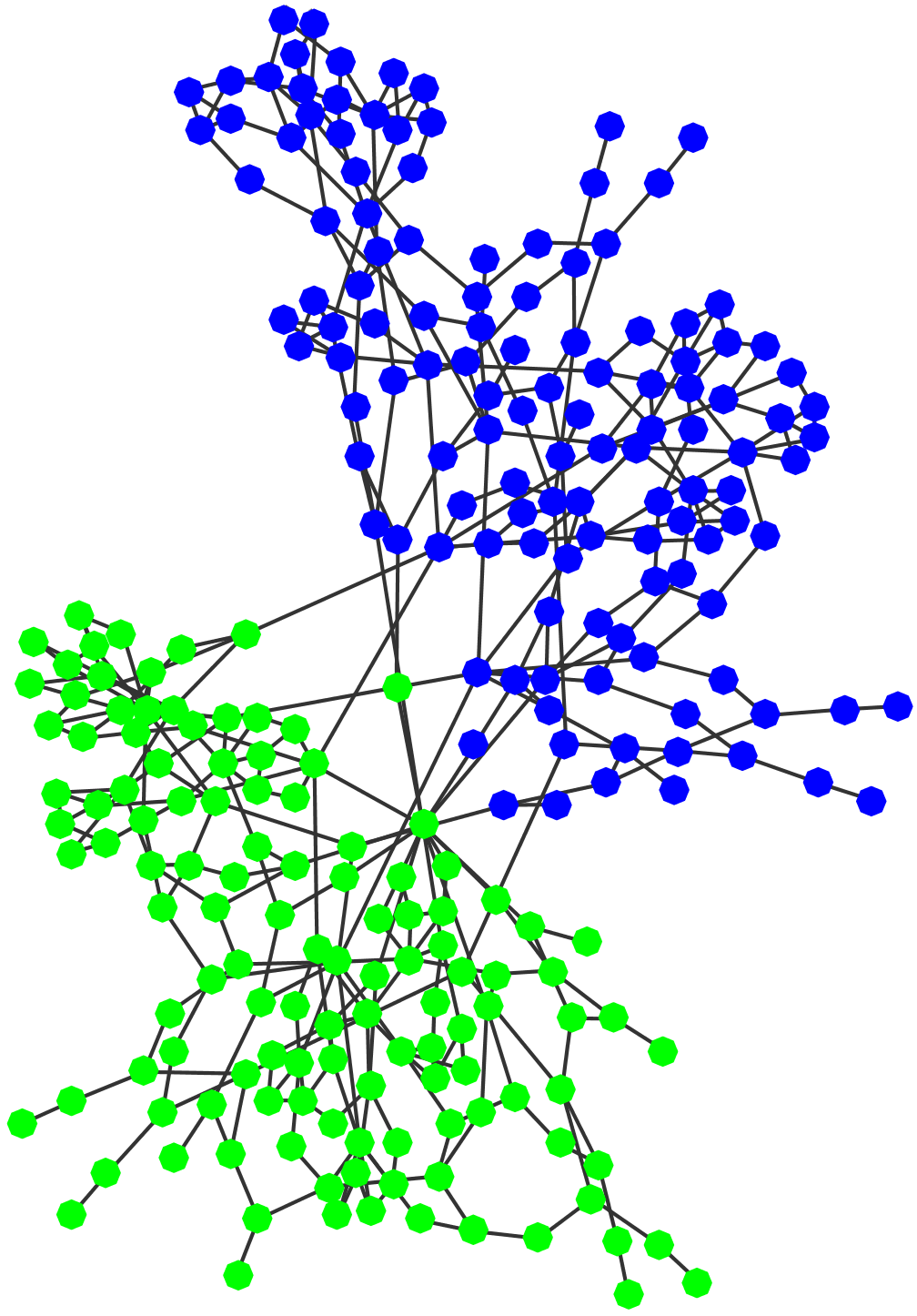}
 \end{minipage}
 \hfill
  \begin{minipage}{.15\textwidth}
\centering
\includegraphics[width=0.9\textwidth]{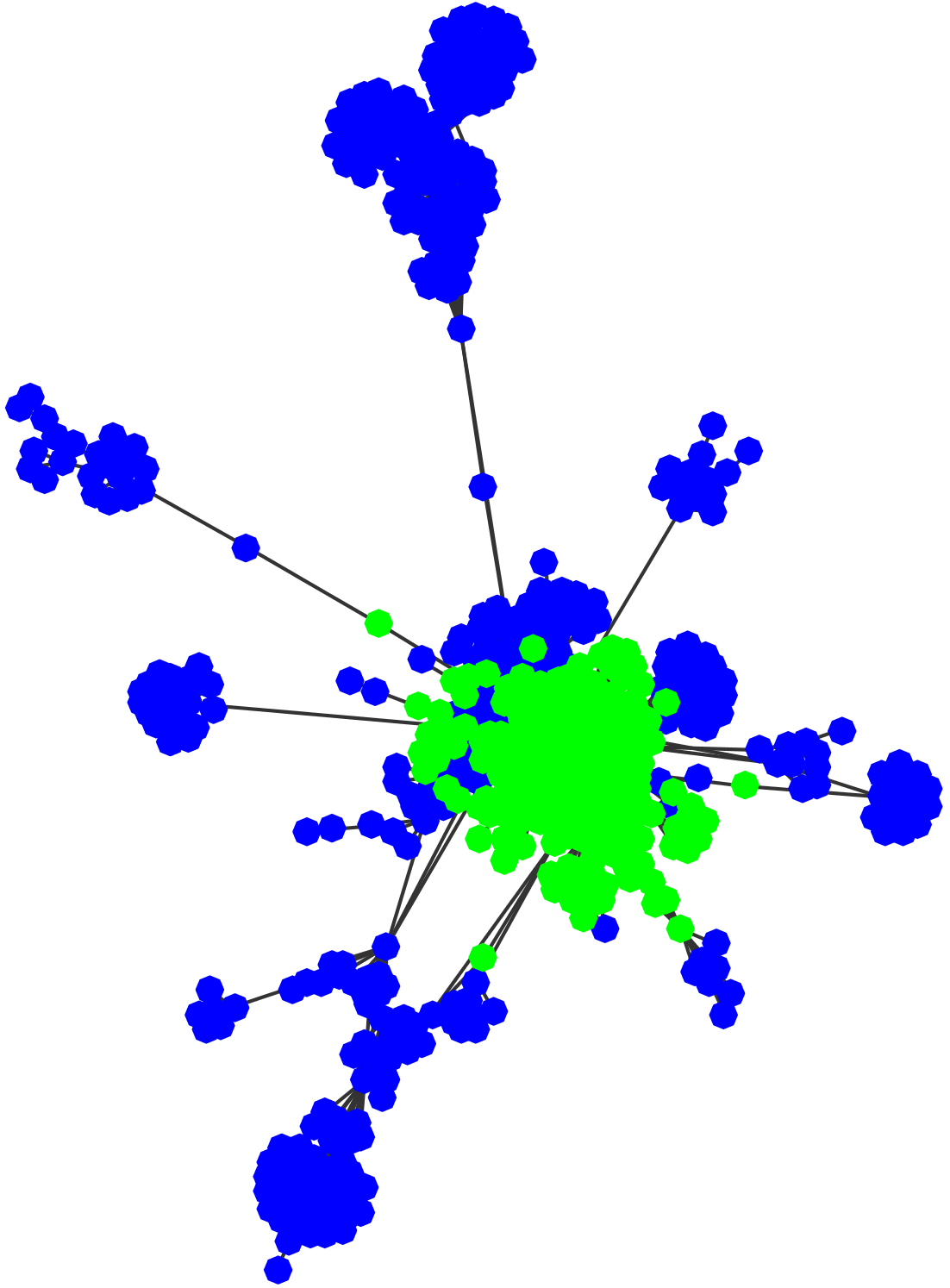}
 \end{minipage}
 \begin{minipage}{.15\textwidth}
\centering
\includegraphics[width=0.9\textwidth]{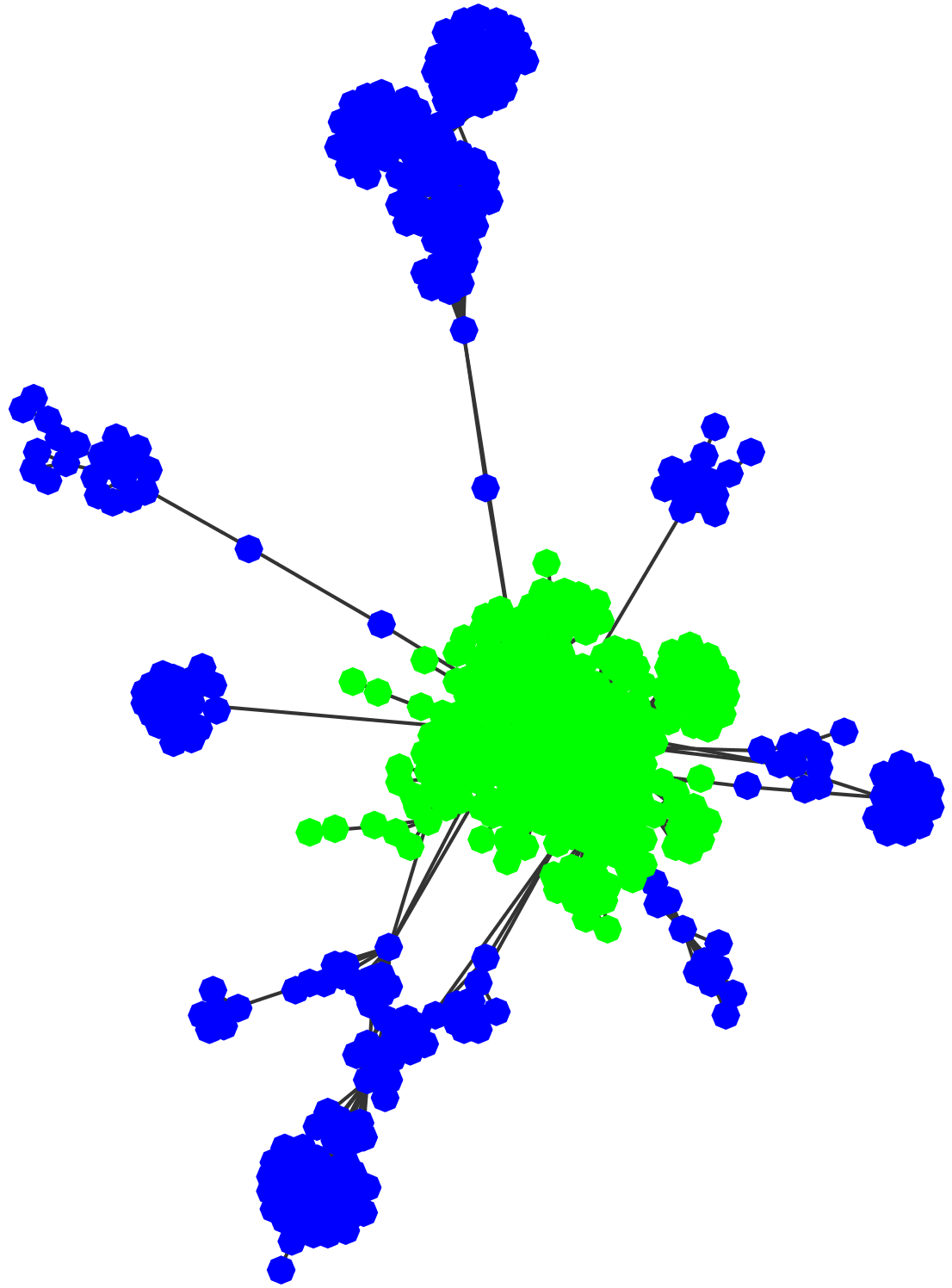}
 \end{minipage}
 \hfill
  \begin{minipage}{.15\textwidth}
\centering
\includegraphics[width=0.9\textwidth]{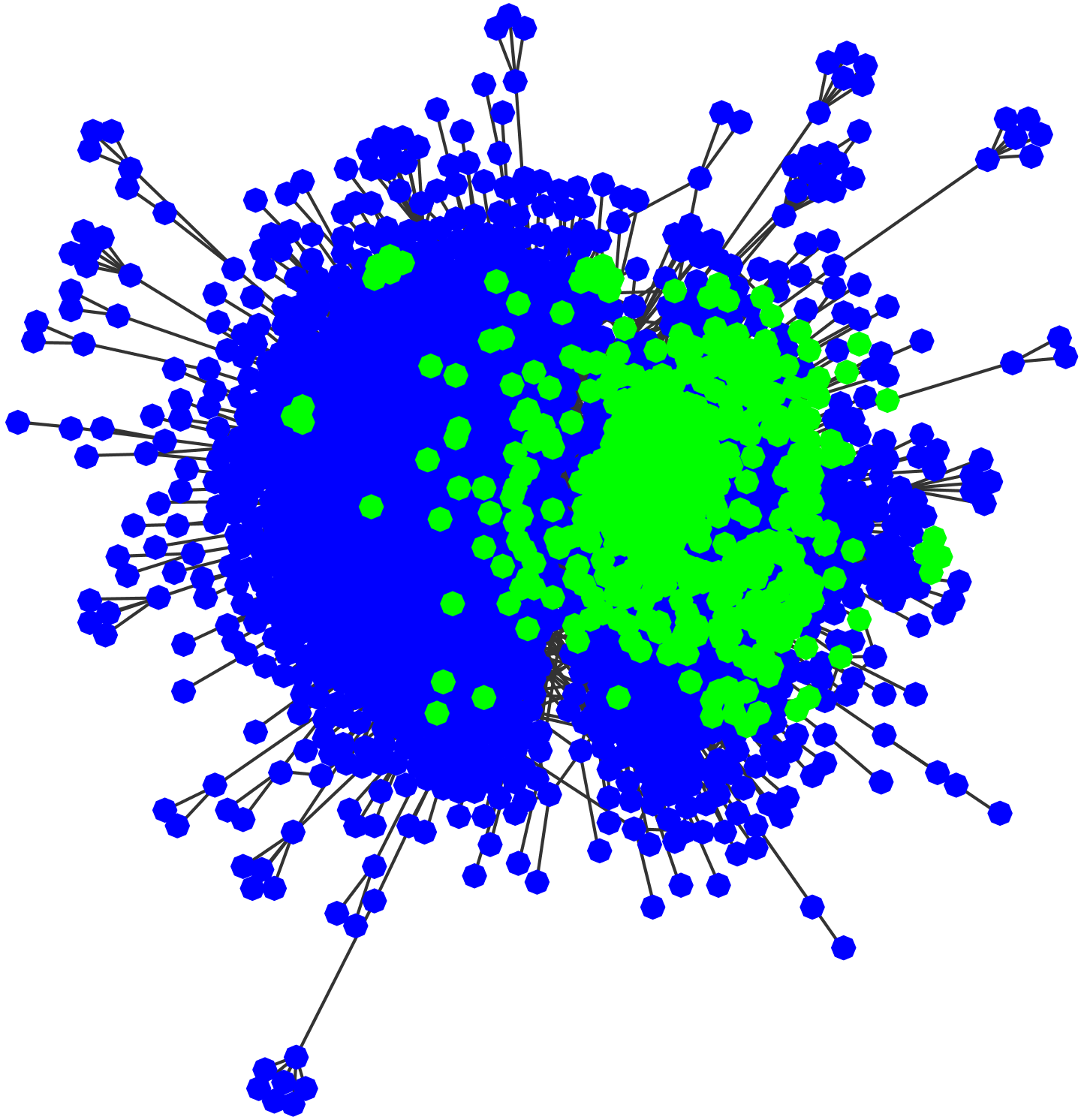}
 \end{minipage}
 \begin{minipage}{.15\textwidth}
\centering
\includegraphics[width=0.9\textwidth]{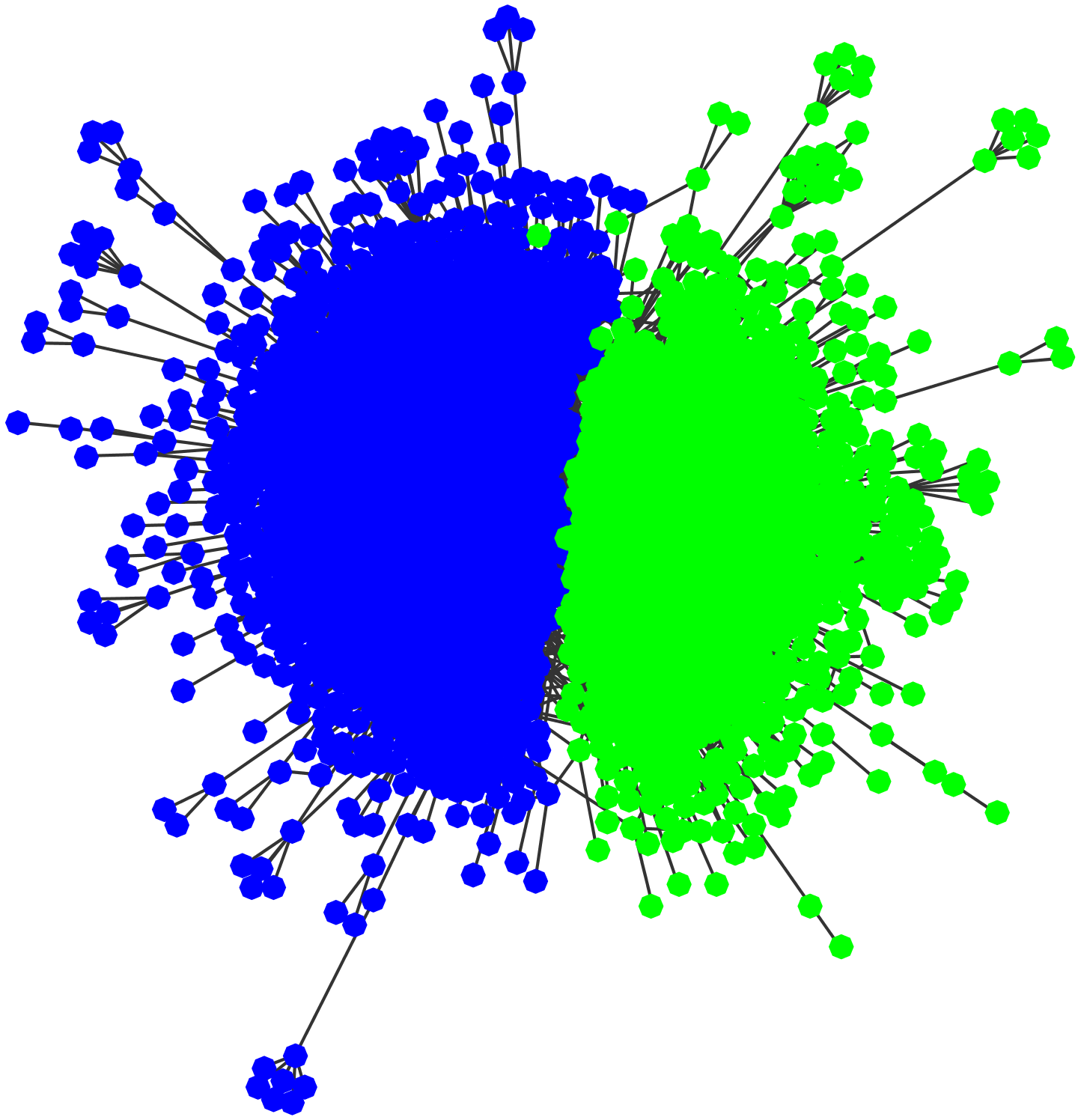}
 \end{minipage}
 \caption{Bi-partition obtained by the linear (left) and nonlinear (right) spectral methods. Networks shown, from left to right: Electronic2, Drugs, and YeastS.  }\label{fig:nobalance}
\end{figure}

\subsection{Recursive splitting for multiple communities}
The final test we propose concerns the detection of multiple communities. Although our method is meant to address the leading module problem, as in the standard spectral method, we can address multiple communities by performing Successive Graph Bipartitions (SGB). This procedure requires to update the modularity operator at each recursion, as discussed in Section \ref{sec:nonlinear_spectral_method}.   A comparison between the modularity value of the community structure obtained with different strategies on a  number of datasets is shown in Tables \ref{tab:best_kclusters} and \ref{tab:experimentsOnRealNetworks} where we compare our method with the linear spectral bi-partition and the locally greedy algorithm known as Louvain method \cite{blondel2008fast}.
For the latter method we use the GenLouvain Matlab toolbox \cite{jutla2011generalized}.

\begin{table}[t!]
 \setlength\extrarowheight{1pt}
\centering
{\footnotesize 
\begin{tabular}{!{\VRule[0.8pt]}c!{\VRule[0.8pt]}cc!{\VRule[0.8pt]}cc!{\VRule[0.8pt]}cc!{\VRule[0.8pt]}c!{\VRule[0.8pt]}}
\specialrule{0.8pt}{0pt}{0pt}        
\multirow{2}{*}{ID}&\multirow{2}{*}{Network} & \multirow{2}{*}{$n$} & \multicolumn{2}{c!{\VRule[.8pt]}}{\footnotesize{Linear Method} } &  \multicolumn{2}{c!{\VRule[.8pt]}}{\footnotesize{Nonlinear  Method}} & \multicolumn{1}{c!{\VRule[.8pt]}}{Gain (\%)}\\
& &  & $|A_1|$ & $q(A_1)$ &  $|A_2|$ & $q(A_2)$  & $q(A_2)/q(A_1)$ \\
\specialrule{.8pt}{0pt}{0pt}                                                                                                                                                                                                      
1&Macaque cortex 	        & 32   & 16   & 0.22  & 16   & 0.23 	& +4 \\ 
2&Social 3A 		& 32   & 14   & 0.28  & 17   & 0.30 	& +7\\
3&Skipwith 		& 35   & 14   & 0.04  & 17   & 0.06 	& +50 \\
4&Stony 			& 112   & 34   & 0.09 & 40   & 0.12	& +8\\
5&Malaria 		& 229   & 65   & 0.25 & 113  & 0.35 	& +40 \\
6&Electronic 2 		& 252   & 88   & 0.36 & 115  & 0.48 	& +33 \\
7&Electronic 3 		& 512   & 95   & 0.23 & 253   & 0.49 	& +113 \\
8&Drugs 			& 616   & 220  & 0.43 & 285  & 0.49 	& +14 \\
9&Transc Main 		& 662   & 91   & 0.20 & 318  & 0.44 	& +120\\
10&Software VTK		& 771   & 317  & 0.32 & 364  &	0.39    & +22 \\
11&YeastS Main 		& 2224  & 471  & 0.25 & 883  & 0.37 	& +48 \\
12&ODLIS 			& 2898  & 1285 & 0.30 & 1379 & 0.34 	& +13\\
13&Erd\H{o}s 2 		& 6927  & 1804 & 0.28 & 2333 & 0.42     & +50\\
14&AS 735 			& 7716  & 2390 & 0.18 & 3040 & 0.41     & +128\\
15&CA CondMat 		& 23133 & 2243 & 0.21 & 8777 & 0.42 	& +100\\
\specialrule{.8pt}{0pt}{0pt}           
\end{tabular}
}
\caption{Experiments on real world networks looking for two communities. 
For the nonlinear spectral method \eqref{eq:nonlinear_spectral_method}
we consider 61 starting points: 30 random, 30 diffusive (see Sec.\ \ref{sec:starting_points}) and the leading eigenvector of $M$. The column $n$ shows the size of the graph; $A_1$ and $A_2$ are the smallest communities identified by the linear and the nonlinear method, respectively; columns $|A_i|$ and $q(A_i)$ shows size and modularity value of $A_i$, $i=1,2$, respectively; the last column shows the ratio between the modularity of 
both
partitions. 
}\label{tab:experimentsOnRealNetworks_2communities}
\end{table}
\begin{figure}[t!]
\includegraphics[width=.49\textwidth]{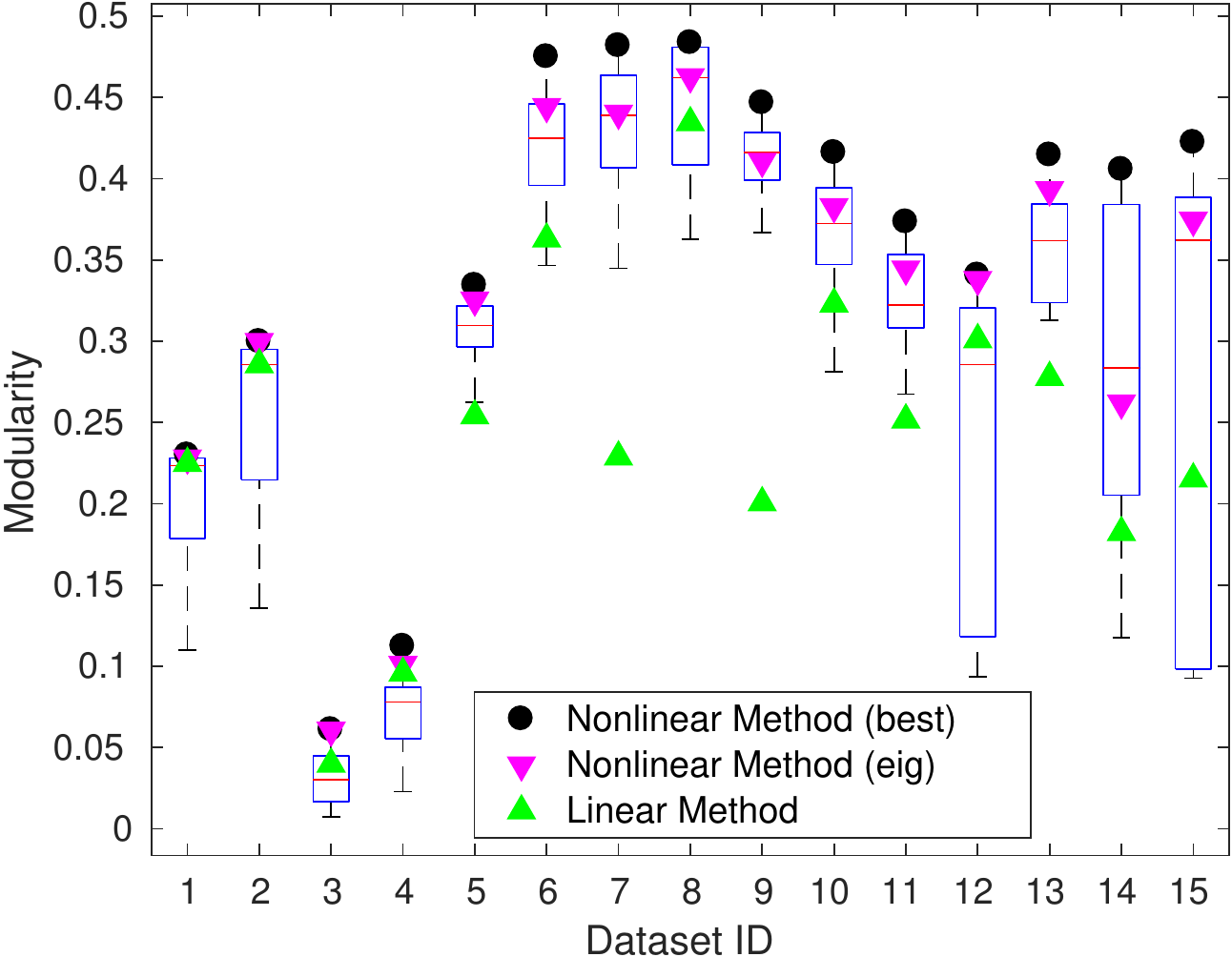}\hfill
\includegraphics[width=.47\textwidth]{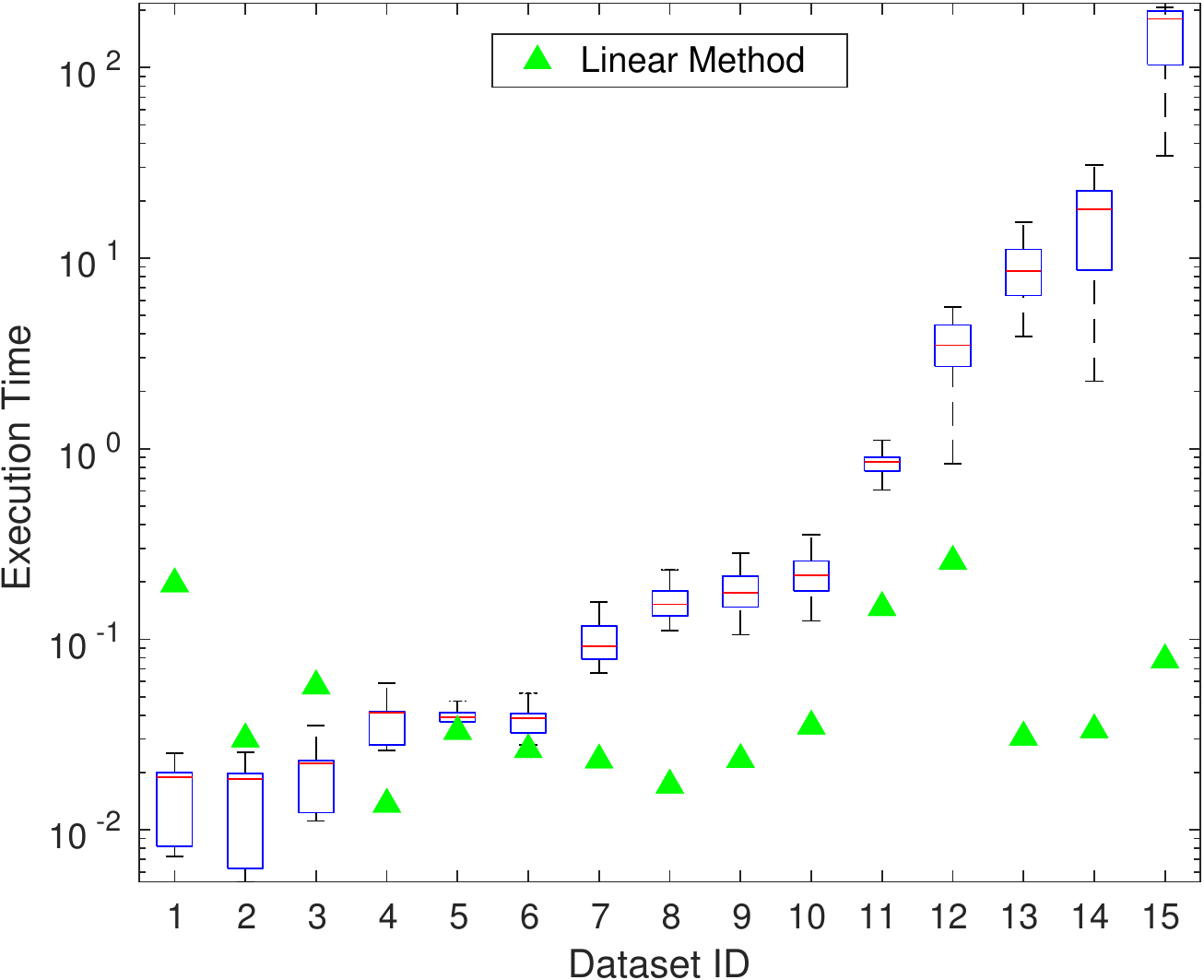}
\caption{Boxplots of modularity values (left plot) and execution times (right plot) for the nonlinear method \eqref{eq:nonlinear_spectral_method} implemented via the generalized RatioDCA Algorithm \ref{alg:genDCA} with 61 starting points: 30 random, 30 diffusive (see Sec.\ \ref{sec:starting_points}) and the leading eigenvector of $M$. 
The black dots show the best value obtained by the method (also shown in Table \ref{tab:experimentsOnRealNetworks_2communities}). The magenta triangles show values and timing for the method started with the linear modularity eigenvector. Green triangles correspond to the standard linear method. Experiments have been made on the 15 datasets of Table \ref{tab:experimentsOnRealNetworks_2communities} (where the dataset sizes are shown), with \emph{\texttt{MATLAB R2016b}} and forcing one single computing thread.}\label{fig:block-plots}
\end{figure}

These two strategies are arguably the most popular methods for revealing communities in networks. 
The 
SGB approach is a relatively naive extension of the spectral method for the leading module.
 For the modularity-based community detection problem the SGB strategy has been probably first proposed in \cite{newman2006finding}. Although this technique works well in certain cases, it typically does not outperform the Louvain method and it is known that there are situations where this approach may fail. This is shown for example in \cite{richardson2009spectral} where the method  based on the linear modularity eigenvectors is shown to fail on the 8-node bucket brigade 
 and on some real-world datasets. This negative results have led to different extensions of spectral algorithms to the problem of multiple communities, see for instance \cite{riolo2014first,zhang2015multiway}. A more careful extension of our nonlinear spectral approach to the multi-community case goes  beyond the scope of this paper and is left to future work.

In the following experiments we compute an initial community assignment via SGB and then refine it
 by moving the nodes among communities following a relatively standard flipping strategy based on the Kernighan--Lin algorithm \cite{kernighan1970efficient}, see also \cite{newman2006modularity}. This refinement procedure identifies the node that, when assigned to another community,   generates the biggest increase on modularity (or the smallest decrease if no increase
is possible). This procedure is repeated until all the $n$ nodes have been moved, with the constraint that each node assignment can be changed only once.  
By identifying the intermediate community of this process that leads to the biggest increase on modularity, the current communities are updated. Starting from these new communities, the process is repeated until no further modularity improvement is observed.

 This technique can be efficiently implemented in parallel, to speed up its time execution. We apply node flipping to both the linear and the nonlinear SGB.   

Table \ref{tab:best_kclusters} shows the percentage of cases where the nonlinear method achieves best and strictly best modularity on the 68 networks listed in Appendix \ref{app:network_names}. Table \ref{tab:experimentsOnRealNetworks} compares modularity values   and number of assigned communities on  some example networks and for  the three strategies: linear spectral method for $\lambda_1(M)$, nonlinear spectral method \eqref{eq:nonlinear_spectral_method}  for $\lambda_1(r_\M^*)$ with the generalized RatioDCA  Algorithm \ref{alg:genDCA}, and GenLouvain toolbox. 

Both our method and the Louvain method are run several times. As before, our method is run with 61 starting points: 30 random, 30 diffusive (see next subsection) and the leading eigenvector of $M$. The Louvain method is run with 100 random initial node orderings. Results in Tables \ref{tab:best_kclusters} and \ref{tab:experimentsOnRealNetworks} are based, for each method, on the best modularity assignment achieved among all the runs. As expected, the performance of our nonlinear spectral method are now less remarkable: The nonlinear method systematically outperforms the linear one, as for the leading module case discussed in the previous section, whereas it shows a performance competitive to the Louvain technique in terms of modularity value, even though the community assignment of the two methods often considerably differ. In fact, the median ratio between the modularity assignments of our method and the Louvain one over the 68 datasets of Appendix \ref{app:network_names} is 0.9998 with a variance of 0.0005.

\subsection{On the choice of the starting points}\label{sec:starting_points}
The optimization method in Algorithm \ref{alg:genDCA} often converges to local maxima, thus performances of that strategy  rely on the choice of the starting points $x_0$. According to our Theorem \ref{thm:monotony}, the sequence $r_\M^*(x_k)$ increases monotonically. This suggests that using the leading  eigenvector of the modularity matrix   as a starting point ensures a higher modularity value with respect to the linear spectral method. This observation applies to the case of two communities, whereas does not necessarily work anymore when looking for multiple groups. A standard approach in that case is to pick some additional random starting point. However a better choice can be done by choosing a set of diffuse starting points as suggested in \cite{bresson2013multiclass}: At each recursion of SGB let $\bar x$ be the eigenvector of the matrix $M_A$, corresponding to one of the current subgraphs $G(A)$. Let $v_i,v_j$ be two nodes sampled uniformly at random from $A$ such that $v_i\in C$  and $v_j \in \bar{C}$, where $\{C,\bar{C}\}$ is a partition of $A$ obtained through optimal thresholding the eigenvector $\bar x$.
Then, for the zero vector $z$ we set $z_i=1$ and $z_j=-1$. We then propagate this initial stage with $\tilde z = (I+L)^{-1}z$ where $L$ denotes the unnormalized graph Laplacian of $G(A)$, and take $\tilde z$
 as starting point for our method.

\begin{table}[t!]
 \setlength\extrarowheight{1pt}
\centering
{\footnotesize 
\begin{tabular}{!{\VRule[0.8pt]}c!{\VRule[0.8pt]}cccc!{\VRule[0.8pt]}}
\specialrule{0.8pt}{0pt}{0pt}        
Starting point strategy & Eig & 30 Rand & 30 Diff & All\\
\specialrule{.8pt}{0pt}{0pt}                                                                                                                                                                                                      
\textit{Best}			& 18\% & 29.4\% & 30.9\% & 44.1\%  \\ 
\textit{Strictly Best}   & 11.8\% & 22.1\% & 25\% & 33.8\%  \\ 
\specialrule{.8pt}{0pt}{0pt}           
\end{tabular}
}
\caption{
Experiments on real world networks looking for two or more communities. 
Fraction of cases where the nonlinear spectral method \eqref{eq:nonlinear_spectral_method} achieves best and strictly best modularity value $q$ with respect to the linear method and the best modularity value obtained by the Louvain method after 100 runs with random initial node ordering. Columns from left to right show results for different sets of starting points: linear modularity eigenvector as starting point, 30 uniformly random starting points, 30 diffused starting points (see Sec.\ \ref{sec:starting_points}), all of them. Experiments are done on 68 networks, listed in Appendix \ref{app:network_names}.}\label{tab:best_kclusters}
\end{table}
\begin{table}[t!]
\setlength\extrarowheight{1pt}
\centering
{\footnotesize 
\begin{tabular}{!{\VRule[.8pt]}cc!{\VRule[.8pt]}cc!{\VRule[.8pt]}cc!{\VRule[.8pt]}cc!{\VRule[.8pt]}cc!{\VRule[.8pt]}}
\specialrule{.8pt}{0pt}{0pt}        
\multirow{2}{*}{Network} & \multirow{2}{*}{$n$} & \multicolumn{2}{c!{\VRule[.8pt]}}{\footnotesize{Linear SGB} } &  \multicolumn{2}{c!{\VRule[.8pt]}}{\footnotesize{Nonlinear  SGB}} &  \multicolumn{2}{c!{\VRule[.8pt]}}{\footnotesize{GenLouvain}} & \multicolumn{2}{c!{\VRule[.8pt]}}{Gain (\%)}\\
&  & $q^{lin}$ & $N_c$ & $q^{nlin}$ & $N_c$ & $q^{Lou}$ & $N_c$ & $\frac{q^{nlin}}{q^{lin}}$ & $\frac{q^{nlin}}{q^{Lou}}$ \\
\specialrule{.8pt}{0pt}{0pt}                                                                                                       
Macaque cortex 	& 30 & 0.22 & 2    & 0.23 & 2    & 0.19 & 3 		& +4  &  +20 	\\
Social 3A 		& 32 & 0.36 & 4    & 0.37 & 4    & 0.37 & 4 		& +2  &  0 		\\
Skipwith 		& 35 & 0.06 & 2    & 0.07 & 2    & 0.07 & 2 		& +7 &  0 		\\
Stony 			& 112 & 0.16 & 3    & 0.17 & 5    & 0.17 & 5 	& +6 &  0 			\\
Malaria 		& 229 & 0.51 & 8    & 0.53 & 9    & 0.53 & 8	& +4 &  0 			\\
Electronic 2 	& 252 & 0.72 & 9    & 0.75 & 11   & 0.75 & 11 	& +4 &  0 		\\
Electronic 3 	& 512 & 0.76 & 25   & 0.82 & 16   & 0.79 & 15 	& +8 &  +4 	\\
Drugs 			& 616 & 0.75 & 21    & 0.77 & 17    & 0.77 & 15 	& +3 &  0 \\
Transc Main 	& 662 & 0.74 & 17    & 0.76 & 22    & 0.76 & 16 	& +4 &  0 \\
Software VTK	& 771 & 0.61 & 38    & 0.67 &	21    & 0.67 & 17  	& +12 & 0   \\
YeastS Main 	& 2224 & 0.57 & 48    & 0.59 & 46    & 0.60 & 26 	& +4 &  -2 \\
ODLIS 			& 2898 & 0.43 & 9    & 0.48 & 17    & 0.48 & 17 	& +12 &  0 \\
Erd\H{o}s 2 		& 6927 & 0.70 & 63  & 0.75 & 73   & 0.75 & 1433		& +7 &  0 \\
AS 735 			& 7716 & 0.53 & 28    & 0.63 & 77  & 0.63 & 1274 	& +19 &  0 \\
CA CondMat 		& 23133 & 0.66 & 43  & 0.72 & 832  & 0.74 & 619 	& +9 &  -3 \\
\specialrule{.8pt}{0pt}{0pt}           
\end{tabular}
}
\caption{Experiments on real world networks looking for two or more communities. 
For the Louvain method we consider 100 initial random node orderings. For the nonlinear spectral method \eqref{eq:nonlinear_spectral_method}
we consider 61 random starting points: 30 random, 30 diffusive (see Sec.\ \ref{sec:starting_points}) and the leading eigenvector of $M$. Column $n$ is the size of the graph, whereas, for each method, $N_c$ denote the number of communities identified. The three quantities $q^{lin}$, $q^{nlin}$ and $q^{Lou}$ denote the modularity of the partition obtained with the linear, nonlinear and Louvain methods, respectively. The last two columns show the ratio between the modularity of the  partitions obtained with the nonlinear method \eqref{eq:nonlinear_spectral_method} with respect the linear and the Louvain algorithms, respectively.}\label{tab:experimentsOnRealNetworks}
\end{table}

\section{Conclusions}\label{sec:comparison}
The linear spectral method \cite{newman2004finding} and the locally greedy technique known as Louvain method \cite{blondel2008fast} are among the most popular techniques for communities detection. Our nonlinear modularity approach is an extension of the linear spectral method and has a number of properties that identify it as valid alternative in several  circumstances: (a) The method is supported by a detailed mathematical understanding and two exact relaxation identities (Theorems \ref{thm:exact-1-modularity} and \ref{thm:exact-1-modularity-*}) that can be seen as nonlinear extensions of modularity Cheeger-type inequalities; (b) it exploits for the first time the use of nonlinear eigenvalue theory in the context of community detection; (c) the use of the nonlinear modularity operator $\M$, here presented, allows us to address individually both  balanced (equally sized) and unbalanced (small size)  leading module problems. 

The analysis of Section \ref{sec:experiments} shows experimental evidence of the quality of our approach and the advantage over the linear method. Several interesting research directions remain open, in particular for what concerns the computational  efficiency of the nonlinear Rayleigh quotients optimization and the overall nonlinear spectral method, and for what concerns the possibility of tailoring the method to the problem of multiple communities -- which is currently addressed by the naive strategy of successive bi-partitions -- in a more effective way.

\subsubsection*{Acknowledgements} We are grateful to Francesca Arrigo for sharing with us several of the networks we used in the numerical experiments. 

\begin{appendices}
\section{Networks used in the experiments}\label{app:network_names} Here we list the names of the networks we used in Section \ref{sec:experiments}. For the sake of brevity, we do not give individual references nor individual descriptions of the data sets, whereas we refer to \cite{estrada2015predicting, estrada2012structure, UFS, snapnets} for details. 

Network names: {\sffamily Benguela,                                                                                                                                                   
Coachella,                                                                                                                                                 
Macaque Visual Cortex Sporn,                                                                                                                                
Macaque Visual Cortex,                                                                                                                                       
PIN Afulgidus,                                                                                                                                            
Social3A,                                                                                                                                                  
Chesapeake,                                                                                                                                                
Hi-tech main,                                                                                                                                            
Zackar,                                                                                                                                                    
Skipwith,                                                                                                                                                  
Sawmill,                                                                                                                                                   
StMartin,                                                                                                                                                  
Trans urchin,                                                                                                                                             
StMarks,                                                                                                                                                   
KSHV,                                                                                                                                                    
ReefSmall,                                                                                                                                                 
Dolphins,                                                                                                                                                  
Newman dolphins,                                                                                                                                          
PRISON SymA,                                                                                                                                              
Bridge Brook,                                                                                                                                               
grassland ,                                                                                                                                                 
WorldTrade Dichot SymA,                                                                                                                                  
Shelf,                                                                                                                                                    
UKfaculty,                                                                                                                                                 
Pin Bsubtilis main,                                                                                                                                      
Ythan2,                                                                                                                                                    
Canton,                                                                                                                                                    
Stony,                                                                                                                                                    
Electronic1,                                                                                                                                               
Ythan1,                                                                                                                                                   
Software Digital main-sA,                                                                                                                               
ScotchBroom,                                                                                                                                               
ElVerde,                                                                                                                                                   
LittleRock,                                                                                                                                                
Jazz,                                                                                                                                                    
Malaria PIN main,                                                                                                                                       
PINEcoli validated main,                                                                                                                                  
SmallW main,                                                                                                                                              
Electronic2,                                                                                                                                               
Neurons,                                                                                                                                                   
ColoSpg,                                                                                                                                                   
Trans Ecoli main,                                                                                                                                        
USAir97,                                                                                                                                                   
Electronic3,                                                                                                                                               
Drugs,                                                                                                                                                    
Transc yeast main,                                                                                                                                       
Hpyroli main,                                                                                                                                             
Software VTK main-sA,                                                                                                                                   
Software XMMS main-sA,                                                                                                                                  
Roget,                                                                                                                                                   
Software Abi main-sA,                                                                                                                                   
PIN Ecoli All main,                                                                                                                                     
Software Mysql main-sA,                                                                                                                                 
Corporate People main,                                                                                                                                     
YeastS main,                                                                                                                                              
PIN Human main,                                                                                                                                          
ODLIS,                                                                                                                                                   
Internet 1997,                                                                                                                                            
Drosophila PIN Confidence main,                                                                                                                         
Internet 1998,                                                                                                                                            
Geom,                                                                                                                                                  
USpowerGrid,                                                                                                                                               
Power grid,                                                                                                                                               
Erdos02,                                                                                                                                                   
As-735,                                                                                                                                                   
Oregon1,                                                                                                                                                 
Ca-AstroPh,                                                                                                                                               
Ca-CondMat.}

\end{appendices}

\end{document}